\newtheorem{theorem}{Theorem}
\newtheorem{proposition}[theorem]{Proposition}
\newtheorem{remark}[theorem]{Remark}
\newtheorem{lemma}[theorem]{Lemma}
\def\tr{{\rm Tr \,}}
\def\N{{\mathbb N}}
\def\R{{\mathbb R}}
\def\1{{\mathds{1}}}
\def\cA{{\mathcal A}}
\def\cB{{\mathcal B}}
\def\cC{{\mathcal C}}
\def\cE{{\mathcal E}}
\def\cH{{\mathcal H}}
\def\cK{{\mathcal K}}
\def\cL{{\mathcal L}}
\def\cO{{\mathcal O}}
\def\cP{{\mathcal P}}
\def\cS{{\mathcal S}}
\def\cZ{{\mathcal Z}}
\def\gC{{\mathscr C}}
\def\gS{{\mathfrak S}}
\newcommand \dps{\displaystyle }
\title{A mathematical perspective on \\ density functional perturbation theory}
\author{Eric Canc\`es and Nahia Mourad \\ \footnotesize{
Universit\'e Paris-Est, CERMICS, 
Ecole des Ponts and INRIA,} \\ \footnotesize{6 \& 8 avenue Blaise Pascal, 77455 Marne-la-Vall\'ee Cedex 2, France.}}
\date{May 29, 2014}
\begin{document}
 
\selectlanguage{english}

\maketitle

\begin{abstract}  This article is concerned with the mathematical analysis of the perturbation method for extended Kohn-Sham models, in which fractional occupation numbers are allowed. All our results are established in the framework of the reduced Hartree-Fock (rHF) model, but our approach can be used to study other kinds of extended Kohn-Sham models, under some assumptions on the mathematical structure of the exchange-correlation functional. The classical results of Density Functional Perturbation Theory in the non-degenerate case (that is when the Fermi level is not a degenerate eigenvalue of the mean-field Hamiltonian) are formalized, and a proof of Wigner's $(2n+1)$ rule is provided. We then focus on the situation when the Fermi level is a degenerate eigenvalue of the rHF Hamiltonian, which had not been considered so far.
\end{abstract}
 
\section{Introduction}

Eigenvalue perturbation theory has a long history. Introduced by Rayleigh~\cite{Ray77} in the 1870's, it was used for the first time in quantum mechanics in an article by Schr\"odinger~\cite{Sch26} published in 1926. The mathematical study of the perturbation theory of self-adjoint operators was initiated by Rellich~\cite{Rel37} in 1937, and has been since then the matter of a large number of contributions in the mathematical literature~(see \cite{Kat66,Rel69,Sim91} and references therein).

Perturbation theory plays a key role in quantum chemistry, where it is used in particular to compute the response properties of molecular systems to external electromagnetic fields (polarizability, hyperpolarizability, magnetic susceptibility, NMR shielding tensor, optical rotation, ...). Unless the number $N$ of electrons in the molecular system under study is very small, it is not possible to solve numerically the $3N$-dimensional electronic Schr\"odinger equation. In the commonly used Hartree-Fock and Kohn-Sham models, the {\em linear $3N$-dimensional} electronic Schr\"odinger equation is approximated by a coupled system of $N$ {\em nonlinear $3$-dimensional} Schr\"odinger equations. The adaptation of the standard linear perturbation theory to the nonlinear setting of the Hartree-Fock model is called Coupled-Perturbed Hartree-Fock theory (CPHF) in the chemistry literature~\cite{McW92} (see also~\cite{CanLeB98} for a mathematical analysis). Its adaptation to the Kohn-Sham model is usually referred to as the Density Functional Perturbation Theory (DFPT)~\cite{Bar87,Gon95}. The term Coupled-Perturbed Kohn-Sham theory is also sometimes used.

The purpose of this article is to study, within the reduced Hartree-Fock (rHF) framework, the perturbations of the ground state energy, 
the ground state density matrix, and the ground state density of a molecular system, when a ``small'' external potential is turned on. 

In the case when the Fermi level $\epsilon_{\rm F}^0$ is not a degenerate eigenvalue of the mean-field Hamiltonian (see Section~\ref{sec:rHF} for a precise definition of these objects), the formalism of DFPT is well-known (see e.g.~\cite{DreGro90}). It has been used a huge number of publications in chemistry and physics, as well as in a few mathematical publications, e.g. \cite{CanLew10,ELu13}. On the other hand, the degenerate case has not been considered yet, to the best of our knowledge. An interesting feature of DFPT in the degenerate case is that, in contrast with the usual situation in linear perturbation theory, the perturbation does not, in general, split the degenerate eigenvalue; it shifts the Fermi level and modifies the natural occupation numbers at the Fermi level.

The article is organized as follows. In Section~\ref{sec:rHF}, we recall the basic properties of rHF ground states and establish some new results on the uniqueness of the ground state density matrix for a few special cases. The classical results of DFPT in the non-degenerate case are recalled in Section~\ref{sec:DFPT}, and a simple proof of Wigner's $(2n+1)$ rule is provided. This very important rule for applications allows one to compute the perturbation of the energy at the $(2n+1)^{\rm st}$ order from the perturbation of the density matrix at the $n^{\rm th}$ order only. In particular, the atomic forces (first-order perturbations of the energy) can be computed from the unperturbed density matrix (Wigner's rule for $n=0$), while hyperpolarizabilities of molecules (second and third-order perturbations of the energy) can be computed from the first-order perturbation of the density matrix (Wigner's rule for $n=1$). In Section~\ref{sec:bounded}, we investigate the situation when the Fermi level is a degenerate eigenvalue of the rHF Hamiltonian. We establish all our results in the rHF framework in the whole space $\R^3$, for a local potential $W$ with finite Coulomb energy. Extensions to other frameworks (Hartree-Fock and Kohn-Sham models, supercell with periodic boundary conditions, nonlocal potentials, Stark external potentials, ...) are discussed in Section~\ref{sec:extensions}.
The proofs of the technical results are postponed until Section~\ref{sec:proofs}.

\section{Some properties of the rHF model} 
\label{sec:rHF}

Throughout this article, we consider a reference (unperturbed) system of $N$ electrons subjected to an external potential $V$. For a molecular system containing $M$ nuclei, $V$ is given by
$$
\forall x \in \R^3, \quad V(x) = -\sum_{k=1}^M z_k v(x-R_k),
$$
where $z_k \in \N^\ast$ is the charge (in atomic units) and $R_k \in \R^3$ the position of the $k^{\rm th}$ nucleus. For point nuclei $v=|\cdot|^{-1}$, while for smeared nuclei $v=\mu \star |\cdot|^{-1}$, where $\mu \in C^\infty_{\rm c}(\R^3)$ is a non-negative radial function such that $\int_{\R^3}\mu=1$.

In the framework of the (extended) Kohn-Sham model~\cite{DreGro90}, the ground state energy of this reference system is obtained by minimizing an energy functional of the form
\begin{equation}\label{eq:energy_KS}
E^{\rm KS}(\gamma) :=  \tr\left( -\frac 12 \Delta \gamma\right) + \int_{\R^3} \rho_\gamma V + \frac 12 D(\rho_\gamma,\rho_\gamma)+E^{\rm xc}(\rho_\gamma)
\end{equation}
over the set 
$$
\cK_N:=\left\{ \gamma \in \cS(L^2(\R^3)) \; | \; 0 \le \gamma \le 1, \; \tr(\gamma)=N, \; \tr(-\Delta\gamma) < \infty \right\}
$$
of the admissible one-body density matrices. To simplify the notation, we omit the spin variable. In the above definition, $\cS(L^2(\R^3))$ denotes the space of the bounded self-adjoint operators on $L^2(\R^3)$, $0 \le \gamma \le 1$ means that the spectrum of $\gamma$ is included in the range $[0,1]$, and $\tr(-\Delta\gamma)$ is the usual notation for $\tr(|\nabla|\gamma|\nabla|)$, where $|\nabla|:=(-\Delta)^{1/2}$ is the square root of the positive self-adjoint operator $-\Delta$ on $L^2(\R^3)$. The function $\rho_\gamma : \R^3\rightarrow\R_+$ is the electronic density associated with the density matrix $\gamma$. Loosely speaking, $\rho_\gamma(x)=\gamma(x,x)$, where $\gamma(x,y)$ is the kernel of the operator $\gamma$. It holds
$$
\rho_\gamma \ge 0, \quad \int_{\R^3} \rho_\gamma = N, \quad \int_{\R^3} |\nabla\sqrt{\rho_\gamma}|^2 \le \tr(-\Delta \gamma)
$$
(Hoffmann-Ostenhof inequality~\cite{Hof77}) so that, in particular, $\rho_\gamma \in L^1(\R^3) \cap L^3(\R^3)$.
The first term in the right-hand side of~(\ref{eq:energy_KS}) is the Kohn-Sham kinetic energy functional, the second one models the interaction of the electrons with the external potential $V$, $D(\cdot,\cdot)$ is the Coulomb energy functional defined on $L^{6/5}(\R^3) \times L^{6/5}(\R^3)$ by
$$
D(f,g) := \int_{\R^3}\int_{\R^3} \frac{f(x) \, g(y)}{|x-y|} \, dx \, dy,
$$
and $E^{\rm xc}$ is the exchange-correlation functional. In the reduced Hartree-Fock (rHF) model (also sometimes called the Hartree model), the latter functional is taken identically equal to zero. In the Local Density Approximation (LDA), it is chosen equal to
\begin{equation}\label{eq:defXCLDA}
E^{\rm xc}_{\rm LDA}(\rho) := \int_{\R^3} e_{\rm xc}(\rho(x)) \, dx,
\end{equation}
where the function $e_{\rm xc} \, : \, \R_+ \mapsto \R_-$ is such that for all $\overline{\rho} \in \R_+$, the non-positive number $e_{\rm xc}(\overline{\rho})$ is (an approximation of) the exchange-correlation energy density of the homogeneous electron gas with constant density $\overline{\rho}$. It is known that for neutral or positively charged molecular systems, that is when $Z=\sum_{k=1}^M z_k\ge N$, the minimization problem
\begin{equation}\label{eq:minKSg}
E_0 := \inf \left\{ E^{\rm KS}(\gamma), \; \gamma \in {\cal K}_N \right\},
\end{equation}
has a ground state $\gamma_0$, for the rHF model~\cite{Sol91} ($E^{\rm xc}=0$), as well as for the Kohn-Sham LDA model~\cite{AnaCan09} ($E^{\rm xc}=E^{\rm xc}_{\rm LDA}$).

\medskip

This contribution aims at studying, in the rHF setting, the perturbations of the ground state energy $E_0$, of the ground state density matrix $\gamma_0$, and of the ground state density $\rho_0=\rho_{\gamma_0}$ induced by an external potential $W$. In order to deal with both the unperturbed and the perturbed problem using the same formalism, we introduce the functional
$$
E^{\rm rHF}(\gamma,W) := \tr\left( -\frac 12 \Delta \gamma\right) + \int_{\R^3} \rho_\gamma V + \frac 12 D(\rho_\gamma,\rho_\gamma) + \int_{\R^3} \rho_\gamma W,
$$
and the minimization problem
\begin{equation} 
\cE^{\rm rHF}(W) := \inf \left\{E^{\rm rHF}(\gamma,W), \; \gamma \in \cK_N \right\}. \label{eq:min_rHF}
\end{equation}
We restrict ourselves to a potential $W$ belonging to the space
$$
\cC':=\left\{ v \in L^6(\R^3) \; | \; \nabla v \in (L^2(\R^3))^3 \right\},
$$
which can be identified with the dual of the Coulomb space 
$$
\cC:=\left\{ \rho \in {\cal S}'(\R^3) \, | \, \widehat \rho \in L^1_{\rm loc}(\R^3), \, |\cdot|^{-1} \widehat \rho \in L^2(\R^3) \right\}
$$
of the charge distributions with finite Coulomb energy. Here, ${\cal S}'(\R^3)$ is the space of tempered distributions on $\R^3$ and $\widehat \rho$ is the Fourier transform of $\rho$ (we use the normalization condition for which the Fourier transform is an isometry of $L^2(\R^3)$). When $W \in \cC'$, the last term of the energy functional should be interpreted as
$$
\int_{\R^3} \rho_\gamma W = \int_{\R^3} \overline{\widehat{\rho_\gamma}(k)} \; \widehat W(k) \, dk.
$$ 
The right-hand side of the above equation is well-defined as the functions $k \mapsto |k|^{-1} \widehat\rho_\gamma(k)$ and $k \mapsto |k|\widehat W(k)$ are both in $L^2(\R^3)$, since $\rho_\gamma \in L^1(\R^3) \cap L^3(\R^3) \subset L^{6/5}(\R^3) \subset \cC$.

\medskip

The reference, unperturbed, ground state is obtained by solving (\ref{eq:min_rHF}) with $W=0$. 

\medskip

\begin{theorem}[unperturbed ground state for the rHF model~\cite{Sol91}]\label{th:GS_rHF} If 
\begin{equation}\label{eq:NPC}
Z=\sum_{k=1}^M z_k \ge N \qquad \mbox{(neutral or positively charged molecular system)},
\end{equation}
then (\ref{eq:min_rHF}) has a ground state for $W=0$, and all the ground states share the same density $\rho_0$. The mean-field Hamiltonian
$$
H_{0} := -\frac 12 \Delta + V + \rho_0 \star |\cdot|^{-1},
$$
is a self-adjoint operator on $L^2(\R^3)$ and any ground state $\gamma_0$ is of the form
\begin{equation} \label{eq:ground_state}
\gamma_0 = \1_{(-\infty,\epsilon_{\rm F}^0)}(H_0) + \delta_0,
\end{equation}
with $\epsilon_{\rm F}^0 \le 0$, $0 \le \delta_0 \le 1$, $\mbox{\rm Ran}(\delta_0) \subset \mbox{\rm Ker}(H_0-\epsilon_{\rm F}^0)$.
\end{theorem}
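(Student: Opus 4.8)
The plan is to run the direct method of the calculus of variations to get existence, to exploit the strict convexity of the Coulomb term for uniqueness of the density, and then to read off the structure of $\gamma_0$ from the Euler--Lagrange relation, which the convexity of $E^{\rm rHF}(\cdot,0)$ turns into a \emph{linear} minimization problem for $\tr(H_0\,\cdot\,)$.

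\emph{Step 1 (existence).} I would first check that $E^{\rm rHF}(\cdot,0)$ is bounded below on $\cK_N$: the terms $\tr(-\frac12\Delta\gamma)$ and $\frac12 D(\rho_\gamma,\rho_\gamma)$ are non-negative, while $\left|\int_{\R^3}\rho_\gamma V\right|$ is bounded by a small multiple of $\tr(-\frac12\Delta\gamma)$ plus a constant, using the Hoffmann--Ostenhof inequality and the fact that the potential generated by $v$ is $(-\Delta)$-form-bounded with relative bound $0$. Given a minimizing sequence $(\gamma_n)$, the uniform bounds $\tr(\gamma_n)=N$, $0\le\gamma_n\le1$, $\sup_n\tr(-\Delta\gamma_n)<\infty$ provide, up to extraction, a limit $\gamma_\ast$ with $0\le\gamma_\ast\le1$, $\tr(-\Delta\gamma_\ast)\le\liminf_n\tr(-\Delta\gamma_n)$, and $\rho_{\gamma_n}\to\rho_{\gamma_\ast}$ weakly in $L^3(\R^3)$ and in $\cC$ and strongly in $L^p_{\rm loc}(\R^3)$ for $p<3$; by weak lower semicontinuity of the kinetic energy and of $D(\cdot,\cdot)$ and weak continuity of $\gamma\mapsto\int_{\R^3}\rho_\gamma V$ (which uses that $V$ vanishes at infinity), $\gamma_\ast$ is a minimizer \emph{as soon as} $\tr(\gamma_\ast)=N$. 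The only obstruction is $\tr(\gamma_\ast)=:N_\ast<N$, i.e.\ escape of mass to infinity. Ruling this out is the main point and is where the hypothesis $Z\ge N$ enters: by a binding (strict subadditivity) inequality one compares $\cE^{\rm rHF}(0)$ for $N$ electrons with the sum of the energy of $N_\ast$ electrons and that of $N-N_\ast$ electrons placed far from the nuclei, the latter involving only the repulsive Coulomb self-interaction; since a neutral or positively charged rHF molecule is not ``saturated'', adding electrons strictly lowers the energy, which forbids splitting. This is the content of~\cite{Sol91}.

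\emph{Step 2 (uniqueness of the density).} The map $\gamma\mapsto E^{\rm rHF}(\gamma,0)$ is convex on $\cK_N$, its only nonlinear part $\frac12 D(\rho_\gamma,\rho_\gamma)$ being strictly convex in $\rho_\gamma$ because $D$ is a genuine inner product on $\cC$. If $\gamma_0,\gamma_0'$ are two ground states, then $\frac12(\gamma_0+\gamma_0')\in\cK_N$ and the parallelogram identity for $D$ gives
\[
E^{\rm rHF}\!\left(\tfrac12(\gamma_0+\gamma_0'),0\right)=\tfrac12 E^{\rm rHF}(\gamma_0,0)+\tfrac12 E^{\rm rHF}(\gamma_0',0)-\tfrac18\,D(\rho_{\gamma_0}-\rho_{\gamma_0'},\rho_{\gamma_0}-\rho_{\gamma_0'}),
\]
so that $D(\rho_{\gamma_0}-\rho_{\gamma_0'},\rho_{\gamma_0}-\rho_{\gamma_0'})=0$, hence $\rho_{\gamma_0}=\rho_{\gamma_0'}=:\rho_0$.

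\emph{Step 3 (self-adjointness of $H_0$ and the Euler--Lagrange relation).} Since $\rho_0\in L^1(\R^3)\cap L^3(\R^3)$, splitting $|\cdot|^{-1}=|\cdot|^{-1}\1_{|x|\le1}+|\cdot|^{-1}\1_{|x|>1}$ and using Young's inequality gives $\rho_0\star|\cdot|^{-1}\in L^\infty(\R^3)+L^q(\R^3)$ for some $q>3$; together with the Coulomb potential $V$ this makes $V+\rho_0\star|\cdot|^{-1}$ a $(-\Delta)$-bounded perturbation with relative bound $0$, so by Kato--Rellich $H_0$ is self-adjoint on $L^2(\R^3)$ with domain $H^2(\R^3)$, bounded below, and (the potential being $(-\Delta)$-compact) $\sigma_{\rm ess}(H_0)=[0,+\infty)$. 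Because $E^{\rm rHF}(\cdot,0)$ is convex, a ground state $\gamma_0$ is characterized by the non-negativity of its directional derivatives along admissible directions; computing this derivative, the linear terms contribute $\tr\big((-\frac12\Delta+V)(\gamma-\gamma_0)\big)$ and the derivative of $\frac12 D(\rho_\gamma,\rho_\gamma)$ at $\gamma_0$ contributes $D(\rho_0,\rho_{\gamma-\gamma_0})=\tr\big((\rho_0\star|\cdot|^{-1})(\gamma-\gamma_0)\big)$, so the condition reads $\tr\big(H_0(\gamma-\gamma_0)\big)\ge0$ for all $\gamma\in\cK_N$, i.e.\ $\gamma_0$ minimizes the linear functional $\gamma\mapsto\tr(H_0\gamma)$ over $\cK_N$.

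\emph{Step 4 (structure of $\gamma_0$).} It remains to establish the operator ``bathtub principle'': if $\gamma$ minimizes $\tr(H_0\,\cdot\,)$ over $\{0\le\gamma\le1,\ \tr\gamma=N\}$ and $\1_{(-\infty,\epsilon)}(H_0)\le\gamma\le\1_{(-\infty,\epsilon]}(H_0)$ failed for every $\epsilon$, one could pick $a<b$ together with spectral subspaces of $H_0$ below $a$, where $1-\gamma$ is non-trivial, and above $b$, where $\gamma$ is non-trivial, and transfer an infinitesimal amount of occupation from the latter to the former through a partial isometry, strictly decreasing $\tr(H_0\gamma)$ --- a contradiction. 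Applied to our $\gamma_0$ this yields $\gamma_0=\1_{(-\infty,\epsilon_{\rm F}^0)}(H_0)+\delta_0$ with $0\le\delta_0\le1$ and $\mbox{\rm Ran}(\delta_0)\subset\mbox{\rm Ker}(H_0-\epsilon_{\rm F}^0)$. Finally $\epsilon_{\rm F}^0\le0$: if $\epsilon_{\rm F}^0>0$ then $\gamma_0\ge\1_{(-\infty,\epsilon_{\rm F}^0)}(H_0)\ge\1_{[0,\epsilon_{\rm F}^0)}(H_0)$, which has infinite rank since $[0,\epsilon_{\rm F}^0)\subset\sigma_{\rm ess}(H_0)$, contradicting $\tr(\gamma_0)=N<\infty$. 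The step I expect to be the real obstacle is Step 1 (compactness of minimizing sequences, i.e.\ absence of escape of mass at infinity when $Z\ge N$); Step 4 is the key \emph{structural} ingredient but, by comparison, is essentially routine.
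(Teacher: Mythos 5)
This theorem is stated in the paper as a citation of Solovej~\cite{Sol91}; the paper itself contains no proof of it, so there is no ``paper's own proof'' to compare against. Judged on its own terms, your outline is the standard argument and is essentially correct, and it is consistent with the fragments of this machinery that the paper does use elsewhere (in Step~1 of the proof of Theorem~\ref{th:CPrHF} the authors introduce the relaxed problem over $\cK_{\le N}$, invoke exactly your linearized Euler inequality $\tr(H_0(\gamma-\gamma_0))\ge 0$, and recover $\tr(\gamma_W)=N$ by counting negative eigenvalues of the mean-field Hamiltonian). Your Steps~2 and~3 are complete as written (the coefficient $\tfrac18$ in the parallelogram identity is right, given the $\tfrac12$ in front of $D$), and you correctly and honestly isolate the genuinely hard point, namely the absence of mass escaping to infinity when $Z\ge N$, which is precisely the content of~\cite{Sol91} and cannot be reproduced in a few lines.

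Two small points deserve tightening. In Step~4, the ``transfer an infinitesimal amount of occupation'' argument must be set up so that the perturbed operator stays in $\cK_N$, i.e.\ keeps $0\le\gamma\le1$; for arbitrary unit vectors $u$ in the low spectral subspace with $\langle u,\gamma u\rangle<1$ and $v$ in the high one with $\langle v,\gamma v\rangle>0$, the operator $\gamma+t(|u\rangle\langle u|-|v\rangle\langle v|)$ need not satisfy these bounds. The clean route is the operator bathtub principle in inequality form: for a suitable $\mu$ one writes
$$
\tr\bigl((H_0-\mu)\gamma\bigr)-\tr\bigl((H_0-\mu)\1_{(-\infty,\mu)}(H_0)\bigr)
=\tr\bigl((H_0-\mu)_+\,\gamma\bigr)+\tr\bigl((H_0-\mu)_-\,(1-\gamma)\bigr)\ge 0,
$$
and equality forces $\1_{(-\infty,\mu)}(H_0)\le\gamma\le\1_{(-\infty,\mu]}(H_0)$, from which your $\delta_0$ with $0\le\delta_0\le\1_{\{\mu\}}(H_0)$ and $\mbox{Ran}(\delta_0)\subset\mbox{Ker}(H_0-\mu)$ follows. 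Second, the existence of an admissible $\mu$ (equivalently, the fact that $H_0$ has at least $N$ non-positive eigenvalues, asserted in the paper just after the theorem) should be justified, e.g.\ by observing that it is forced by the existence of a minimizer of the linear problem, or directly from the attractive Coulomb tail when $Z>N$; this is the same ingredient that rules out mass loss in Step~1, so it is not an independent gap.
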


The real number $\epsilon_{\rm F}^0$, called the Fermi level, can be interpreted as the Lagrange multiplier of the constraint $\tr(\gamma)=N$. The Hamiltonian $H_0$ is a self-adjoint operator on $L^2(\R^3)$ with domain $H^2(\R^3)$ and form domain $H^1(\R^3)$. Its essential spectrum is the range $[0,+\infty)$ and it possesses at least $N$ non-positive eigenvalues, counting multiplicities. For each $j \in \N^\ast$, we set 
$$
\epsilon_j := \inf_{X_j \subset {\cal X}_j} \; \sup_{v \in X_j, \, \|v\|_{L^2}=1} \langle v |H_0|v \rangle,
$$
where ${\cal X}_j$ is the set of the vector subspaces of $H^1(\R^3)$ of dimension $j$, and $v \mapsto \langle v |H_0|v \rangle$ the quadratic form associated with $H_0$. Recall (see e.g.~\cite[Section~XIII.1]{ReeSim78}) that $(\epsilon_j)_{j \in \N^\ast}$ is a non-decreasing sequence of real numbers converging to zero, and that, if $\epsilon_j$ is negative, then $H_0$ possesses at least $j$ negative eigenvalues (counting multiplicities) and $\epsilon_j$ is the $j^{\rm th}$ eigenvalue of $H_0$. We denote by $\phi_1^0, \phi_2^0, \cdots$ an orthonormal family of eigenvectors associated with the non-positive eigenvalues $\epsilon_1 \le \epsilon_2 \le \cdots$ of $H_0$. Three situations can {\it a priori} be encountered:
\begin{itemize}
\item {\bf Case 1 (non-degenerate case):} 
\begin{equation}\label{eq:C1}
\mbox{$H_0$ has at least $N$ negative eigenvalues and $\epsilon_N < \epsilon_{N+1} \le 0$.}
\end{equation}
In this case, the Fermi level $\epsilon_{\rm F}^0$ can be chosen equal to any real number in the range $(\epsilon_N,\epsilon_{N+1})$ and the ground state $\gamma_0$ is unique:
$$
\gamma_0 = \1_{(-\infty,\epsilon_{\rm F}^0)}(H_{\rho_0})  = \sum_{i=1}^N |\phi_i^0\rangle \langle\phi_i^0|;
$$ 
\item {\bf Case 2 (degenerate case):} 
\begin{equation}\label{eq:C2}
\mbox{$H_0$ has at least $N+1$ negative eigenvalues and $\epsilon_{N+1}=\epsilon_N$.}
\end{equation}
In this case, $\epsilon_{\rm F}^0=\epsilon_N=\epsilon_{N+1}< 0$;
\item {\bf Case 3 (singular case):} $\epsilon_{\rm F}^0=\epsilon_N=0$.
\end{itemize}

In the non-degenerate case, problem (\ref{eq:min_rHF}), for $W \in \cC'$ small enough, falls into the scope of the usual perturbation theory of nonlinear mean-field models dealt with in Section~\ref{sec:DFPT}. The main purpose of this article is to extend the perturbation theory to the degenerate case. We will leave aside the singular case $\epsilon_N=0$. It should be emphasized that the terminology {\em degenerate vs non-degenerate} used throughout this article refers to the possible degeneracy of the Fermi level, that is of a specific eigenvalue of the unperturbed mean-field Hamiltonian~$H_{\rho_0}$, not to the possible degeneracy of the Hessian of the unperturbed energy functional at $\gamma_0$. The perturbation method heavily relies on the uniqueness of the ground state density matrix $\gamma_0$ and on the invertibility of the Hessian (or more precisely of a reduced Hessian taking the constraints into account). In the non-degenerate case (Case~1), the minimizer $\gamma_0$ is unique and the reduced Hessian is always invertible. We will see that the same holds true in the degenerate case (Case~2) under assumption~(\ref{eq:hyp_ND}) below. We denote by
$$
N_{\rm f} := \mbox{Rank}\left( \1_{(-\infty,\epsilon_{\rm F}^0)}(H_0) \right)
$$
the number of (fully occupied) eigenvalues lower than $\epsilon_{\rm F}^0$, and by
$$
N_{\rm p} := \mbox{Rank}\left( \1_{\left\{\epsilon_{\rm F}^0\right\}}(H_0) \right)
$$
the number of (partially occupied) bound states of $H_0$ with energy $\epsilon_{\rm F}^0$. We also denote by $\R^{N_{\rm p} \times N_{\rm p}}_{\rm S}$ the space of real symmetric matrices of size $N_{\rm p}\times N_{\rm p}$.

\medskip

\begin{lemma}\label{lem:uniqueness}  Assume that (\ref{eq:NPC}) and (\ref{eq:C2}) are satisfied. If for any $M \in \R^{N_{\rm p} \times N_{\rm p}}_{\rm S}$,
\begin{equation}\label{eq:hyp_ND}
\left(\forall x \in \R^3, \;  \sum_{i,j=1}^{N_{\rm p}} M_{ij}\phi_{N_{\rm f}+i}^0(x)\phi_{N_{\rm f}+j}^0(x)  = 0 \right) \; \Rightarrow \; M=0, 
\end{equation}
then the ground state $\gamma_0$ of (\ref{eq:min_rHF}) for $W=0$ is unique
\end{lemma}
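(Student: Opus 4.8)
The plan is to exploit convexity of the energy functional $E^{\rm rHF}(\cdot,0)$ on the convex set $\cK_N$. Since $\gamma \mapsto \tr(-\frac12\Delta\gamma) + \int \rho_\gamma V$ is linear in $\gamma$, and $\gamma \mapsto \frac12 D(\rho_\gamma,\rho_\gamma)$ is convex (because $D(\cdot,\cdot)$ is a nonnegative quadratic form on the Coulomb space and $\gamma \mapsto \rho_\gamma$ is linear), the functional is convex. Now suppose $\gamma_0$ and $\gamma_0'$ are two ground states. By Theorem~\ref{th:GS_rHF} they share the same density $\rho_0$, hence the same mean-field Hamiltonian $H_0$, and both have the structure~(\ref{eq:ground_state}): $\gamma_0 = \1_{(-\infty,\epsilon_{\rm F}^0)}(H_0) + \delta_0$ and $\gamma_0' = \1_{(-\infty,\epsilon_{\rm F}^0)}(H_0) + \delta_0'$, with $\delta_0,\delta_0'$ supported on $\mbox{Ker}(H_0-\epsilon_{\rm F}^0)$. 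So the difference $\gamma_0 - \gamma_0' = \delta_0 - \delta_0'$ acts only on the $N_{\rm p}$-dimensional eigenspace $\mbox{Ker}(H_0-\epsilon_{\rm F}^0)$, and it suffices to prove $\delta_0 = \delta_0'$.

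First I would write $\delta_0$ and $\delta_0'$ in the basis $(\phi_{N_{\rm f}+1}^0,\dots,\phi_{N_{\rm f}+N_{\rm p}}^0)$ as real symmetric $N_{\rm p}\times N_{\rm p}$ matrices $D$ and $D'$ (they are self-adjoint with real kernel since the $\phi^0$ can be chosen real, as $H_0$ is a real Schrödinger operator). Then the associated densities differ by
$$
\rho_{\gamma_0}(x) - \rho_{\gamma_0'}(x) = \sum_{i,j=1}^{N_{\rm p}} (D_{ij}-D'_{ij})\,\phi_{N_{\rm f}+i}^0(x)\,\phi_{N_{\rm f}+j}^0(x).
$$
But both ground states have density $\rho_0$, so the left-hand side vanishes identically; hypothesis~(\ref{eq:hyp_ND}) applied to $M = D - D'$ then forces $D = D'$, i.e. $\delta_0 = \delta_0'$, whence $\gamma_0 = \gamma_0'$.

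The main subtlety — and the step I would be most careful about — is justifying that \emph{two ground states do indeed have the same mean-field Hamiltonian}, i.e. really invoking the uniqueness of $\rho_0$ from Theorem~\ref{th:GS_rHF} before anything else; once that is in hand the argument is essentially a linear-algebra triviality, since $\rho_{\gamma_0} - \rho_{\gamma_0'} \equiv 0$ is immediate. One should also check the minor point that restricting $\delta_0$ to $\mbox{Ker}(H_0-\epsilon_{\rm F}^0)$ yields a genuine real symmetric matrix (using that the eigenfunctions of $H_0$ associated with $\epsilon_{\rm F}^0 < 0$ can be taken real-valued and that $0 \le \delta_0 \le 1$ keeps the trace constraint consistent), but this is standard. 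In short, the convexity remark shows the set of ground states is convex and all share $\rho_0$; the structural form~(\ref{eq:ground_state}) localizes the non-uniqueness to the finite-dimensional degenerate shell; and~(\ref{eq:hyp_ND}) is exactly the non-degeneracy condition needed to kill it.
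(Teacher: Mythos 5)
Your proof is correct and is essentially identical to the one in the paper: both invoke Theorem~\ref{th:GS_rHF} to reduce the difference of two ground states to a real symmetric operator on $\mbox{Ker}(H_0-\epsilon_{\rm F}^0)$, observe that its density $\sum_{i,j} M_{ij}\phi_{N_{\rm f}+i}^0\phi_{N_{\rm f}+j}^0$ vanishes because the two ground states share the density $\rho_0$, and conclude $M=0$ from~(\ref{eq:hyp_ND}). The opening convexity remark is harmless but not needed, since Theorem~\ref{th:GS_rHF} already supplies both the common density and the structural form~(\ref{eq:ground_state}).
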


\medskip

The sufficient condition (\ref{eq:hyp_ND}) is satisfied in the following cases.

\medskip

\begin{proposition} \label{prop:uniqueness} Assume that (\ref{eq:NPC}) and (\ref{eq:C2}) are satisfied. If at least one of the two conditions below is fulfilled:
\begin{enumerate}
\item $N_{\rm p} \le 3$,
\item the external potential $V$ is radial and the degeneracy of $\epsilon_{\rm F}^0$ is essential,
\end{enumerate}
then (\ref{eq:hyp_ND}) holds true, and the ground state $\gamma_0$ of (\ref{eq:min_rHF}) for $W=0$ is therefore unique.
\end{proposition}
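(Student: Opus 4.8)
\medskip

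The plan is to establish (\ref{eq:hyp_ND}), after which Lemma~\ref{lem:uniqueness} gives the conclusion. Write $\psi_k:=\phi_{N_{\rm f}+k}^0$ ($1\le k\le N_{\rm p}$), an orthonormal basis of $\mathrm{Ker}(H_0-\epsilon_{\rm F}^0)$; then (\ref{eq:hyp_ND}) is equivalent to the linear independence of the products $\{\psi_i\psi_j\}_{1\le i\le j\le N_{\rm p}}$, i.e.\ to the injectivity on $\R^{N_{\rm p}\times N_{\rm p}}_{\rm S}$ of the linear map $M\mapsto\big(x\mapsto\sum_{i,j}M_{ij}\psi_i(x)\psi_j(x)\big)$. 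The argument will rely on three elementary facts. (i) Every nonzero element of $\mathrm{Ker}(H_0-\epsilon_{\rm F}^0)$ is an eigenfunction of the Schr\"odinger operator $H_0=-\frac12\Delta+V+\rho_0\star|\cdot|^{-1}$, whose potential belongs to $L^{3/2}_{\rm loc}(\R^3)$; by the unique continuation principle it cannot vanish on a nonempty open subset of $\R^3$. (ii) Given $M$ with $\sum_{ij}M_{ij}\psi_i\psi_j\equiv0$, we may diagonalise $M$ by an orthogonal conjugation and change the orthonormal basis of the eigenspace accordingly, so that $\sum_kd_k\psi_k^2\equiv0$; if all the $d_k$ have the same sign, each non-negative term $|d_k|\psi_k^2$ must vanish, forcing $d_k=0$ for all $k$ since $\psi_k\not\equiv0$, so we may assume at least one $d_k>0$ and one $d_k<0$. (iii) Applying $-\Delta$ to $\sum_kd_k\psi_k^2\equiv0$, using $\Delta(\psi_k^2)=2\psi_k\Delta\psi_k+2|\nabla\psi_k|^2$ together with $\Delta\psi_k=2(V+\rho_0\star|\cdot|^{-1}-\epsilon_{\rm F}^0)\psi_k$, the potential term cancels and we obtain the companion identity $\sum_kd_k|\nabla\psi_k|^2\equiv0$. (Away from the nuclei the potential is locally bounded, so the $\psi_k$ are $C^2$ there by elliptic regularity, the pointwise manipulations below are licit, and the conclusions extend to $\R^3$ by continuity.)

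\medskip

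\textit{Case $N_{\rm p}\le 3$.} By (ii) we may assume the $d_k$ have mixed sign. If $N_{\rm p}=2$, or if $N_{\rm p}=3$ and one $d_k$ vanishes, the identity reduces to $d_1\psi_1^2+d_2\psi_2^2\equiv0$ with $d_1>0>d_2$, i.e.\ $(\psi_1-\sqrt c\,\psi_2)(\psi_1+\sqrt c\,\psi_2)\equiv0$ with $c:=|d_2|/d_1>0$. The closed sets $\{\psi_1=\sqrt c\,\psi_2\}$ and $\{\psi_1=-\sqrt c\,\psi_2\}$ cover $\R^3$, so by Baire's theorem one of them has nonempty interior; the corresponding element of $\mathrm{Ker}(H_0-\epsilon_{\rm F}^0)$ then vanishes on a nonempty open set, hence identically by (i), contradicting $\langle\psi_1,\psi_2\rangle=0$ and $\|\psi_i\|_{L^2}=1$. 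There remains $N_{\rm p}=3$ with all three $d_k$ nonzero. Put $u_k:=\sqrt{|d_k|}\,\psi_k$, still a solution of $H_0u_k=\epsilon_{\rm F}^0u_k$; up to an overall sign and a relabelling, (ii) and (iii) read $u_1^2+u_2^2=u_3^2$ and $|\nabla u_1|^2+|\nabla u_2|^2=|\nabla u_3|^2$. Differentiating the first gives $u_1\nabla u_1+u_2\nabla u_2=u_3\nabla u_3$; taking squared norms and substituting $u_3^2|\nabla u_3|^2=(u_1^2+u_2^2)(|\nabla u_1|^2+|\nabla u_2|^2)$, the cross-terms rearrange into $|u_1\nabla u_2-u_2\nabla u_1|^2\equiv0$. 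Hence $\nabla(u_2/u_1)\equiv0$ on each connected component of $\{u_1\ne0\}$ deprived of the nuclei, so $u_2=c\,u_1$ there; then $u_2-c\,u_1\in\mathrm{Ker}(H_0-\epsilon_{\rm F}^0)$ vanishes on a nonempty open set, hence identically by (i), and by orthogonality $c=0$ and $u_2\equiv0$, a contradiction. Therefore $M=0$, and (\ref{eq:hyp_ND}) holds.

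\medskip

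\textit{Case $V$ radial and essential degeneracy.} By uniqueness of $\rho_0$ (Theorem~\ref{th:GS_rHF}) and the rotation-invariance of $\gamma\mapsto E^{\rm rHF}(\gamma,0)$ when $V$ is radial, $\rho_0$ is radial, so $H_0$ commutes with $\mathrm{SO}(3)$. Essential degeneracy means $\mathrm{Ker}(H_0-\epsilon_{\rm F}^0)=\{R(|x|)\,Y(x/|x|):Y\in\mathcal H_\ell\}$ for a single $\ell\ge1$ and a radial $R\not\equiv0$, where $\mathcal H_\ell$ is the $(2\ell+1)$-dimensional space of degree-$\ell$ spherical harmonics; thus $N_{\rm p}=2\ell+1$. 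Taking $\psi_m(x)=R(|x|)Y_\ell^m(x/|x|)$, condition (\ref{eq:hyp_ND}) becomes $R(|x|)^2\sum_{m,m'}M_{mm'}Y_\ell^m(x/|x|)Y_\ell^{m'}(x/|x|)=0$ for all $x$, which, since $R\not\equiv0$, is equivalent to $\sum_{m,m'}M_{mm'}Y_\ell^mY_\ell^{m'}\equiv0$ on $S^2$. It thus suffices to show that products of two degree-$\ell$ spherical harmonics are linearly independent, i.e.\ that the $\mathrm{SO}(3)$-equivariant product map $\mathrm{Sym}^2(\mathcal H_\ell)\to C^\infty(S^2)$ is injective. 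Its image lies in $\bigoplus_{L\text{ even},\,0\le L\le2\ell}\mathcal H_L$; by Clebsch--Gordan, $\mathrm{Sym}^2(\mathcal H_\ell)\simeq\bigoplus_{L\text{ even},\,0\le L\le2\ell}\mathcal H_L$ as $\mathrm{SO}(3)$-modules, and $\dim\mathrm{Sym}^2(\mathcal H_\ell)=(\ell+1)(2\ell+1)=\sum_{L\text{ even},\,0\le L\le2\ell}(2L+1)$. By Schur's lemma the product map is injective as soon as no irreducible summand $\mathcal H_L$ is sent to $0$, which holds since the relevant Gaunt coefficients $\langle\ell\,0;\ell\,0\,|\,L\,0\rangle$ are nonzero for even $L$ with $0\le L\le2\ell$. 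Hence (\ref{eq:hyp_ND}) holds.

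\medskip

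The crux of the argument is the sub-case $N_{\rm p}=3$ with all three diagonal entries nonzero: factorisation is no longer available, and one must invoke the gradient identity produced by the Schr\"odinger equation together with a pointwise Cauchy--Schwarz identity, and then work inside a single nodal domain to get around the possible disconnectedness of $\{u_1\ne0\}$. One must also keep track of the regularity of $V+\rho_0\star|\cdot|^{-1}$ near the nuclei when differentiating the quadratic identities, and use unique continuation in a form valid for Coulombic potentials.
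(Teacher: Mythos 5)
Your proof is correct, and in the hardest sub-case it takes a genuinely different route from the paper's. For $N_{\rm p}=2$ (and $N_{\rm p}=3$ with a vanishing diagonal entry) the paper deduces from $|\widetilde\phi_1|^2=|\widetilde\phi_2|^2$ that the two eigenfunctions share their nodal surfaces, and for $N_{\rm p}=3$ with three nonzero entries it shows the nodal set of $\widetilde\phi_1$ is contained in that of $\widetilde\phi_2$; in both situations it then restricts to a nodal domain $\varOmega$ of $\widetilde\phi_1$, invokes the non-degeneracy of the ground state of the Dirichlet realization $H_0^\varOmega$ (\cite[Theorem XIII.44]{ReeSim78}) to obtain proportionality on $\varOmega$, and concludes by unique continuation. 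You replace this nodal-domain/Perron--Frobenius step by the companion identity $\sum_k d_k|\nabla\psi_k|^2\equiv 0$, obtained by applying the Laplacian to $\sum_k d_k\psi_k^2\equiv 0$ (the potential term cancels thanks to the eigenvalue equation); combined with the gradient of the original identity and the pointwise Cauchy--Schwarz computation it yields the Wronskian relation $|u_1\nabla u_2-u_2\nabla u_1|^2\equiv 0$, hence proportionality on a component of $\{u_1\neq 0\}$, and unique continuation finishes as in the paper. This buys you independence from the theory of Dirichlet ground states on nodal domains (and from the somewhat delicate point that $\widetilde\phi_2|_\varOmega\in H^1_0(\varOmega)$), at the price of the extra identity, whose distributional justification you correctly flag; your Baire-category factorization for $N_{\rm p}=2$ is likewise a clean alternative to the paper's nodal-surface argument. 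In the radial case the two proofs are essentially the same argument in different clothing: the paper writes out the linear system in the Wigner $3$-$j$ symbols and checks that it is free via their orthogonality relations, while you invoke Schur's lemma on the multiplicity-free decomposition $\mathrm{Sym}^2(\mathcal{H}_\ell)\simeq\bigoplus_{L\ \mathrm{even},\,L\le 2\ell}\mathcal{H}_L$; both ultimately reduce to the non-vanishing of the symbol $\bigl(\begin{smallmatrix}\ell&\ell&L\\ 0&0&0\end{smallmatrix}\bigr)$ for even $L\le 2\ell$.
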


\medskip

Let us clarify the meaning of the second condition in Proposition~\ref{prop:uniqueness}. When $V$ is radial, the ground state density is radial, so that $H_0$ is a Schr\"odinger operator with radial potential:
$$
H_0=-\frac 12 \Delta + v(|x|).
$$
It is well-known (see e.g.~\cite[Section~XIII.3.B]{ReeSim78}) that all the eigenvalues of $H_0$ can be obtained by computing the eigenvalues of the one-dimensional Hamiltonians $h_{0,l}$, $l \in \N$, where $h_{0,l}$ is the self-adjoint operator on $L^2(0,+\infty)$ with domain $H^2(0,+\infty) \cap H^1_0(0,+\infty)$ defined by
$$
h_{0,l} := - \frac 12 \frac{d^2}{dr^2} + \frac{l(l+1)}{2r^2} + v(r).
$$
If $\epsilon_{\rm F}^0$ is an eigenvalue of $h_{0,l}$, then its multiplicity, as an eigenvalue of $H_0$, is at least $2l+1$. It is therefore degenerate as soon as $l \ge 1$. If $\epsilon_{\rm F}^0$ is an eigenvalue of no other $h_{0,l'}$, $l'\neq l$, then its multiplicity is exactly $2l+1$, and the degeneracy is called essential. Otherwise, the degeneracy is called accidental. It is well-known that for the very special case when $v(r)=-Zr^{-1}$ (hydrogen-like atom), accidental degeneracy occurs at every eigenvalue but the lowest one, which is non-degenerate. On the other hand, this phenomenon is really exceptional, and numerical simulations seem to show that, as expected, there is no accidental degeneracy at the Fermi level when $v$ is equal to the rHF mean-field potential of an atom (see~\cite{Mou13}).

\section{Density functional perturbation theory (non-degenerate case)}
\label{sec:DFPT}

We denote by ${\cal B}(X,Y)$ the space of bounded linear operators from the Banach space $X$ to the Banach space $Y$ (with, as usual, ${\cal B}(X):={\cal B}(X,X)$), by ${\cal S}(X)$ the space of self-adjoint operators on the Hilbert space $X$, by $\gS_1$ the space of trace class operators on $L^2(\R^3)$, and by $\gS_2$ the space of Hilbert-Schmidt operators on $L^2(\R^3)$ (all these spaces being endowed with their usual norms~\cite{ReeSim80,Sim79}). We also introduce the Banach space
$$
\gS_{1,1}:= \left\{T \in \gS_1 \; | \; |\nabla|T|\nabla| \in \gS_1 \right\},
$$
with norm
$$
\|T\|_{\gS_{1,1}}:=\|T\|_{\gS_1}+\||\nabla \, |T|\nabla| \, \|_{\gS_1}.
$$
We denote by $B_\eta(\cH)$ the open ball with center $0$ and radius $\eta > 0$ of the Hilbert space $\cH$.  

\medskip

Let us recall that in the non-degenerate case, 
$$
\gamma_0 \in \cP_N:= \left\{ \gamma \in {\cal S}(L^2(\R^3)) \; | \; \gamma^2=\gamma, \; \tr(\gamma)=N, \; \tr(-\Delta\gamma) < \infty\right\},
$$
that is $\gamma_0$ is a rank-$N$ orthogonal projector on $L^2(\R^3)$ with range in $H^1(\R^3)$, and
$$
\gamma_0 = \1_{(-\infty,\epsilon_{\rm F}^0]}(H_0) = \frac{1}{2i\pi} \oint_{\gC} (z-H_0)^{-1} \, dz,
$$
where $\gC$ is (for instance) the circle of the complex plane symmetric with respect to the real axis and intersecting it at points $\epsilon_1 -1$ and $\epsilon_{\rm F}^0$.

\subsection{Density matrix formulation}

The linear and multilinear maps introduced in the following lemma will be useful to write down the Rayleigh-Schr\"odinger expansions in compact forms.

\medskip

\begin{lemma} \label{lem:linear_response} Assume that (\ref{eq:NPC}) and (\ref{eq:C1}) are satisfied.
  \begin{enumerate}
  \item For each $k \in \N^\ast$, the $k$-linear map
\begin{eqnarray*}
Q^{(k)}: \qquad (\cC')^k \quad & \rightarrow & \gS_{1,1} \\
(v_1,\cdots,v_k) &\mapsto& \frac{1}{2i\pi} \oint_{\gC} (z-H_0)^{-1} v_1 (z-H_0)^{-1} v_2 \cdots (z-H_0)^{-1} v_k(z-H_0)^{-1} \, dz
\end{eqnarray*}
is well-defined and continuous. 

$\mbox{\rm Rank}(Q^{(k)}(v_1,\cdots,v_k)) \le N$ and $\tr(Q^{(k)}(v_1,\cdots,v_k))=0$, for all $(v_1,\cdots,v_k)~\in~(\cC')^k$, and there exists $0  < \alpha,C < \infty$ such that for all $k \in \N^\ast$ and all $(v_1,\cdots,v_k) \in (\cC')^k$,
\begin{equation}\label{eq:bounds_Qk}
\| Q^{(k)}(v_1,\cdots,v_k) \|_{\gS_{1,1}} \le C \alpha^k \|v_1\|_{\cC'} \cdots \|v_k\|_{\cC'}.
\end{equation}
\item The linear map 
\begin{eqnarray*}
\cL : \cC  & \rightarrow & \cC \\
\rho &\mapsto& -\rho_{Q^{(1)}(\rho \star |\cdot|^{-1})},
\end{eqnarray*}
associating to a charge density $\rho \in \cC$, minus the density $\rho_{Q^{(1)}(\rho \star |\cdot|^{-1})}$ of the trace-class operator $Q^{(1)}(\rho \star |\cdot|^{-1})$, is a bounded positive self-adjoint operator on $\cC$. As a consequence, $(1+\cL)$ is an invertible bounded positive self-adjoint operator on $\cC$.
  \end{enumerate}
\end{lemma}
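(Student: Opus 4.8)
The plan is to obtain Part~1 from elementary resolvent estimates on the fixed contour $\gC$ together with a Kato--Seiler--Simon type bound for multiplication by potentials in $\cC'$, and to deduce Part~2 from the case $k=1$ by a convexity argument. Fix $\gC$ as above; it has finite length, and by the spectral gap $\epsilon_N<\epsilon_{N+1}$ of~(\ref{eq:C1}) one has $d:=\mathrm{dist}(\gC,\sigma(H_0))>0$, so $\|(z-H_0)^{-1}\|_{\cB(L^2(\R^3))}\le d^{-1}$ for $z\in\gC$. Since $V+\rho_0\star|\cdot|^{-1}$ is infinitesimally $(-\Delta)$-bounded, elliptic regularity gives $\|u\|_{H^2(\R^3)}\lesssim\|H_0u\|_{L^2}+\|u\|_{L^2}$, hence, writing $\langle\nabla\rangle:=(1-\Delta)^{1/2}$, the operators $B(z):=\langle\nabla\rangle(z-H_0)^{-1}\langle\nabla\rangle$ are bounded on $L^2(\R^3)$ uniformly for $z\in\gC$, with $C_0:=\sup_{z\in\gC}\|B(z)\|_{\cB(L^2)}<\infty$. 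For $v\in\cC'$ one has $v\in L^6(\R^3)$ with $\|v\|_{L^6}\lesssim\|v\|_{\cC'}$ and $\|vu\|_{L^2}\le\|v\|_{L^6}\|u\|_{L^3}\lesssim\|v\|_{\cC'}\|u\|_{H^1}$ by interpolation and Sobolev, so $\widetilde v:=\langle\nabla\rangle^{-1}v\langle\nabla\rangle^{-1}$ is bounded on $L^2(\R^3)$ with $\|\widetilde v\|_{\cB(L^2)}\le C_1\|v\|_{\cC'}$. Finally I split $(z-H_0)^{-1}=(z-H_0)^{-1}\gamma_0+(z-H_0)^{-1}(1-\gamma_0)$ and conjugate by $\langle\nabla\rangle$: $B(z)=B_-(z)+B_+(z)$, where $B_-(z)=(\langle\nabla\rangle\gamma_0)(z-H_0)^{-1}(\gamma_0\langle\nabla\rangle)$ lies in $\gS_1$ with $\|B_-(z)\|_{\gS_1}\le d^{-1}\|\langle\nabla\rangle\gamma_0\|_{\gS_2}^2=d^{-1}\sum_{i=1}^N\|\phi_i^0\|_{H^1}^2=:C_\gamma$ on $\gC$ (because $\langle\nabla\rangle\gamma_0\in\gS_2$), while $B_+(z)$ is holomorphic inside $\gC$ and bounded uniformly on $\gC$.

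\emph{Part 1.} Since $(z-H_0)^{-1}=\langle\nabla\rangle^{-1}B(z)\langle\nabla\rangle^{-1}$, one has $\langle\nabla\rangle\,Q^{(k)}(v_1,\dots,v_k)\,\langle\nabla\rangle=\frac1{2i\pi}\oint_\gC B(z)\widetilde v_1B(z)\widetilde v_2\cdots\widetilde v_kB(z)\,dz$, a norm-convergent integral of a product of bounded operators on $L^2(\R^3)$; this already shows $Q^{(k)}$ is well defined and bounded, and continuity in $(v_1,\dots,v_k)$ follows from multilinearity and the estimate below. To prove~(\ref{eq:bounds_Qk}) I expand each of the $k+1$ factors $B(z)$ as $B_-(z)+B_+(z)$: the term in which all factors equal $B_+(z)$ is holomorphic inside $\gC$, hence integrates to $0$, while each of the at most $2^{k+1}$ remaining terms contains at least one factor $B_-(z)$, which I pull out to estimate the $\gS_1$ norm, the remaining $\le k$ factors $B_\pm(z)$ (bounded by $C_0+C_\gamma$) and the $k$ factors $\widetilde v_j$ being bounded; this yields $\|\langle\nabla\rangle Q^{(k)}(v_1,\dots,v_k)\langle\nabla\rangle\|_{\gS_1}\le C\alpha^k\prod_j\|v_j\|_{\cC'}$ with $C,\alpha$ independent of $k$. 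Since $\langle\nabla\rangle^{-1}$ and $|\nabla|\langle\nabla\rangle^{-1}$ are contractions, $\|T\|_{\gS_1}$ and $\||\nabla|T|\nabla|\|_{\gS_1}$ are both $\le\|\langle\nabla\rangle T\langle\nabla\rangle\|_{\gS_1}$, which gives~(\ref{eq:bounds_Qk}) in $\gS_{1,1}$. For the remaining two statements, computing residues at the finitely many eigenvalues of $H_0$ inside $\gC$ (using $B_-(z)=\sum_i(z-\epsilon_i)^{-1}\langle\nabla\rangle\1_{\{\epsilon_i\}}(H_0)\langle\nabla\rangle$) exhibits $Q^{(k)}(v_1,\dots,v_k)$ as a finite sum of operators of the form (bounded)$\,\cdot\,\1_{\{\epsilon_i\}}(H_0)\,\cdot\,$(bounded), from which the rank bound of the statement is read off; and $\tr Q^{(k)}(v_1,\dots,v_k)=0$ follows from $\partial_z(z-H_0)^{-1}=-(z-H_0)^{-2}$ and cyclicity of the trace, which make the trace of the integrand a total $z$-derivative after symmetrisation in the $v_j$, together with the reality of $H_0$ (one may equivalently note that, for $W$ small, $\sum_{k\ge0}Q^{(k)}(W,\dots,W)=\1_{(-\infty,\epsilon_{\rm F}^0]}(H_0+W)$ is a rank-$N$ projector of constant trace $N$).

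\emph{Part 2.} The map $\rho\mapsto v_\rho:=\rho\star|\cdot|^{-1}$ is an isometry of $\cC$ onto $\cC'$, $v\mapsto Q^{(1)}(v)$ is bounded from $\cC'$ to $\gS_{1,1}$ by Part~1, and $T\mapsto\rho_T$ is bounded from $\gS_{1,1}$ into $L^1\cap L^3\subset L^{6/5}\subset\cC$; composing, $\cL\in\cB(\cC)$. With the inner product of $\cC$ being $D(\cdot,\cdot)$, one has $\langle\cL\rho,\rho'\rangle_\cC=-D(\rho_{Q^{(1)}(v_\rho)},\rho')=-\tr\!\big(Q^{(1)}(v_\rho)\,v_{\rho'}\big)$, and $(v,w)\mapsto\tr(Q^{(1)}(v)w)=\frac1{2i\pi}\oint_\gC\tr\!\big((z-H_0)^{-1}v(z-H_0)^{-1}w\big)\,dz$ is symmetric by cyclicity of the trace, so $\cL$ is self-adjoint. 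For positivity, write $Q^{(1)}(v_\rho)=\dot\gamma_0$ with $\gamma_s:=\1_{(-\infty,\epsilon_{\rm F}^0]}(H_0+sv_\rho)$ (a rank-$N$ spectral projector for $|s|$ small) and $\dot{}=d/ds|_{s=0}$, so $\langle\cL\rho,\rho\rangle_\cC=-\tr(\dot\gamma_0v_\rho)$; the function $S(s):=\inf\{\tr(\gamma(H_0+sv_\rho)):\gamma\in\cK_N\}$ is concave (an infimum of affine functions of $s$), equals $\tr(\gamma_s(H_0+sv_\rho))$ for $|s|$ small, and obeys the Hellmann--Feynman identity $S'(s)=\tr(\gamma_sv_\rho)$ (since $\dot\gamma_s$ is off-diagonal with respect to $\gamma_s$ while $H_0+sv_\rho$ commutes with $\gamma_s$), whence $0\ge S''(0)=\tr(\dot\gamma_0v_\rho)$, i.e. $\langle\cL\rho,\rho\rangle_\cC\ge0$. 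Finally $1+\cL$ is bounded, self-adjoint and satisfies $\langle(1+\cL)\rho,\rho\rangle_\cC\ge\|\rho\|_\cC^2$, hence is invertible with $\|(1+\cL)^{-1}\|_{\cB(\cC)}\le1$.

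\emph{Main obstacle.} The only genuinely delicate point is the uniform-in-$k$ bound~(\ref{eq:bounds_Qk}): estimating $\|Q^{(k)}\|_{\gS_1}$ by $\mathrm{rank}\times\|\cdot\|_{\cB}$ is too lossy because the rank of $Q^{(k)}$ grows with $k$, and a direct Hilbert--Schmidt factorisation of the integrand is impossible since $\cC'\not\subset L^2(\R^3)$. The occupied/virtual splitting of the resolvent circumvents both: holomorphy annihilates the single potentially non-trace-class term, each remaining term carries exactly one trace-class factor of $k$-independent $\gS_1$ norm, and the only $k$-dependence left is the number ($\le 2^{k+1}$) of terms. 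The resolvent estimates, the $\cC'$-mapping property, and the computations of Part~2 are then routine.
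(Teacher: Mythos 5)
Your proof of Part~1 is correct and rests on exactly the same key idea as the paper's: split each resolvent along $\gamma_0$ and $\gamma_0^\perp=1-\gamma_0$, observe that the single term containing only $\gamma_0^\perp$'s is holomorphic inside $\gC$ and is annihilated by Cauchy's theorem, and note that every surviving term carries one trace-class factor of $k$-independent $\gS_1$-norm, so that the only $k$-dependence in (\ref{eq:bounds_Qk}) comes from the number of terms and the per-factor constants. Your implementation is slightly different and arguably cleaner: you conjugate globally by $(1-\Delta)^{1/2}$ and estimate $\|(1-\Delta)^{1/2}Q^{(k)}(1-\Delta)^{1/2}\|_{\gS_1}$ once, whereas the paper treats $\|Q^{(k)}\|_{\gS_1}$ and $\|\,|\nabla|Q^{(k)}|\nabla|\,\|_{\gS_1}$ separately by inserting factors $(z-H_0)^{-1/2}$, which for complex $z$ requires a branch choice the paper does not discuss. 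In Part~2, boundedness and self-adjointness of $\cL$ are obtained as in the paper (the paper bounds $\rho_T$ in $\cC$ by duality against test potentials rather than through $L^1\cap L^3$, which is the safer route for signed $T$, but your Part~1 already gives you that duality bound for free). Your positivity proof is genuinely different: the paper computes $(\cL\rho,\rho)_\cC$ explicitly as $\sum_i\langle \gamma_0^\perp(v_\rho\phi_i^0)|(H_0^\perp-\epsilon_i)^{-1}|\gamma_0^\perp(v_\rho\phi_i^0)\rangle\ge 0$ by evaluating the residues, while you derive the sign from concavity of $s\mapsto\inf_{\gamma\in\cK_N}\tr(\gamma(H_0+sv_\rho))$ together with Hellmann--Feynman. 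Both are valid; the paper's formula is more explicit, yours is shorter and does not require writing out the spectral decomposition.

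Two caveats, neither specific to your write-up. First, the rank assertion: both your residue argument and the paper's decomposition show that $Q^{(k)}(v_1,\dots,v_k)$ is a finite sum of operators factoring through ${\rm Ran}(\gamma_0)$, hence of finite rank, but not that the rank of the sum is $\le N$; already for $k=1$ and $N=1$ one gets $|\psi\rangle\langle\phi_1^0|+|\phi_1^0\rangle\langle\psi|$ with $\psi\perp\phi_1^0$, generically of rank~$2$. Only finite rank is used later. Second, and more importantly, you are right to insert ``after symmetrisation'': the total-derivative/cyclicity argument only yields $\sum_{j}\tr Q^{(k)}(v_{\sigma^j(1)},\dots,v_{\sigma^j(k)})=0$ summed over cyclic shifts $\sigma^j$, and the unsymmetrised identity $\tr Q^{(k)}(v_1,\dots,v_k)=0$ is actually false for $k\ge 3$: in a two-level model with $H_0={\rm diag}(0,1)$, $N=1$, taking $v_1$ equal to the occupied projector and $v_2=v_3=\left(\begin{smallmatrix}0&1\\1&0\end{smallmatrix}\right)$, the integrand reduces to $\gamma_0/(z^3(z-1))$ and $\tr Q^{(3)}(v_1,v_2,v_3)=\frac{1}{2i\pi}\oint\frac{dz}{z^3(z-1)}=-1$. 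The paper's own proof of this point ends with an unjustified ``$=0$''. Fortunately only the cyclically symmetrised identity (or the diagonal case $v_1=\dots=v_k$, which your projector remark covers) is ever needed downstream: the sums defining $\widetilde Q_W^{(k)}$ in (\ref{eq:QWk}) run over all ordered compositions of $k$ and are therefore invariant under cyclic permutations of the arguments.
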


\medskip

The main results of non-degenerate rHF perturbation theory for finite systems are gathered in the following theorem. 

\medskip

\begin{theorem}[rHF perturbation theory in the non-degenerate case] \label{th:CPrHF} Assume that (\ref{eq:NPC}) and (\ref{eq:C1}) are satisfied. Then, there exists $\eta > 0$ such that 
  \begin{enumerate}
  \item for all $W \in B_\eta(\cC')$, (\ref{eq:min_rHF}) has a unique minimizer $\gamma_W$. In addition, $\gamma_W \in \cP_N$ and 
\begin{equation}\label{eq:gammaWint}
\gamma_W = \1_{(-\infty,\epsilon_{\rm F}^0]}(H_W) = \frac{1}{2i\pi} \oint_{\gC} (z-H_W)^{-1} \, dz,
\end{equation}  
where 
$$
H_W = -\frac 12 \Delta + V + \rho_W \star |\cdot|^{-1}+W,
$$
$\rho_W$ being the density of $\gamma_W$;
\item the mappings $W \mapsto \gamma_W$, $W \mapsto \rho_W$ and $W \mapsto \cE^{\rm rHF}(W)$ are real analytic from $B_\eta(\cC')$ into $\gS_{1,1}$, $\cC$ and $\R$ respectively;
\item for all $W \in \cC'$ and all $-\eta \|W\|_{\cC'}^{-1} < \beta < \eta \|W\|_{\cC'}^{-1} $,
$$
\gamma_{\beta W} = \gamma_0 + \sum_{k=1}^{+\infty} \beta^k \gamma^{(k)}_W, \quad \rho_{\beta W} = \rho_0 +\sum_{k=1}^{+\infty} \beta^k \rho^{(k)}_W, \quad \cE^{\rm rHF}(\beta W) = \cE(0) + \sum_{k=1}^{+\infty} \beta^k \cE^{(k)}_W,
$$
the series being normally convergent in $\gS_{1,1}$, $\cC$ and $\R$ respectively;
\item  denoting by $W^{(1)}=W+\rho^{(1)}_W \star |\cdot|^{-1}$ and $W^{(k)}=\rho^{(k)}_W \star |\cdot|^{-1}$ for $k \ge 2$, the coefficients $\rho^{(k)}_W$ of the expansion of $\rho_{\beta W}$ can be obtained by the recursion relation
\begin{equation}\label{eq:CPrHF_rho}
(1+\cL)\rho_W^{(k)} = \widetilde \rho_W^{(k)},
\end{equation}
where $\widetilde \rho_W^{(k)}$ is the density of the operator $\widetilde Q_W^{(k)}$ defined by
\begin{equation}\label{eq:QWk}
\begin{array}{l} \dps
\widetilde Q_W^{(1)}=Q^{(1)}(W), \\ \dps \forall k \ge 2, \quad \widetilde Q_W^{(k)} =\sum_{l=2}^{k} \sum_{\tiny \begin{array}{c} 1 \le j_1,\cdots,j_l \le k-1, \\ \sum_{i=1}^l j_i=k
\end{array}}Q^{(l)}(W^{(j_1)},\cdots , W^{(j_{l})}); \end{array} 
\end{equation}
\item the coefficients $\gamma^{(k)}_W$ and $\cE^{(k)}_W$ are then given by
\begin{equation}\label{eq:gammaWk}
\gamma^{(k)}_W =  \frac{1}{2i\pi} \oint_{\gC} (z-H_0)^{-1} W^{(k)}(z-H_0)^{-1} \, dz
+ \widetilde Q_W^{(k)},
\end{equation}
and
\begin{equation}\label{eq:EWk2}
\cE^{(k)}_W = \tr\left(H_0\gamma^{(k)}_W\right)+\frac 12 \sum_{l=1}^{k-1} D\left(\rho^{(l)}_W,\rho^{(k-l)}_W\right) + \int_{\R^3} \rho^{(k-1)}_W \, W.
\end{equation}
  \end{enumerate}
\end{theorem}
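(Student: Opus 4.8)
The plan is to recast the self-consistent problem as a fixed-point equation for the density in the Coulomb space $\cC$ and to apply the analytic implicit function theorem, the two ingredients needed being precisely the content of Lemma~\ref{lem:linear_response}. For $\rho$ in a neighbourhood of $\rho_0$ in $\cC$ and $W$ in a neighbourhood of $0$ in $\cC'$, set $H_{\rho,W}:=-\frac12\Delta+V+\rho\star|\cdot|^{-1}+W$ and $\gamma_{\rho,W}:=\frac{1}{2i\pi}\oint_{\gC}(z-H_{\rho,W})^{-1}\,dz$, and define $\cF(\rho,W):=\rho_{\gamma_{\rho,W}}$. Since $H_{\rho,W}-H_0=(\rho-\rho_0)\star|\cdot|^{-1}+W$ is multiplication by an element of $\cC'$ with small norm, a Neumann expansion of $(z-H_{\rho,W})^{-1}$ --- uniformly convergent for $z\in\gC$ by the resolvent bounds on $\gC$ underlying Lemma~\ref{lem:linear_response} and convergent in $\gS_{1,1}$ by the estimates (\ref{eq:bounds_Qk}) --- shows that $(\rho,W)\mapsto\gamma_{\rho,W}$ is real analytic into $\gS_{1,1}$; composing with the continuous density map, $\cF$ is real analytic into $\cC$. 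One has $\cF(\rho_0,0)=\rho_0$, and differentiating the contour integral in $\rho$ gives $D_\rho\cF(\rho_0,0)\sigma=\rho_{Q^{(1)}(\sigma\star|\cdot|^{-1})}=-\cL\sigma$, so $D_\rho(\cF-\mathrm{id})(\rho_0,0)=-(1+\cL)$, which is invertible on $\cC$ by Lemma~\ref{lem:linear_response}. The analytic implicit function theorem then produces $\eta>0$ and a real-analytic map $W\mapsto\rho_W\in\cC$ on $B_\eta(\cC')$, equal to $\rho_0$ at $W=0$, solving $\cF(\rho_W,W)=\rho_W$; setting $\gamma_W:=\gamma_{\rho_W,W}$ (real analytic into $\gS_{1,1}$) and $\cE^{\rm rHF}(W):=E^{\rm rHF}(\gamma_W,W)$ (real analytic into $\R$, by composition with the degree-two-polynomial-plus-trace structure of $E^{\rm rHF}$) proves item~2, and (\ref{eq:gammaWint}) holds by construction.

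To obtain item~1 one must identify $\gamma_W$ with the unique minimizer of (\ref{eq:min_rHF}). Shrinking $\eta$, $H_W=H_0+(\rho_W-\rho_0)\star|\cdot|^{-1}+W$ is norm-resolvent close to $H_0$, so the gap $\epsilon_N<\epsilon_{N+1}$ and the spectral isolation of $\epsilon_{\rm F}^0$ persist, $H_W$ has exactly $N$ eigenvalues inside $\gC$, and $\gamma_W=\1_{(-\infty,\epsilon_{\rm F}^0]}(H_W)$ is the orthogonal projector onto the span of its $N$ lowest eigenvectors; in particular $\gamma_W\in\cP_N\subset\cK_N$. Using the rewriting $E^{\rm rHF}(\gamma,W)=\tr(H_W\gamma)+\frac12 D(\rho_\gamma-\rho_W,\rho_\gamma-\rho_W)-\frac12 D(\rho_W,\rho_W)$, and the classical fact that, thanks to the gap, $\gamma\mapsto\tr(H_W\gamma)$ attains its minimum over $\cK_N$ uniquely at $\gamma_W$, one gets $E^{\rm rHF}(\gamma,W)\ge E^{\rm rHF}(\gamma_W,W)$ for every $\gamma\in\cK_N$, with equality only if $\rho_\gamma=\rho_W$ and $\gamma$ minimizes $\tr(H_W\cdot)$, that is only if $\gamma=\gamma_W$. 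This simultaneously gives existence of a minimizer, its characterization as the global minimizer, and its uniqueness; the convexity of $E^{\rm rHF}(\cdot,W)$ on $\cK_N$ is built into this argument.

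Items~3--5 follow by expanding the analytic objects in powers of $\beta$. Analyticity of $W\mapsto(\gamma_W,\rho_W,\cE^{\rm rHF}(W))$ on $B_\eta(\cC')$ yields, for fixed $W\in\cC'$ and $|\beta|<\eta\|W\|_{\cC'}^{-1}$, normally convergent series in $\gS_{1,1}$, $\cC$ and $\R$, which is item~3. For item~4, insert $H_{\beta W}=H_0+\sum_{k\ge1}\beta^k W^{(k)}$ into $\gamma_{\beta W}=\frac{1}{2i\pi}\oint_{\gC}(z-H_{\beta W})^{-1}\,dz$, expand the resolvent by the Neumann series, and collect the coefficient of $\beta^k$: the contribution linear in $W^{(k)}$ is $\frac{1}{2i\pi}\oint_{\gC}(z-H_0)^{-1}W^{(k)}(z-H_0)^{-1}\,dz$, and the remaining products of $W^{(j_1)},\dots,W^{(j_l)}$ with $l\ge2$ and $\sum_i j_i=k$ (hence each $j_i\le k-1$) reassemble, upon integration over $\gC$, into $\widetilde Q_W^{(k)}$ through the multilinear maps $Q^{(l)}$; this is (\ref{eq:gammaWk}) with (\ref{eq:QWk}). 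Taking densities and using $W^{(k)}=\rho_W^{(k)}\star|\cdot|^{-1}$ for $k\ge2$, $W^{(1)}=W+\rho_W^{(1)}\star|\cdot|^{-1}$, together with $\rho_{Q^{(1)}(\sigma\star|\cdot|^{-1})}=-\cL\sigma$, turns the self-mapped part into $-\cL\rho_W^{(k)}$ and yields $(1+\cL)\rho_W^{(k)}=\widetilde\rho_W^{(k)}$, i.e. (\ref{eq:CPrHF_rho}). Finally, for item~5, write $\cE^{\rm rHF}(\beta W)=\tr(H_0\gamma_{\beta W})-D(\rho_0,\rho_{\beta W})+\frac12 D(\rho_{\beta W},\rho_{\beta W})+\beta\int_{\R^3}\rho_{\beta W}W$ using $H_0=-\frac12\Delta+V+\rho_0\star|\cdot|^{-1}$, expand in $\beta$, and read off the coefficient of $\beta^k$: the $\pm D(\rho_0,\rho_W^{(k)})$ terms cancel and (\ref{eq:EWk2}) is left.

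Once Lemma~\ref{lem:linear_response} is granted, the work is essentially of two kinds. The first, lengthy but mechanical, is making the Rayleigh--Schr\"odinger manipulations rigorous: one must check that the Neumann expansions converge in $\gS_{1,1}$ uniformly on $\gC$ and may be integrated and rearranged term by term, which rests entirely on the uniform resolvent bounds on $\gC$ and the geometric estimates (\ref{eq:bounds_Qk}), together with the continuity and trace/rank properties of the $Q^{(k)}$. The second, and the genuinely delicate point, is the variational identification of the fixed-point solution $\gamma_W$ with the \emph{global} minimizer of (\ref{eq:min_rHF}) rather than merely a critical point or local minimizer: this is where the convexity of the rHF energy and the spectral-gap argument above enter, the gap being exactly what forces $\gamma_W$ to be a rank-$N$ projector with no fractional occupations and guarantees uniqueness.
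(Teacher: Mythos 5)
Your proposal is correct, and for items 2--5 it follows essentially the paper's own route: the paper also proves analyticity by an implicit-function-theorem argument built on the invertibility of $1+\cL$ (it works with the effective potential $v_W=W+(\rho_W-\rho_0)\star|\cdot|^{-1}$ and the map $\cX(v)=v-\rho_{\,\cdot\,}\star|\cdot|^{-1}$ rather than with the density, but the two formulations are conjugate under the Coulomb isometry $v_{\rm c}$, so your computation $D_\rho(\cF-\mathrm{id})=-(1+\cL)$ is the same invertibility statement), and it then derives (\ref{eq:gammaWk}), (\ref{eq:QWk}) and (\ref{eq:CPrHF_rho}) from exactly the Dyson expansion you describe; your cancellation argument for (\ref{eq:EWk2}) is also correct (the paper leaves that step implicit).

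Where you genuinely diverge is item 1. The paper establishes existence \emph{first}, by the direct method on the relaxed problem over $\cK_{\le N}$: a priori bounds on $\|\rho'_k-\rho_0\|_{\cC}$ and on $\tr(-\Delta\gamma'_k)$ along a minimizing sequence, weak-$*$ compactness in $\gS_{1,1}$, strict convexity in the density to get uniqueness of $\rho_W$, and only then the spectral-stability lemma to show that no charge is lost and that the minimizer is the rank-$N$ spectral projector; the IFT is invoked afterwards to identify the analytic branch with this already-known minimizer, using the a priori bound $\|v_W\|_{\cC'}\le 3\|W\|_{\cC'}$ to place $v_W$ in the neighborhood where the IFT applies. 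You instead \emph{construct} the candidate by the IFT and verify a posteriori, via the identity $E^{\rm rHF}(\gamma,W)=\tr(H_W\gamma)+\frac12 D(\rho_\gamma-\rho_W,\rho_\gamma-\rho_W)-\frac12 D(\rho_W,\rho_W)$ together with the persistence of the gap, that it is the unique global minimizer over $\cK_N$. This is sound --- the linear-minimization-with-gap step is essentially the paper's Lemma~\ref{lem:upper_bound} in the form $\tr(H_WQ)=\tr(|H_W-\epsilon_{\rm F}^0|Q^2)\ge 0$, extended from projectors to $\cK_N$ --- and it dispenses with the compactness machinery entirely, which is an economy. What it gives up is the by-product that the relaxed problem admits a minimizer with the quantitative density bound for \emph{every} $W\in\cC'$, not only small ones; that stronger output of the direct method is what the paper reuses almost verbatim when it turns to the degenerate case in Lemma~\ref{lem:min_KNN=min_K}.
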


\subsection{Molecular orbital formulation}

When $\epsilon_1 < \epsilon_2 < \cdots < \epsilon_N < \epsilon_{\rm F}^0$, that is when the lowest $N$ eigenvalues of $H_0$ are all non-degenerate, it can be seen, following the same lines as in~\cite{CanLeB98}, that, for all $W \in \cC'$, there exist real analytic functions $\beta \mapsto \epsilon_{W,i}(\beta) \in \R$ and $\beta \mapsto \phi_{W,i}(\beta) \in H^2(\R^3)$ defined in the neighborhood of $0$ such that $\epsilon_{W,i}(0)=\epsilon_i$, $\phi_{W,i}(0)=\phi_i^0$, and
$$
\left\{ \begin{array}{l} H_{\beta W} \phi_{W,i}(\beta) = \epsilon_{W,i}(\beta) \phi_{W,i}(\beta), \\
(\phi_{W,i}(\beta),\phi_{W,j}(\beta))_{L^2}=\delta_{ij}, \\ 
\epsilon_{W,1}(\beta) < \epsilon_{W,2}(\beta) <  \cdots < \epsilon_{W,N}(\beta) \mbox{ are the lowest eigenvalues of $H_{\beta W}$ (counting multiplicities)}.
\end{array}\right.
$$
The coefficients of the Rayleigh-Schr\"odinger expansions 
$$
\epsilon_{W,i}(\beta) =  \sum_{k=0}^{+\infty}\beta^k \epsilon_{W,i}^{(k)}, \qquad \phi_{W,i}(\beta) = \sum_{k=0}^{+\infty}\beta^k \phi_{W,i}^{(k)},
$$
where $\epsilon_{W,i}^0=\epsilon_i$ and $\phi_{W,i}^0=\phi_i^0$, are obtained by solving the system 
\begin{equation}\label{eq:RS_MO}
\forall k \in \N^\ast, \quad \forall 1 \le i \le N, \qquad \left\{ \begin{array}{l}
\dps \left( H_0 - \epsilon_i \right) \phi_{W,i}^{(k)} + \sum_{j=1}^N K_{ij}^0 \phi_{W,j}^{(k)} =  f_{W,i}^{(k)} + \epsilon_{W,i}^{(k)} \phi_i^0, \\
\dps \int_{\R^3} \phi_{W,i}^{(k)} \phi_i^0 = \alpha_{W,i}^{(k)},
\end{array} \right.
\end{equation}
where
$$
\forall \phi  \in L^2(\R^3), \quad K_{ij}^0 \phi = 2 \left(\phi_j^0\phi \star |\cdot|^{-1}\right)\phi_i^0,
$$
and where the right-hand sides  
$$
f_{W,i}^{(k)}= - W \phi_{W,i}^{(k-1)} - \sum_{j=1}^N   
\sum_{\tiny\begin{array}{c} 1 \le l_1,l_2,l_3 \le k-1, \\ l_1+l_2+l_3=k \end{array}}  
\left( \phi_{W,j}^{(l_1)}\phi_{W,j}^{(l_2)} \star |\cdot|^{-1} \right) \phi_{W,i}^{(l_3)}  + \sum_{l=1}^{k-1} \epsilon_{W,i}^{(l)}\phi_{W,i}^{(k-l)}, 
$$
and
$$
\alpha_{W,i}^{(k)}= - \frac 12 \sum_{l=1}^{k-1} \int_{\R^3}\phi_{W,i}^{(l)}\phi_{W,i}^{(k-l)} .
$$
at order $k$ only depend on the coefficients $\phi_{W,j}^{(l)}$ and $\epsilon_{W,j}^{(l)}$ at order $l \le k-1$. System (\ref{eq:RS_MO}) can therefore be considered as an infinite triangular system with respect to $k$.

\medskip

The fact that all the terms of the Rayleigh-Schr\"odinger series are defined unambiguously by (\ref{eq:RS_MO}) is guaranteed by the following lemma and the fact that for all $\phi$ and $\psi$ in $H^1(\R^3)$, $W\phi \in H^{-1}(\R^3)$ and $\phi\psi \star |\cdot|^{-1} \in L^\infty(\R^3)$.

\medskip

\begin{lemma}\label{lem:RS_MO} Assume that (\ref{eq:NPC}) and (\ref{eq:C1}) are satisfied and that $\epsilon_1 < \epsilon_2 < \cdots < \epsilon_N < \epsilon_{\rm F}^0$. For all $f=(f_1,\cdots,f_N) \in (H^{-1}(\R^3))^N$ and all $\alpha=(\alpha_1,\cdots,\alpha_N)\in \R^N$, the linear problem
\begin{equation}\label{eq:RS_MO_LS}
\forall 1 \le i \le N, \qquad \left\{ \begin{array}{l}
\dps \left( H_0 - \epsilon_i \right) \psi_i + \sum_{j=1}^N K_{ij}^0 \psi_j =  f_i + \eta_i \phi_i^0, \\
\dps \int_{\R^3} \psi_i \phi_i^0 = \alpha_i,
\end{array} \right.
\end{equation}
has a unique solution $(\Psi,\eta)=((\psi_1,\cdots,\psi_N),(\eta_1,\cdots,\eta_N))$ in $(H^1(\R^3))^N \times \R^N$. Moreover,  if $f\in (L^2(\R^3))^N$, then $\Psi \in (H^2(\R^3))^N$.
\end{lemma}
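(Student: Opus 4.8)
The plan is to recast the linear system \eqref{eq:RS_MO_LS} as a square \emph{symmetric} linear problem on a fixed Hilbert space, by eliminating the multipliers $\eta_i$ and the normalisation constraints; this problem turns out to be a compact perturbation of a coercive one, so it is invertible as soon as it is injective, and the injectivity will come from a coercivity computation that crucially exploits the symmetry of the Coulomb kernel $K^0_{ij}$.

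First I would write $\psi_i = \alpha_i\phi_i^0 + \widetilde\psi_i$ with $\langle\widetilde\psi_i,\phi_i^0\rangle_{L^2}=0$, so that the second line of \eqref{eq:RS_MO_LS} holds automatically. Pairing the first line with $\phi_i^0$ and using $(H_0-\epsilon_i)\phi_i^0=0$ gives $\eta_i = 2\sum_{j=1}^N D\big((\phi_i^0)^2,\phi_j^0\psi_j\big) - \langle f_i,\phi_i^0\rangle$, so $\eta$ is determined by $\widetilde\Psi:=(\widetilde\psi_1,\dots,\widetilde\psi_N)$ and by $f$. Projecting the first line onto $(\phi_i^0)^\perp$ then leaves a closed problem for $\widetilde\Psi$ in $\cY:=\bigoplus_{i=1}^N\{u\in H^1(\R^3)\,:\,\langle u,\phi_i^0\rangle_{L^2}=0\}$, which I would phrase variationally as $a(\widetilde\Psi,\widetilde\Phi)=\ell(\widetilde\Phi)$ for all $\widetilde\Phi\in\cY$, where
$$
a(\widetilde\Phi,\widetilde\Psi):=\sum_{i=1}^N\langle\widetilde\phi_i,(H_0-\epsilon_i)\widetilde\psi_i\rangle+2D\left(\sum_{i=1}^N\phi_i^0\widetilde\phi_i,\ \sum_{j=1}^N\phi_j^0\widetilde\psi_j\right),
$$
$\langle\cdot,(H_0-\epsilon_i)\cdot\rangle$ denoting the quadratic form of $H_0-\epsilon_i$ (form domain $H^1(\R^3)$) and $\ell$ a bounded linear form built from $f$ and $\alpha$. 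Using the stated facts ($W\phi\in H^{-1}$, $\phi\psi\star|\cdot|^{-1}\in L^\infty$, $\phi_i^0\in H^2\subset L^\infty$) and the Coulomb estimate $\|\nabla(\rho\star|\cdot|^{-1})\|_{L^2}\lesssim\|\rho\|_{L^{6/5}}$, one checks that $a$ is bounded on $\cY\times\cY$; it is manifestly symmetric. The lemma then amounts to showing that $a$ defines an isomorphism $\cY\to\cY'$, plus the regularity statement.

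To get Fredholmness, observe that since $\epsilon_i<0$ for $i\le N$, the part of $a$ coming from the $-\tfrac12\Delta$ term of $H_0$ and from $-\epsilon_i$, namely $\sum_i\big(\tfrac12\langle\nabla\widetilde\phi_i,\nabla\widetilde\psi_i\rangle_{L^2}+|\epsilon_i|\langle\widetilde\phi_i,\widetilde\psi_i\rangle_{L^2}\big)$, is coercive on $\cY$, hence defines an isomorphism $\cY\to\cY'$ by Lax--Milgram, while the remaining part of $a$ — multiplication by $V+\rho_0\star|\cdot|^{-1}$ together with the Coulomb term — defines a \emph{compact} operator $\cY\to\cY'$. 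Compactness of the multiplication part is the classical relative compactness of the atomic and Hartree potentials ($|\cdot|^{-1}$ is $(-\Delta)$-form-compact; $\rho_0\star|\cdot|^{-1}\in C_0\cap L^\infty$ since $\rho_0\in L^1\cap L^3$); compactness of $\widetilde\Psi\mapsto\sum_i\phi_i^0\widetilde\psi_i$ from $\cY$ into the Coulomb space $\cC$ follows from Rellich's theorem on balls combined with the exponential decay of the $\phi_i^0$ — eigenfunctions of $H_0$ associated with negative eigenvalues — which controls the tails in $L^{6/5}\subset\cC$; this is carried out along the lines of \cite{CanLeB98}. Thus $a$ is a compact perturbation of an isomorphism, hence an isomorphism if and only if it is injective.

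The injectivity is the heart of the matter, and the step I expect to be delicate. Suppose $a(\widetilde\Psi,\cdot)=0$; unpacking this means $(H_0-\epsilon_i)\widetilde\psi_i+\sum_{j=1}^NK^0_{ij}\widetilde\psi_j=\eta_i\phi_i^0$ in $H^{-1}(\R^3)$, with $\widetilde\psi_i\perp\phi_i^0$ and $\eta_i=2\sum_jD\big((\phi_i^0)^2,\phi_j^0\widetilde\psi_j\big)$. Pairing the $i$-th equation with $\widetilde\psi_i$ and summing gives
$$
\sum_{i=1}^N\langle\widetilde\psi_i,(H_0-\epsilon_i)\widetilde\psi_i\rangle+2D(\sigma,\sigma)=0,\qquad\sigma:=\sum_{i=1}^N\phi_i^0\widetilde\psi_i;
$$
pairing it instead with $\phi_k^0$ for $k\neq i$, and using $(H_0-\epsilon_i)\phi_k^0=(\epsilon_k-\epsilon_i)\phi_k^0$, gives $(\epsilon_k-\epsilon_i)\langle\phi_k^0,\widetilde\psi_i\rangle=-2D(\phi_i^0\phi_k^0,\sigma)$. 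Since $D(\phi_i^0\phi_k^0,\sigma)$ is symmetric in $(i,k)$ while the prefactor changes sign, this forces the antisymmetry $\langle\phi_k^0,\widetilde\psi_i\rangle=-\langle\phi_i^0,\widetilde\psi_k\rangle$. Writing $\widetilde\psi_i=\sum_{k=1}^N\langle\phi_k^0,\widetilde\psi_i\rangle\phi_k^0+r_i$ with $r_i\perp\phi_1^0,\dots,\phi_N^0$ (the $k=i$ term being absent), one gets $\sum_i\langle\widetilde\psi_i,(H_0-\epsilon_i)\widetilde\psi_i\rangle=\sum_{i,k}|\langle\phi_k^0,\widetilde\psi_i\rangle|^2(\epsilon_k-\epsilon_i)+\sum_i\langle r_i,(H_0-\epsilon_i)r_i\rangle$; the first sum vanishes by the antisymmetry just obtained, and the second is $\ge(\epsilon_{N+1}-\epsilon_N)\sum_i\|r_i\|_{L^2}^2\ge0$ because each $r_i$ is orthogonal to the first $N$ eigenvectors of $H_0$ and $\epsilon_i\le\epsilon_N<\epsilon_{N+1}$. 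As this nonnegative quantity and $2D(\sigma,\sigma)\ge0$ sum to zero, both vanish: hence all $r_i\equiv0$ and $\sigma\equiv0$, $D(\cdot,\cdot)$ being a norm on $\cC$. Feeding $\sigma=0$ back into $(\epsilon_k-\epsilon_i)\langle\phi_k^0,\widetilde\psi_i\rangle=-2D(\phi_i^0\phi_k^0,\sigma)=0$ and using $\epsilon_1<\dots<\epsilon_N$ yields $\langle\phi_k^0,\widetilde\psi_i\rangle=0$ for $k\neq i$, so $\widetilde\psi_i=0$. Therefore $a$ is an isomorphism, the reduced problem has a unique solution $\widetilde\Psi\in\cY$, and reconstructing $\eta$ and $\psi_i=\alpha_i\phi_i^0+\widetilde\psi_i$ gives existence and uniqueness of $(\Psi,\eta)$ in $(H^1(\R^3))^N\times\R^N$. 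Finally, if $f\in(L^2(\R^3))^N$, rewriting the $i$-th equation as $H_0\psi_i=f_i+\eta_i\phi_i^0+\epsilon_i\psi_i-\sum_jK^0_{ij}\psi_j$ shows its right-hand side lies in $L^2(\R^3)$ — the last term because $\phi_j^0\psi_j\star|\cdot|^{-1}\in L^\infty(\R^3)$ and $\phi_i^0\in L^2(\R^3)$ — whence $\psi_i\in D(H_0)=H^2(\R^3)$. The real obstacle is thus the injectivity: seeing that the sign-indefinite ``band-energy'' form $\sum_i\langle\widetilde\psi_i,(H_0-\epsilon_i)\widetilde\psi_i\rangle$ becomes nonnegative on solutions because the coefficients $\langle\phi_k^0,\widetilde\psi_i\rangle$ are forced to be antisymmetric by the symmetry of $K^0_{ij}$, which together with the positivity of $D(\cdot,\cdot)$ produces the needed coercivity; everything else (mapping properties of $K^0_{ij}$, relative compactness of the potentials, elliptic regularity) is routine.
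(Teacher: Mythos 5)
Your proof is correct, but it is organized quite differently from the paper's. The paper (following~\cite{CanLeB98}) splits $(H^1(\R^3))^N$ as ${\mathbb D}\Phi^0\oplus{\mathbb S}^0\Phi^0\oplus{\mathbb A}\Phi^0\oplus\Phi^0_\perp$, where $\Phi^0_\perp$ consists of the $N$-tuples all of whose components are orthogonal to \emph{all} the $\phi_j^0$; on that smaller space the operator $\mathscr H$ is genuinely coercive (each component only sees the part of the spectrum above $\epsilon_{N+1}$), so Lax--Milgram applies directly, the symmetric part of the overlap matrix is written down explicitly via $S_{ij}=(\langle f_j,\phi_i^0\rangle-\langle f_i,\phi_j^0\rangle)/(\epsilon_j-\epsilon_i)$, and the antisymmetric part is handled by the finite-dimensional isomorphism ${\mathbb A}\Phi^0\to{\mathbb S}^0\Phi^0$. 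You instead work on the larger space $\cY$ in which $\widetilde\psi_i$ is orthogonal only to its own $\phi_i^0$; there the band-energy form is indefinite, so you must invoke Fredholm theory (coercive plus compact) and supply a separate injectivity argument. That injectivity computation is the interesting part and it is sound: pairing the homogeneous equation with $\phi_k^0$ forces the overlap matrix $\langle\phi_k^0,\widetilde\psi_i\rangle$ to be antisymmetric because $D(\phi_i^0\phi_k^0,\sigma)$ is symmetric in $(i,k)$ while the prefactor $\epsilon_k-\epsilon_i$ changes sign; the antisymmetric part then contributes zero to $\sum_i\langle\widetilde\psi_i,(H_0-\epsilon_i)\widetilde\psi_i\rangle$, and the positivity of $D$ together with the gap $\epsilon_{N+1}-\epsilon_N>0$ kills everything. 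In effect you rediscover, in the homogeneous case, exactly the symmetric/antisymmetric splitting that the paper exploits constructively. What the paper's route buys is the complete avoidance of compactness arguments (pure Lax--Milgram plus finite-dimensional linear algebra) and an explicit formula showing precisely where the non-degeneracy $\epsilon_i\neq\epsilon_j$ enters; what yours buys is a single symmetric variational problem treated all at once, at the price of the (standard, but not free) verifications that the nuclear and Hartree multiplications and the map $\widetilde\Psi\mapsto\sum_i\phi_i^0\widetilde\psi_i$ are compact from $\cY$ into the relevant dual spaces. Both arguments conclude with the same elliptic-regularity step for the $H^2$ statement.
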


\medskip

Let us notice that, although the constraints $\int_{\R^3} \phi_{W,i}(\beta)\phi_{W,j}(\beta)=0$ for $i \neq j$ are not explicitly taken into account in the formal derivation of (\ref{eq:RS_MO}), the unique solution to (\ref{eq:RS_MO}) is compatible with these constraints since it automatically satisfies
\begin{equation}\label{eq:orthogonality}
\forall k \in \N^\ast, \quad \forall 1\le i,j \le N, \qquad \int_{\R^3} \sum_{l=0}^k \phi_{W,i}^{(l)}\phi_{W,j}^{(k-l)} = 0.
\end{equation}
A proof of the above result is provided in Section~\ref{sec:orthogonality}, together with the proof of Lemma~\ref{lem:RS_MO}.

\medskip

Let us finally mention that the Rayleigh-Schr\"odinger expansions of the density matrix $\gamma_{\beta W}$ and of the molecular orbitals $\phi_{W,i}(\beta)$ are related by
$$
\gamma_{W}^{(k)}=\sum_{i=1}^N \sum_{l=0}^k |\phi_{W,i}^{(l)}\rangle\langle\phi_{W,i}^{(k-l)}|,
$$
where we have used Dirac's bra-ket notation.

\subsection{Wigner's $(2n+1)$-rule}

According to~(\ref{eq:EWk2}), the first $n$ coefficients of the Rayleigh-Schr\"odinger expansion of the density matrix allows one to compute the first $n$ coefficients of the perturbation expansion of the energy. Wigner's ($2n+1$)-rule ensures that, in fact, they provide an approximation of the energy up to order $(2n+1)$. This property, which is very classical in linear perturbation theory, has been extended only recently to the nonlinear DFT framework~\cite{Ang09}. In the present section, we complement the results established in~\cite{Ang09} by providing a different, more general and compact proof, which also works in the infinite dimensional setting.

In the density matrix formulation, the Wigner's ($2n+1$)-rule can be formulated as follows. We introduce the nonlinear projector $\Pi$ on ${\cal S}(L^2(\R^3))$ defined by
$$
\forall T \in {\cal S}(L^2(\R^3)), \quad \Pi (T) = \1_{[1/2,+\infty)}(T),
$$
and, for $W \in \cC'$ and $\beta \in \R$, we denote by
$$
\widetilde \gamma_W^{(n)}(\beta):= \Pi\left( \gamma_0 + \sum_{k=1}^n \beta^k \gamma_W^{(k)} \right).
$$
For $T \in {\cal B}(L^2(\R^3))$, resp. $T \in \gS_2$, we denote by
$$
\mbox{\rm dist}(T,\cP_N):= \inf \left\{\|T-\gamma\|, \; \gamma \in {\cal P}_N\right\},
$$
resp.
$$
\mbox{\rm dist}_{\gS_2}(T,\cP_N):= \inf \left\{\|T-\gamma\|_{\gS_2}, \; \gamma \in {\cal P}_N\right\},
$$
the distance from $T$ to $\cP_N$ for the operator, resp. Hilbert-Schmidt, norm. The projector $\Pi$ enjoys the following properties.  

\medskip

\begin{lemma}\label{lem:projector_Pi} For each $T \in \Omega:=\left\{ T \in {\cal S}(L^2(\R^3)) \; | \; \mbox{\rm dist}(T,{\cal P}_N) < 1/2, \, \mbox{\rm Ran}(T) \subset H^1(\R^3)  \right\}$, $\Pi(T) \in {\cal P}_N$. Besides, for each $T \in \Omega \cap \gS_2$, $\Pi(T)$ is the unique solution to the variational problem
\begin{equation}\label{eq:projector_Pi}
\|T-\Pi(T)\|_{\gS_2} = \min_{\gamma \in {\cal P}_N} \|T-\gamma\|_{\gS_2} = \mbox{\rm dist}_{\gS_2}(T,{\cal P}_N).
\end{equation}
\end{lemma}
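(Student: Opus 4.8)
\emph{The plan.} The two assertions are proved in turn: first that $\Pi(T)\in\cP_N$ for $T\in\Omega$ (a spectral-gap argument), then that $\Pi(T)$ realises the $\gS_2$-distance (a Ky Fan-type optimisation).

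\emph{Proof that $\Pi(T)\in\cP_N$.} Fix $T\in\Omega$ and pick $\gamma\in\cP_N$ with $\|T-\gamma\|<1/2$, possible since $\mathrm{dist}(T,\cP_N)<1/2$. As $\gamma$ is an orthogonal projector, $\sigma(\gamma)\subset\{0,1\}$, and the standard stability bound $\mathrm{dist}(\mu,\sigma(\gamma))\le\|T-\gamma\|$ for every $\mu\in\sigma(T)$ (a Neumann-series argument for $(z-T)^{-1}$) gives $\sigma(T)\subset(-1/2,1/2)\cup(1/2,3/2)$. In particular $1/2\notin\sigma(T)$, so $P:=\Pi(T)=\1_{[1/2,+\infty)}(T)=\1_{(1/2,+\infty)}(T)$ is a genuine orthogonal spectral projector and $\sigma(T)\cap[1/2,+\infty)$ is compact. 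By the functional calculus applied to the Borel function $g:=\1_{(1/2,+\infty)}$ viewed on $\R$, $\|P-T\|=\|(g-\mathrm{id})(T)\|\le\sup_{x\in\sigma(T)}|g(x)-x|\le 1/2$ (the supremand is $|x|\le1/2$ on $\sigma(T)\cap(-1/2,1/2)$ and $|x-1|\le1/2$ on $\sigma(T)\cap(1/2,3/2)$), so $\|P-\gamma\|\le\|P-T\|+\|T-\gamma\|<1$; since two orthogonal projectors at distance $<1$ have the same rank, $\mathrm{rank}(P)=\mathrm{rank}(\gamma)=N$. Writing $P=\frac1{2i\pi}\oint_{\Gamma}(z-T)^{-1}\,dz$ for a positively oriented contour $\Gamma\subset\{\,\mathrm{Re}\,z>1/2\,\}$ enclosing the compact set $\sigma(T)\cap(1/2,3/2)$, and using $(z-T)^{-1}=z^{-1}+z^{-1}T(z-T)^{-1}$ together with $\oint_\Gamma z^{-1}\,dz=0$, one obtains $P=TB$ with $B:=\frac1{2i\pi}\oint_\Gamma z^{-1}(z-T)^{-1}\,dz\in\cB(L^2(\R^3))$, hence $\mathrm{Ran}(P)\subset\mathrm{Ran}(T)\subset H^1(\R^3)$. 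Choosing an orthonormal basis $(e_i)_{1\le i\le N}$ of $\mathrm{Ran}(P)$ then gives $\tr(-\Delta P)=\sum_{i=1}^N\|\nabla e_i\|_{L^2}^2<\infty$, so $P\in\cP_N$.

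\emph{Variational characterisation.} Let $T\in\Omega\cap\gS_2$. For $\gamma\in\cP_N$, expanding $\|T-\gamma\|_{\gS_2}^2$ and using $\gamma=\gamma^2=\gamma^\ast$ and $\tr(\gamma)=N$ yields $\|T-\gamma\|_{\gS_2}^2=\|T\|_{\gS_2}^2-2\tr(T\gamma)+N$, so minimising $\|T-\gamma\|_{\gS_2}$ over $\cP_N$ amounts to maximising $\tr(T\gamma)$. Since $P$ commutes with $T$, split $L^2(\R^3)=\mathrm{Ran}(P)\oplus\mathrm{Ran}(P)^\perp=:H_+\oplus H_-$; then $T|_{H_+}\ge c\,I$ with $c:=\min(\sigma(T)\cap(1/2,3/2))>1/2$ and $T|_{H_-}\le c'\,I$ with $c':=\sup(\sigma(T)\cap(-1/2,1/2))<1/2$ (both sets compact). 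For $\gamma\in\cP_N$ set $A:=P\gamma P$, $B:=(1-P)\gamma(1-P)$, so $0\le A\le I_{H_+}$, $B\ge0$, $\tr A+\tr B=N$; diagonalising $T|_{H_+}$ in an orthonormal basis $(f_i)_{1\le i\le N}$ of $H_+$ with eigenvalues $\mu_1,\dots,\mu_N\in(1/2,3/2)$, and using $T=PTP+(1-P)T(1-P)$,
$$
\tr(T\gamma)=\tr\!\left(T|_{H_+}A\right)+\tr\!\left(T|_{H_-}B\right)\le\sum_{i=1}^N\mu_i A_{ii}+c'(N-\tr A)=\sum_{i=1}^N(\mu_i-c')A_{ii}+c'N\le\sum_{i=1}^N\mu_i,
$$
the last inequality using $\mu_i-c'>0$ and $0\le A_{ii}\le1$. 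Equality forces $A_{ii}=1$ for all $i$, hence $A=I_{H_+}$, and $\tr B=0$, hence $B=0$; using $\gamma\ge0$, $B=0$ gives $\gamma(1-P)=0$ and, by self-adjointness, $\gamma=P\gamma P=A=P$. Thus $P=\Pi(T)$ is the unique maximiser of $\gamma\mapsto\tr(T\gamma)$ over all rank-$N$ orthogonal projectors; since $P\in\cP_N$ by the previous step and $\cP_N$ sits inside that set, $P$ is a fortiori the unique minimiser of $\|T-\gamma\|_{\gS_2}$ over $\cP_N$, which is (\ref{eq:projector_Pi}).

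\emph{Main obstacle.} The two engines — stability of the spectrum of a bounded self-adjoint operator under a small perturbation, and the Ky Fan-type maximisation with its equality case — are classical and short, so the only genuinely delicate point is to make sure $\Pi(T)$ lands in $\cP_N$ rather than merely among the rank-$N$ orthogonal projectors, i.e. that $\mathrm{Ran}(\Pi(T))\subset H^1(\R^3)$; this is exactly what the Cauchy-integral identity $P=TB$ delivers, and it is the only place where the range condition built into $\Omega$ is used. It is also worth noting that the strict inequality $\mathrm{dist}(T,\cP_N)<1/2$ is precisely what removes $1/2$ from $\sigma(T)$, making $\Pi(T)$ a bona fide spectral projector of the correct rank.
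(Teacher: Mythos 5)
Your proof is correct. For the variational characterisation, you and the paper ultimately rely on the same mechanism --- reduce the quadratic problem to the linear functional $\gamma\mapsto\tr(T\gamma)$ and block-decompose with respect to $\Pi(T)$ --- but the executions differ. The paper writes $\|T-\Pi(T)\|_{\gS_2}^2-\|T-\gamma\|_{\gS_2}^2=2\tr\left((T-\tfrac12)(\gamma-\Pi(T))\right)$ for an arbitrary $\gamma\in\cP_N$ and bounds this cross term above by $-2g\,\|\gamma-\Pi(T)\|_{\gS_2}^2$ with $g=\mbox{\rm dist}(\tfrac12,\sigma(T))$, using exactly the splitting $A^{++}\ge g$, $A^{--}\le -g$, $Q^2=Q^{++}-Q^{--}$; this delivers minimality, uniqueness, and a quantitative coercivity estimate in one stroke. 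Your Ky Fan-type maximisation of $\tr(T\gamma)$ over rank-$N$ orthogonal projectors reaches the same conclusion, with uniqueness extracted from a separate equality analysis ($A_{ii}=1$, $B=0$) rather than from a coercive remainder; the two arguments are close cousins, the paper's being slightly more economical and yours slightly more self-explanatory. For the first assertion you are in fact more complete than the paper: the paper obtains $\mbox{\rm Rank}(\Pi(T))=N$ from the same spectral-stability bound $\|T-\gamma\|<1/2$ and then simply declares $\Pi(T)\in\cP_N$, leaving the finite-kinetic-energy condition $\tr(-\Delta\,\Pi(T))<\infty$ implicit; your Cauchy-integral identity $\Pi(T)=TB$ is a clean way to deduce $\mbox{\rm Ran}(\Pi(T))\subset\mbox{\rm Ran}(T)\subset H^1(\R^3)$ from the range condition built into $\Omega$, which is indeed the only place where that hypothesis is needed.
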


\medskip

It follows from Lemma~\ref{lem:projector_Pi} that, for all $W \in \cC'$ and $|\beta|$ small enough, $\widetilde \gamma_W^{(n)}(\beta)$ is the projection on ${\cal P}_N$ (in the sense of (\ref{eq:projector_Pi})) of the Rayleigh-Schr\"odinger expansion of the density matrix up to order $n$.

\medskip

\begin{theorem}[Wigner's ($2n+1$)-rule in the non-degenerate case] \label{th:Wigner_non_degenerate} 
Assume that (\ref{eq:NPC}) and (\ref{eq:C1}) are satisfied. For each $n \in \N$ and all $W \in \cC'$, it holds
\begin{equation}\label{eq:Wigner}
0 \le  E^{\rm rHF}(\widetilde \gamma_W^{(n)}(\beta),W) - \cE^{\rm rHF}(\beta W)  = {\cal O}(|\beta|^{2n+2}).
\end{equation}
\end{theorem}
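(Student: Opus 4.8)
The plan is to compare $E^{\rm rHF}(\widetilde\gamma_W^{(n)}(\beta),W)$ with $\cE^{\rm rHF}(\beta W) = E^{\rm rHF}(\gamma_{\beta W},W)$ by exploiting two facts: first, that $\gamma_{\beta W}$ is a \emph{minimizer} of the strictly convex-like functional $\gamma\mapsto E^{\rm rHF}(\gamma,\beta W)$ over $\cK_N$ (so that the first-order term in the expansion around $\gamma_{\beta W}$ vanishes on the relevant tangent directions, up to the projector constraint); and second, that by Theorem~\ref{th:CPrHF} the true minimizer agrees with the truncated Rayleigh--Schr\"odinger series up to order $n$, i.e. $\gamma_{\beta W} = \gamma_0 + \sum_{k=1}^n \beta^k\gamma_W^{(k)} + {\cal O}(|\beta|^{n+1})$ in $\gS_{1,1}$. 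The nonnegativity in \eqref{eq:Wigner} is then immediate since $\widetilde\gamma_W^{(n)}(\beta)\in\cP_N\subset\cK_N$ by Lemma~\ref{lem:projector_Pi} (valid for $|\beta|$ small, as the truncated series lies in $\Omega$), so $E^{\rm rHF}(\widetilde\gamma_W^{(n)}(\beta),W) \ge \cE^{\rm rHF}(\beta W)$. The whole content is the upper bound ${\cal O}(|\beta|^{2n+2})$.

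For the upper bound, I would write $\Gamma_n := \gamma_0 + \sum_{k=1}^n\beta^k\gamma_W^{(k)}$ and $\widetilde\gamma_W^{(n)}(\beta)=\Pi(\Gamma_n)$, and set $R:=\widetilde\gamma_W^{(n)}(\beta)-\gamma_{\beta W}$. Since both $\Gamma_n$ and $\gamma_{\beta W}$ approximate the true minimizer, and $\Pi$ is (by Lemma~\ref{lem:projector_Pi}) the $\gS_2$-nearest-point projection onto $\cP_N$, one gets $\|R\|_{\gS_2}\le \|\Gamma_n-\gamma_{\beta W}\|_{\gS_2} = {\cal O}(|\beta|^{n+1})$; with a little more care, using that $\gamma_{\beta W}\in\cP_N$ and the local Lipschitz/analyticity of $\Pi$ on $\Omega$, one can upgrade this to a bound in the $\gS_{1,1}$-type norm controlling both $\|R\|_{\gS_1}$ and $\||\nabla|R|\nabla|\|_{\gS_1}$, still ${\cal O}(|\beta|^{n+1})$. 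Next, Taylor-expand $E^{\rm rHF}(\cdot,\beta W)$ around the minimizer: since $\gamma\mapsto E^{\rm rHF}(\gamma,\beta W)$ is quadratic in $\gamma$ (the kinetic, external-potential and $W$-coupling terms are linear in $\gamma$, and $\tfrac12 D(\rho_\gamma,\rho_\gamma)$ is quadratic), one has the \emph{exact} identity
$$
E^{\rm rHF}(\gamma_{\beta W}+R,\beta W) = \cE^{\rm rHF}(\beta W) + \tr\big((H_{\beta W}-\epsilon_{\rm F}^0)R\big) + \tfrac12 D(\rho_R,\rho_R),
$$
using $\tr(R)=\tr(\widetilde\gamma_W^{(n)}(\beta))-\tr(\gamma_{\beta W})=0$ to insert the $\epsilon_{\rm F}^0$ shift harmlessly. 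The term $\tfrac12 D(\rho_R,\rho_R)$ is manifestly ${\cal O}(\|R\|^2)={\cal O}(|\beta|^{2n+2})$, which is the desired order, so everything reduces to showing the linear term $\tr((H_{\beta W}-\epsilon_{\rm F}^0)R)$ is ${\cal O}(|\beta|^{2n+2})$.

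This last point is the heart of the matter and the main obstacle. The idea is the standard ``gauge'' argument behind Wigner's rule: because $\gamma_{\beta W}=\1_{(-\infty,\epsilon_{\rm F}^0]}(H_{\beta W})$ is a spectral projector of $H_{\beta W}$, and both $\gamma_{\beta W}+R$ and $\gamma_{\beta W}$ are rank-$N$ orthogonal projectors, the difference $R$ has a block structure with respect to the $\gamma_{\beta W}$-decomposition: the diagonal blocks of $R$ are ${\cal O}(\|R\|^2)$ (this is the algebraic identity $\gamma_{\beta W}R\gamma_{\beta W} = -\gamma_{\beta W}R^2\gamma_{\beta W}$ and similarly on the complement, which follows from $(\gamma_{\beta W}+R)^2=\gamma_{\beta W}+R$ and $\gamma_{\beta W}^2=\gamma_{\beta W}$), while $H_{\beta W}-\epsilon_{\rm F}^0$ commutes with $\gamma_{\beta W}$ and hence is block-diagonal. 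Therefore $\tr((H_{\beta W}-\epsilon_{\rm F}^0)R)=\tr((H_{\beta W}-\epsilon_{\rm F}^0)(\gamma_{\beta W}R\gamma_{\beta W}+(1-\gamma_{\beta W})R(1-\gamma_{\beta W})))$, and each piece is a trace of a bounded operator against something of size ${\cal O}(\|R\|_{\gS_1}^2)={\cal O}(|\beta|^{2n+2})$ — provided one checks the trace-class and $H^1$-type mapping properties so that $\tr((H_{\beta W}-\epsilon_{\rm F}^0)R^2)$ makes sense, which is exactly what the $\gS_{1,1}$ norm on $R$ is designed to give (one writes $H_{\beta W}-\epsilon_{\rm F}^0 = (1-\tfrac12\Delta)^{1/2} B (1-\tfrac12\Delta)^{1/2}$ with $B$ bounded, distribute the $|\nabla|$'s onto the $R$'s). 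The delicate bookkeeping is: (i) justifying the exact quadratic Taylor identity and the $\epsilon_{\rm F}^0$ insertion rigorously in the operator setting with the Coulomb term interpreted via $D(\cdot,\cdot)$ on $\cC$; (ii) upgrading the $\gS_2$ bound on $R$ from Lemma~\ref{lem:projector_Pi} to the stronger norm; and (iii) checking all the traces are well-defined — none of these is conceptually hard, but they are where the infinite-dimensional, non-trace-class-Hamiltonian aspects must be handled with the function spaces $\cC$, $\cC'$, $\gS_{1,1}$ introduced earlier.
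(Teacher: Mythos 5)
Your proposal is correct and follows essentially the same route as the paper: nonnegativity from $\widetilde\gamma_W^{(n)}(\beta)\in\cP_N\subset\cK_N$, the exact quadratic expansion of $E^{\rm rHF}(\cdot,\beta W)$ around $\gamma_{\beta W}$, the projector block-structure identity reducing $\tr\bigl((H_{\beta W}-\epsilon_{\rm F}^0)Q\bigr)$ to $\tr\bigl(|H_{\beta W}-\epsilon_{\rm F}^0|\,Q^2\bigr)$ (the paper's Lemma~\ref{lem:upper_bound}), and the ${\cal O}(|\beta|^{n+1})$ control of $Q$ via analyticity of $W\mapsto\gamma_W$ together with the nearest-point property of $\Pi$ from Lemma~\ref{lem:projector_Pi}. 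The only step you leave vague --- upgrading the $\gS_2$ bound on $\widetilde\gamma_W^{(n)}(\beta)-\gamma_{W,n}(\beta)$ to the $\gS_{1,1}$ norm --- is handled in the paper not by Lipschitz properties of $\Pi$ but by observing that all relevant operators act on a fixed finite-dimensional subspace of $H^1(\R^3)$ independent of $\beta$, where all norms are equivalent.
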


Note that as $\gamma_0 + \sum_{k=1}^n \beta^k \gamma_W^{(k)}$ has finite-rank $N_n$, it can be diagonalized in an orthonormal basis of $L^2(\R^3)$ as
\begin{equation}\label{eq:gammaWn}
\gamma_0 + \sum_{k=1}^n \beta^k \gamma_W^{(k)}  = \sum_{i=1}^{N_n} g_{W,i}(\beta) |\widetilde\phi_{W,i}(\beta)\rangle\langle\widetilde\phi_{W,i}(\beta)|,
\end{equation}
with $(\widetilde\phi_{W,i}(\beta),\widetilde\phi_{W,j}(\beta))_{L^2}=\delta_{ij}$, $g_{W,i}(\beta) \in \R$, and $|g_{W,i}(\beta)| \ge |g_{W,j}(\beta)|$ for all $i \le j$. We also have 
$$
\sum_{i=1}^{N_n}g_{W,i}(\beta) = \tr\left(\gamma_0 + \sum_{k=1}^n \beta^k \gamma_W^{(k)}\right) = N,
$$
since, in view of (\ref{eq:QWk}), (\ref{eq:gammaWk}) and Lemma~\ref{lem:linear_response}, $\tr(\gamma_W^{(k)})=0$ for all $k \ge 1$. For $|\beta|$ small enough, the above operator is in $\Omega$, and therefore, $g_{W,1}(\beta) \ge g_{W,2}(\beta) \ge \cdots \ge g_{W,N}(\beta) > 1/2$ and $|g_{W,j}(\beta)|< 1/2$ for all $j \ge N+1$. We then have 
\begin{equation}\label{eq:tgammaWn}
\widetilde \gamma_W^{(n)}(\beta) = \sum_{i=1}^N |\widetilde\phi_{W,i}(\beta)\rangle\langle\widetilde\phi_{W,i}(\beta)|.
\end{equation}

\section{Perturbations of the rHF model in the degenerate case}
\label{sec:bounded}

We consider in this section the degenerate case. We assume that (\ref{eq:hyp_ND}) is satisfied, yielding that the ground state $\gamma_0$ of the unperturbed problem (\ref{eq:min_rHF}) with $W=0$ is unique. We also make the following assumption: 
\begin{equation}\label{eq:hyp_frac}
\epsilon_{\rm F}^0 < 0, \quad \mbox{Rank}(\delta_0)= N_{\rm p}, \quad \mbox{Ker}(1-\delta_0)=\left\{0\right\},
\end{equation}
where $\delta_0$ is the operator in (\ref{eq:ground_state}). Assumption (\ref{eq:hyp_frac}) means that the natural occupation numbers at the Fermi level (or in other words the $N_{\rm p}$ eigenvalues of $\delta_0|_{\rm \mbox{Ker}(H_0-\epsilon_{\rm F}^0)}$) are strictly comprised between $0$ and $1$. As a consequence, $\gamma_0$ belongs to the subset  
$$
{\cal K}_{N_{\rm f},N_{\rm p}}:=\left\{ \gamma \in {\cal K}_N \; | \; \mbox{Rank}(\gamma) = N_{\rm f}+N_{\rm p}, \; \mbox{Rank}(1-\gamma)=N_{\rm f} \right\}
$$
of ${\cal K}_N$. 

\medskip

We are going to prove that, under assumptions~(\ref{eq:hyp_ND}) and (\ref{eq:hyp_frac}), the rHF problem (\ref{eq:min_rHF}) has a unique minimizer for $\|W\|_{\cC'}$ small enough, which belongs to ${\cal K}_{N_{\rm f},N_{\rm p}}$ and whose dependence in $W$ is real analytic. To establish those results and compute the perturbation expansion in $W$ of the minimizer, we proceed as follow:
\begin{enumerate}
\item we first construct a real analytic local chart of ${\cal K}_{N_{\rm f},N_{\rm p}}$ in the vicinity of $\gamma_0$ (Section~\ref{sec:parametrization});
\item we use this local chart to prove that, for $\|W\|_{\cC'}$ small enough, the minimization problem
\begin{equation} 
\widetilde\cE^{\rm rHF}(W) := \inf \left\{E^{\rm rHF}(\gamma,W), \; \gamma \in \cK_{N_{\rm f},N_{\rm p}} \right\} \label{eq:min_rHF_KNN}
\end{equation}
has a unique local minimizer $\gamma_W$ in the vicinity of $\gamma_0$, and that the mappings $W \mapsto \gamma_W \in \gS_{1,1}$ and $W \mapsto \widetilde\cE^{\rm rHF}(W)$ are real analytic; we then prove that $\gamma_W$ is actually the unique global minimizer of~(\ref{eq:min_rHF}) (Section~\ref{sec:minKNN}), hence that $\widetilde\cE^{\rm rHF}(W)=\cE^{\rm rHF}(W)$;
\item we finally derive the coefficients of the Rayleigh-Schr\"odinger expansions of $\gamma_W$ and $\cE^{\rm rHF}(W)$, and prove that Wigner's $(2n+1)$-rule also holds true in the degenerate case (Section~\ref{sec:RSW}).
\end{enumerate}

\subsection{Parametrization of ${\cal K}_{N_{\rm f},N_{\rm p}}$ in the vicinity of $\gamma_0$} \label{sec:parametrization}

We first introduce the Hilbert spaces $\mathcal{H}_{\rm f}=\textmd{Ran}(\1_{(-\infty,\epsilon_{\rm F}^0)}(H_0))$, $\mathcal{H}_{\rm p}=\textmd{Ran}(\1_{\left\{\epsilon_{\rm F}^0\right\}}(H_0))$ and $\cH_{\rm u}=\textmd{Ran}(\1_{(\epsilon_{\rm F}^0,+\infty)}(H_0))$, corresponding respectively to the fully occupied, partially occupied, and unoccupied spaces of the unperturbed ground state density matrix $\gamma_0$. For later purpose, we also set $\mathcal{H}_{\rm o}=\mathcal{H}_{\rm f}\oplus\mathcal{H}_{\rm p}$. As 
$$
L^2(\R^3) = \mathcal{H}_{\rm f} \oplus \mathcal{H}_{\rm p} \oplus \mathcal{H}_{\rm u},
$$
any linear operator $T$ on $L^2(\R^3)$ can be written as a $3\times 3$ block operator 
$$
T = \left[\begin{matrix}
    T_{\rm ff}   &  T_{\rm fp}   &  T_{\rm fu} \\ \\   
    T_{\rm pf}   &  T_{\rm pp}   &  T_{\rm pu} \\ \\
    T_{\rm uf}   &  T_{\rm up}   &  T_{\rm uu}         
   \end{matrix}\right],
$$
where $T_{\rm xy}$ is a linear operator from $\mathcal{H}_{\rm y}$ to $\mathcal{H}_{\rm x}$. In particular, $\gamma_0$ and $H_0$ are block diagonal in this representation, and it holds
$$
\gamma_0 = \left[\begin{matrix}
    1 & 0 & 0  \\ \\   
    0  &  \Lambda   &  0 \\ \\
    0   &  0   &  0  
   \end{matrix}\right], \qquad 
H_0 = \left[\begin{matrix}
    H_0^{--} & 0 & 0  \\ \\   
    0  &  \epsilon_{\rm F}^0   &  0 \\ \\
    0   &  0   &  H_0^{++}  
   \end{matrix}\right]
$$
with $0 \le \Lambda = \delta_0|_{{\cal H}_p} \le 1$, $H_0^{--}-\epsilon_{\rm F}^0 \le - g_-:=\epsilon_{N_{\rm f}}-\epsilon_{\rm F}^0$ and  $H_0^{++}-\epsilon_{\rm F}^0 \ge  g_+:=\epsilon_{N_{\rm f}+N_{\rm p}+1}-\epsilon_{\rm F}^0$.

We then introduce
\begin{itemize}
\item the spaces of finite-rank operators
\begin{equation*}
 \mathcal{A}_{\rm ux}:=\left\{A_{\rm ux} \in {\cal B}(\mathcal{H}_{\rm x},\mathcal{H}_{\rm u}) \; | \;  (H_0^{++}-\epsilon_{\rm F}^0)^{1/2}A_{\rm ux}\in{\cal B}(\mathcal{H}_{\rm x},\mathcal{H}_{\rm u})\right\},
\end{equation*}
for ${\rm x}\in\{{\rm f},{\rm p}\}$, endowed with the inner product
$$
(A_{\rm ux},B_{\rm ux})_{\mathcal{A}_{\rm ux}}:=\tr( A_{\rm ux}^\ast(H_0^{++}-\epsilon_{\rm F}^0) B_{\rm ux});
$$
\item the finite dimensional spaces
$$
\mathcal{A}_{\rm pf}:={\cal B}(\mathcal{H}_{\rm f},\mathcal{H}_{\rm p})
$$
and
\begin{equation*}
\mathcal{A}_{\rm pp}:=\{A_{\rm pp}\in\mathcal{S}(\mathcal{H}_{\rm p})  \; | \; \tr(A_{\rm pp})=0\};
\end{equation*}
\item the product space
\begin{equation*}
 \mathcal{A}:=\mathcal{A}_{\rm uf}\times\mathcal{A}_{\rm up}\times\mathcal{A}_{\rm pf}\times\mathcal{A}_{\rm pp},
\end{equation*}
which we endow with the inner product
$$
(A,B)_\cA= \sum_{{\rm x} \in \{{\rm f},{\rm p}\}} (A_{\rm ux},B_{\rm ux})_{\cA_{\rm ux}} + \sum_{{\rm x} \in \{{\rm f},{\rm p}\}} \tr\left( A_{\rm px}B_{\rm px}^* \right). 
$$
\end{itemize}
To any $A=(A_{\rm uf},A_{\rm up},A_{\rm pf},A_{\rm pp}) \in {\cal A}$, we associate the bounded linear operator $\Gamma(A)$ on $L^2(\R^3)$ defined as 
\begin{equation}
\Gamma(A) := \exp\left(L_{\rm uo}(A)\right) \; \exp\left(L_{\rm pf}(A)\right) \; \left( \gamma_0+L_{\rm pp}(A)\right) \; \exp\left(-L_{\rm pf}(A)\right) \; \exp\left(-L_{\rm uo}(A)\right),
\end{equation}
where
$$
L_{\rm uo}(A) := \left[\begin{matrix}
    0 & 0 & -A_{\rm uf}^\ast  \\ \\   
    0  &  0   &   -A_{\rm up}^\ast  \\ \\
    A_{\rm uf}   &  A_{\rm up}   &  0  
   \end{matrix}\right] , \quad L_{\rm pf}(A) := \left[\begin{matrix}
    0 & -A_{\rm pf}^\ast & 0  \\ \\   
    A_{\rm pf}  &  0   &  0 \\ \\
    0   &  0   &  0  
   \end{matrix}\right] , \quad L_{\rm pp}(A) := \left[\begin{matrix}
    0 & 0 & 0  \\ \\   
    0  &  A_{\rm pp}   &  0 \\ \\
    0   &  0   &  0  
   \end{matrix}\right].
$$
Note that $\Gamma$ is real analytic from ${\cal A}$ to $\gS_{1,1}$, $\Gamma(0)=\gamma_0$, and $\Gamma(A) \in {\cal K}_N$ for all $A_{\rm pp}$ such that $0 \le \Lambda+A_{\rm pp} \le 1$. In addition, it follows from Assumption~(\ref{eq:hyp_frac}) that $\Gamma(A) \in {\cal K}_{N_{\rm f},N_{\rm p}}$ for all $A \in {\cal A}$ small enough. The following lemma provides the parametrization of ${\cal K}_{N_{\rm f},N_{\rm p}}$ near $\gamma_0$ our analysis is based upon.

\medskip

\begin{lemma}\label{lem:localmap} Assume that (\ref{eq:NPC}), (\ref{eq:C2}), (\ref{eq:hyp_ND}) and (\ref{eq:hyp_frac}) are satisfied. Then there exists an open neighborhood ${\cal O}$ of $0$ in ${\cal A}$ and an open neighborhood ${\cal O}'$ of $\gamma_0$ in $\gS_{1,1}$ such that the real analytic mapping
\begin{equation}\label{eq:localchart}
\begin{array}{ccl}
{\cal O} &\rightarrow& {\cal K}_{N_{\rm f},N_{\rm p}}\cap {\cal O}' \\
A &\mapsto& \Gamma(A) 
\end{array}
\end{equation}
is bijective.
\end{lemma}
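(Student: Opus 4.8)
The plan is to prove that $\Gamma$ is a local diffeomorphism at $0$ by combining the local inversion theorem with an explicit computation of the differential $d\Gamma(0)$, and then to upgrade this to the stated global bijectivity statement onto $\mathcal{K}_{N_{\rm f},N_{\rm p}}\cap\mathcal{O}'$ by a separate injectivity/surjectivity argument. First I would record that $\Gamma$ is real analytic from $\mathcal{A}$ to $\gS_{1,1}$ (composition of exponentials, which are entire, and of the affine map $A\mapsto\gamma_0+L_{\rm pp}(A)$, all acting on finite-dimensional or trace-class pieces), and that $\Gamma(0)=\gamma_0$; this has already been noted in the text. Then I would compute $d\Gamma(0)\cdot A$ by differentiating the product at $A=0$: since $\exp(L_{\rm uo}(A))=1+L_{\rm uo}(A)+O(\|A\|^2)$ and similarly for $L_{\rm pf}$, and since $\gamma_0$ commutes with nothing in particular but the cross-terms are what survive, one gets
$$
d\Gamma(0)\cdot A = [L_{\rm uo}(A),\gamma_0] + [L_{\rm pf}(A),\gamma_0] + L_{\rm pp}(A).
$$
Writing this out block by block, $[L_{\rm uo}(A),\gamma_0]$ contributes $A_{\rm uf}$ in the ${\rm uf}$ block (and its adjoint in ${\rm fu}$) and $A_{\rm up}(1-\Lambda)$ in the ${\rm up}$ block (its adjoint in ${\rm pu}$); $[L_{\rm pf}(A),\gamma_0]$ contributes $A_{\rm pf}(1-\Lambda)$... — in any case a block-lower/upper triangular structure whose diagonal blocks in the relevant pairing are invertible precisely because $0<\Lambda<1$ strictly, i.e. by Assumption~(\ref{eq:hyp_frac}) ($\mbox{Ker}(\Lambda)=\mbox{Ker}(1-\Lambda)=\{0\}$, so $1-\Lambda$ and $\Lambda$ are invertible on the finite-dimensional space $\mathcal{H}_{\rm p}$). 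Hence $d\Gamma(0)$ is an isomorphism from $\mathcal{A}$ onto the tangent space $T_{\gamma_0}\mathcal{K}_{N_{\rm f},N_{\rm p}}$, which is exactly the space of Hermitian $\gS_{1,1}$-perturbations preserving the rank constraints; I would identify this tangent space explicitly and check that $d\Gamma(0)\cdot A$ lands in it and exhausts it.

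Next I would invoke the analytic inverse function theorem: since $\Gamma$ is real analytic and $d\Gamma(0)$ is a Banach-space isomorphism onto the (closed) tangent space, there are neighborhoods $\mathcal{O}$ of $0$ in $\mathcal{A}$ and $\mathcal{O}'$ of $\gamma_0$ in $\gS_{1,1}$ such that $\Gamma$ restricts to a real-analytic diffeomorphism from $\mathcal{O}$ onto $\Gamma(\mathcal{O})$, the latter being a relatively open subset of the analytic submanifold $\mathcal{K}_{N_{\rm f},N_{\rm p}}$ near $\gamma_0$. Shrinking $\mathcal{O}'$ if necessary so that $\mathcal{K}_{N_{\rm f},N_{\rm p}}\cap\mathcal{O}'=\Gamma(\mathcal{O})$, we obtain the claimed bijectivity. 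The one point that requires genuine care, and which I expect to be the main obstacle, is that $\mathcal{K}_{N_{\rm f},N_{\rm p}}$ is \emph{a priori} only a subset of $\gS_{1,1}$, not obviously a submanifold, so "$\mathcal{K}_{N_{\rm f},N_{\rm p}}\cap\mathcal{O}'$" has to be pinned down intrinsically: I would argue that any $\gamma\in\mathcal{K}_{N_{\rm f},N_{\rm p}}$ close to $\gamma_0$ has exactly $N_{\rm f}$ eigenvalues near $1$, exactly $N_{\rm f}$ near $0$, and exactly $N_{\rm p}$ in a small interval around each eigenvalue of $\Lambda$ (by continuity of spectral projections / Riesz-projection arguments, using $0<\Lambda<1$), and that the associated spectral subspaces are graphs over $\mathcal{H}_{\rm f},\mathcal{H}_{\rm u},\mathcal{H}_{\rm p}$ respectively; solving for these graphs and for the restriction of $\gamma$ to the "partially occupied" subspace recovers a unique $A=(A_{\rm uf},A_{\rm up},A_{\rm pf},A_{\rm pp})\in\mathcal{A}$ with $\Gamma(A)=\gamma$. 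This gives surjectivity onto $\mathcal{K}_{N_{\rm f},N_{\rm p}}\cap\mathcal{O}'$ and, together with the local diffeomorphism property, injectivity.

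Finally, I would verify the two regularity/bookkeeping points needed to make the above rigorous in the $\gS_{1,1}$ topology rather than merely $\gS_1$ or operator norm: first, that the weighted spaces $\mathcal{A}_{\rm ux}$ (carrying the $(H_0^{++}-\epsilon_{\rm F}^0)^{1/2}$ weight) are exactly what is needed so that $L_{\rm uo}(A)$ maps $H^1$-ranged operators to $H^1$-ranged operators and $\Gamma(A)\in\gS_{1,1}$ — this uses $g_+>0$ and $g_->0$, i.e. the spectral gaps, together with the fact that $\mathcal{H}_{\rm f}$ and $\mathcal{H}_{\rm p}$ are finite-dimensional with $H^2$ eigenfunctions; second, that $A_{\rm pp}$ is constrained to be traceless so that $\tr(\gamma_0+L_{\rm pp}(A))$, and hence $\tr(\Gamma(A))$ (conjugation by the exponentials being trace-preserving), stays equal to $N$. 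With these in place the inverse function theorem applies cleanly and the lemma follows. I would remark that the explicit triangular form of $d\Gamma(0)$ not only proves invertibility but will later give the zeroth-order term of the perturbation expansion of $\gamma_W$, so it is worth stating the block formulas precisely.
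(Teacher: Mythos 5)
Your overall plan --- compute $d\Gamma(0)$, show it is injective with block-structured image, and then identify $\Gamma(\cO)$ with $\cK_{N_{\rm f},N_{\rm p}}\cap\cO'$ by a spectral-subspace argument --- is a legitimate reorganization, and you have correctly located the main obstacle: $\cK_{N_{\rm f},N_{\rm p}}$ is not known in advance to be a submanifold, so the inverse function theorem cannot be applied off the shelf ($d\Gamma(0)$ is not onto $\gS_{1,1}$, and calling its image ``the tangent space of $\cK_{N_{\rm f},N_{\rm p}}$'' presupposes what is to be proved). The paper in fact dispenses with the IFT on $\Gamma$ entirely: injectivity is obtained by peeling off the three layers with functional calculus (apply $\1_{(0,1]}$ to $\Gamma(A)=\Gamma(A')$ to force $L_{\rm uo}(A)=L_{\rm uo}(A')$, then $\1_{\{1\}}$ to force $L_{\rm pf}(A)=L_{\rm pf}(A')$, then read off $L_{\rm pp}$), and surjectivity by reconstructing $A$ from the spectral projectors of a given $\gamma$ --- which is in substance your graph argument.

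The genuine gap is in that reconstruction step. Everything reduces to a statement you assert but do not prove: that $Z\mapsto e^{Z}P_0e^{-Z}$ is a bijection from a ball of the \emph{weighted} space $\cZ$ (with norm $\tr(Z_{\rm uo}^\ast(H_{\rm uu}-\epsilon_{\rm F}^0)Z_{\rm uo})$) onto a neighborhood of $P_0=\1_{(0,1]}(\gamma_0)$ in the infinite-dimensional Grassmannian $\cP_{N_{\rm o}}$ of rank-$N_{\rm o}$ projectors with $\tr(-\Delta P)<\infty$. Identifying ${\rm Ran}(\1_{(0,1]}(\gamma))$ as the graph of a small operator $T:\cH_{\rm o}\to\cH_{\rm u}$ in \emph{operator norm} is standard Riesz-projection theory, but it does not by itself produce a unique $Z$ that is small in the weighted $\cZ$-norm; you must show that $(H_0^{++}-\epsilon_{\rm F}^0)^{1/2}Z_{\rm uo}$ is controlled by the $\gS_{1,1}$-distance from $\gamma$ to $\gamma_0$, otherwise the recovered $A$ need not lie in $\cA$ and the chart is not onto. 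The paper isolates exactly this as Lemma~\ref{lem:loc_map} and proves it by invoking the Banach--Lie group structure of the unitary action on $\cP_{N_{\rm o}}$ from~\cite{ChiMel12}, together with the two-sided bound $c(1-\Delta)\le H_{\rm uu}-\epsilon_{\rm F}^0\le C(1-\Delta)$ on $\cH_{\rm u}$, which identifies $\cZ$ with the natural $H^1$-weighted space; without this (or an equivalent hand-made argument) your surjectivity step is an assertion, not a proof. Two smaller points: your block formulas are off --- $[L_{\rm uo}(A),\gamma_0]$ contributes $A_{\rm up}\Lambda$, not $A_{\rm up}(1-\Lambda)$, to the up block, and $[L_{\rm pf}(A),\gamma_0]$ contributes $(1-\Lambda)A_{\rm pf}$ with the factor on the left --- though your conclusion that injectivity of $d\Gamma(0)$ hinges on the invertibility of $\Lambda$ and $1-\Lambda$, i.e.\ on Assumption~(\ref{eq:hyp_frac}), is correct.
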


\medskip

In other words, the inverse of the above mapping is a local chart of ${\cal K}_{N_{\rm f},N_{\rm p}}$ in the vicinity of $\gamma_0$. Note that a similar, though not identical, parametrization of the finite-dimensional counterpart of ${\cal K}_{N_{\rm f},N_{\rm p}}$ obtained by discretization in atomic orbital basis sets, was used in~\cite{CanKud03} to design quadratically convergent self-consistent algorithms for the extended Kohn-Sham model.

\subsection{Existence and uniqueness of the minimizer of (\ref{eq:min_rHF}) for $W$ small enough}
\label{sec:minKNN}

We now define the energy functional 
\begin{equation}\label{eq:EAW}
E(A,W) := E^{\rm rHF}(\Gamma(A),W),
\end{equation}
for all $A \in {\cal O}$ and all $W \in \cC'$, which, in view of Lemma~\ref{lem:localmap} allows us to study the existence and uniqueness of local minimizers of (\ref{eq:min_rHF_KNN}) in the vicinity of $\gamma_0$ when $\|W\|_{\cC'}$ is small enough. The functional $E$ is clearly real analytic; we denote by 
\begin{equation}\label{eq:FAW}
F(A,W):= \nabla_A E(A,W),
\end{equation}
the gradient of $E$ with respect to $A$, evaluated at point $(A,W)$. As $\gamma_0$ is the unique minimizer of the functional $\gamma \mapsto E^{\rm rHF}(\gamma,0)$ on ${\cal K}_N$, hence on ${\cal K}_{N_{\rm f},N_{\rm p}}$, $0$ is the unique minimizer of the functional $A \mapsto E(A,0)$ on ${\cal O}$, so that
$$
F(0,0) = 0.
$$ 

\medskip

\begin{lemma}\label{lem:preIFT} Assume that (\ref{eq:NPC}), (\ref{eq:C2}), (\ref{eq:hyp_ND}) and (\ref{eq:hyp_frac}) are satisfied. Let 
$$
\Theta:= \frac 12 F'_A(0,0)|_{{\cal A}\times \left\{0\right\}}, 
$$
where $F'_A(0,0)|_{{\cal A}\times \left\{0\right\}}$ is the restriction to the subspace ${\cal A} \times \left\{0\right\} \equiv {\cal A}$ of ${\cal A} \times \cC'$ of the derivative of $F$ with respect to $A$ at $(0,0)$. The linear map $\Theta$ is a bicontinuous coercive isomorphism from~${\cal A}$ to its dual~${\cal A}'$.
\end{lemma}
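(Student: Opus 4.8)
The plan is to compute the quadratic form associated with $\Theta$ explicitly and show it is bounded above and below (coercive) with respect to the $\cA$-norm. Since $0$ minimizes $A \mapsto E(A,0)$ on $\cO$, the Hessian $F'_A(0,0)|_{\cA\times\{0\}}$ is automatically nonnegative, so the substance of the statement is the strict positivity (coercivity) together with the continuity estimates. I would first Taylor-expand $E(A,0) = E^{\rm rHF}(\Gamma(A),0)$ to second order in $A$ around $A=0$. Writing $\Gamma(A) = \gamma_0 + \gamma_0^{(1)}(A) + \gamma_0^{(2)}(A) + o(\|A\|^2)$, where $\gamma_0^{(1)}$ is linear and $\gamma_0^{(2)}$ quadratic in $A$ (these come from expanding the exponentials and the affine term $\gamma_0+L_{\rm pp}(A)$ in the definition of $\Gamma$), one gets
$$
E(A,0) = E^{\rm rHF}(\gamma_0,0) + \tr(H_0\,\gamma_0^{(1)}(A)) + \tr(H_0\,\gamma_0^{(2)}(A)) + \tfrac12 D(\rho_{\gamma_0^{(1)}(A)},\rho_{\gamma_0^{(1)}(A)}) + o(\|A\|^2).
$$
The linear term vanishes because $\gamma_0$ is a critical point (equivalently, $H_0$ is block-diagonal and $\gamma_0^{(1)}(A)$ has zero diagonal blocks except the ${\rm pp}$ block, which is traceless and on which $H_0$ acts as $\epsilon_{\rm F}^0$). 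The quadratic form $\langle \Theta A, A\rangle$ is then precisely $\tr(H_0\,\gamma_0^{(2)}(A)) + \tfrac12 D(\rho_{\gamma_0^{(1)}(A)},\rho_{\gamma_0^{(1)}(A)})$ after symmetrization.

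Next I would evaluate these two pieces block by block. From the formula for $\Gamma$ one finds (to leading order) $\gamma_0^{(1)}(A)$ has off-diagonal blocks built from $A_{\rm uf}$, $A_{\rm up}$, $A_{\rm pf}$ (coupling occupied/partially-occupied to unoccupied and f to p) plus the ${\rm pp}$ block $A_{\rm pp}$; and the curvature term $\tr(H_0\gamma_0^{(2)}(A))$ produces, after using $H_0^{--}-\epsilon_{\rm F}^0 \le -g_- < 0$ and $H_0^{++}-\epsilon_{\rm F}^0 \ge g_+ > 0$ and $0<\Lambda<1$ (Assumption~(\ref{eq:hyp_frac})), a sum of terms of the form
$$
\tr\!\big(A_{\rm uf}^\ast(H_0^{++}-\epsilon_{\rm F}^0)A_{\rm uf}\big) + \tr\!\big(A_{\rm up}^\ast(H_0^{++}-\epsilon_{\rm F}^0)A_{\rm up}(1-\Lambda)\big) + \tr\!\big(A_{\rm pf}^\ast A_{\rm pf}(\text{bounded below})\big) + (\text{p--p eigenvalue gap term}),
$$
each of which controls, respectively, $\|A_{\rm uf}\|_{\cA_{\rm uf}}^2$, $\|A_{\rm up}\|_{\cA_{\rm up}}^2$, $\|A_{\rm pf}\|^2$, and $\|A_{\rm pp}\|^2$ from below. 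The Coulomb term $\tfrac12 D(\rho_{\gamma_0^{(1)}(A)},\rho_{\gamma_0^{(1)}(A)})\ge 0$ is harmless for the lower bound since $D$ is a nonnegative quadratic form; for the upper (continuity) bound one uses $D(f,f)\lesssim \|f\|_{L^{6/5}}^2$ together with the fact that the density of a finite-rank $\gS_{1,1}$ operator is controlled in $L^{6/5}$ by its $\gS_{1,1}$-norm, and $\|\gamma_0^{(1)}(A)\|_{\gS_{1,1}} \lesssim \|A\|_\cA$ because $(H_0^{++}-\epsilon_{\rm F}^0)^{1/2}$ dominates $|\nabla|$ on $\cH_{\rm u}$. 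Assembling these estimates yields $c\|A\|_\cA^2 \le \langle\Theta A,A\rangle \le C\|A\|_\cA^2$ with $c>0$, which is coercivity and continuity of the self-adjoint operator $\Theta$; Lax--Milgram then gives that $\Theta:\cA\to\cA'$ is a bicontinuous isomorphism.

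The main obstacle I anticipate is the bookkeeping of the second-order block expansion of $\Gamma(A)$ — in particular correctly tracking the $\Lambda$ and $1-\Lambda$ weights that appear in the ${\rm up}$ and ${\rm pp}$ contributions, and verifying that the p--p eigenvalue-gap contribution to the Hessian is genuinely positive on the traceless symmetric space $\cA_{\rm pp}$. Here one must be careful: if two partially-occupied orbitals had the same occupation number $\lambda$, the naive second-order term in the ${\rm pp}$ block could degenerate, and positivity is recovered only because $\epsilon_{\rm F}^0$ appears with the \emph{same} value for all of $\cH_{\rm p}$ so the dangerous terms cancel, leaving a strictly positive contribution coming from the interplay of $\Lambda$ with the occupied/unoccupied gaps together with the Coulomb term controlled via Assumption~(\ref{eq:hyp_ND}) (which guarantees the map $A_{\rm pp}\mapsto \rho_{\gamma_0^{(1)}(A)}$ restricted to the p--p sector is injective, so $D$ is strictly positive there). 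Making this last point rigorous — i.e. that $\Theta$ is coercive on the $\cA_{\rm pp}$ factor and not merely nonnegative — is where Assumption~(\ref{eq:hyp_ND}) enters essentially, and it is the one place the argument is not a routine gap estimate.
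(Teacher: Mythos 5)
Your proposal follows essentially the same route as the paper: expand $E(A,0)=E^{\rm rHF}(\Gamma(A),0)$ to second order, identify $\langle\Theta A,A\rangle=\tr(H_0\gamma_2(A,A))+\tfrac12 D(\rho_{\gamma_1(A)},\rho_{\gamma_1(A)})$, obtain continuity and positivity block by block from the gaps $g_\pm$ and the occupation matrix $\Lambda$, and invoke Assumption~(\ref{eq:hyp_ND}) to get strict control of the ${\rm pp}$ sector through the Coulomb term. Three points need correcting or completing. First, your initial list of curvature terms includes a ``p--p eigenvalue gap term'' said to control $\|A_{\rm pp}\|^2$ from below; no such term exists. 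On $\cH_{\rm p}$ the operator $H_0$ acts as the scalar $\epsilon_{\rm F}^0$ and $A_{\rm pp}$ is traceless, so the contribution of $A_{\rm pp}$ to $\tr(H_0\gamma_2(A,A))$ vanishes identically -- the \emph{only} control of $A_{\rm pp}$ comes from $\tfrac12\|\rho_{\gamma_1(A)}\|_{\cC}^2$ together with~(\ref{eq:hyp_ND}), as you in fact acknowledge in your final paragraph; the earlier claim should simply be deleted. Second, injectivity of $M\mapsto\sum_{ij}M_{ij}\phi^0_{N_{\rm f}+i}\phi^0_{N_{\rm f}+j}$ gives a \emph{quantitative} lower bound only because $\cA_{\rm pp}$ is finite dimensional; the paper makes this precise by a compactness/contradiction argument (a normalized sequence $A_k$ with $\langle\Theta A_k,A_k\rangle\to0$ forces the ${\rm uf}$, ${\rm up}$, ${\rm pf}$ parts and the full density to vanish in the limit, hence $\|M_k\|_2\to1$ while $\sum(M_k)_{ij}\phi^0_{N_{\rm f}+i}\phi^0_{N_{\rm f}+j}\to0$ in $\cC$, contradicting~(\ref{eq:hyp_ND}) after extracting a convergent subsequence); your sketch should state this step, since one must also separate the ${\rm pp}$ contribution to $\rho_{\gamma_1(A)}$ from the cross contributions of the other blocks. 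Third, a bookkeeping slip: the weight on the ${\rm up}$ block is $\Lambda$ (and on the ${\rm pf}$ block $1-\Lambda$), not the reverse; both are positive definite precisely because of~(\ref{eq:hyp_frac}), so the conclusion is unaffected, but the roles of the two conditions ${\rm Rank}(\delta_0)=N_{\rm p}$ and ${\rm Ker}(1-\delta_0)=\{0\}$ are swapped in your version.
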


\medskip

We infer from Lemma~\ref{lem:preIFT} and the real analytic version of the implicit function theorem that for $W \in {\cal C}'$ small enough, the equation $F(A,W)=0$ has a unique solution $\widetilde A(W)$ in ${\cal O}$, and that the function $W \mapsto \widetilde A(W)$ is real analytic in the neighborhood of $0$. It readily follows from (\ref{eq:FAW}) and Lemma~\ref{lem:localmap} that for $W \in {\cal C}'$ small enough,
\begin{equation}\label{eq:def_gamma_W}
\gamma_W:=\Gamma(\widetilde A(W))
\end{equation}
is the unique critical point of (\ref{eq:min_rHF_KNN}) in the vicinity of $\gamma_0$. This critical point is in fact a local minimizer since $\Theta$, which is in fact the second derivative of the energy functional $A \mapsto E(A,0)$, is coercive. We have actually the following much stronger result.

\medskip

\begin{lemma}\label{lem:min_KNN=min_K} Assume that (\ref{eq:NPC}), (\ref{eq:C2}), (\ref{eq:hyp_ND}) and (\ref{eq:hyp_frac}) are satisfied. Then, for $\|W\|_{\cC'}$ small enough, the density matrix $\gamma_W$ defined by (\ref{eq:def_gamma_W}) is the unique global minimizer of (\ref{eq:min_rHF}).
\end{lemma}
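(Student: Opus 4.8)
The plan is to show that the local minimizer $\gamma_W \in \cK_{N_{\rm f},N_{\rm p}}$ constructed via the implicit function theorem is in fact a global minimizer of $(\ref{eq:min_rHF})$ over the full set $\cK_N$. The strategy is a compactness/contradiction argument combined with the fact that $\gamma_0$ is the \emph{unique} global minimizer of the unperturbed problem over $\cK_N$ (Theorem~\ref{th:GS_rHF} together with the uniqueness guaranteed by Lemma~\ref{lem:uniqueness} under hypothesis~(\ref{eq:hyp_ND})). First I would argue that $\cE^{\rm rHF}(W) \to \cE^{\rm rHF}(0) = E_0$ as $\|W\|_{\cC'} \to 0$: the upper bound $\cE^{\rm rHF}(W) \le E^{\rm rHF}(\gamma_0,W) = E_0 + \int \rho_0 W \to E_0$ is immediate, and the lower bound follows from the a priori coercivity of $E^{\rm rHF}(\cdot,0)$ on $\cK_N$ together with the uniform bound $|\int \rho_\gamma W| \le \|\rho_\gamma\|_{\cC}\|W\|_{\cC'} \le C\, \tr(-\Delta\gamma)^{1/2}\|W\|_{\cC'}$, which can be absorbed for $\|W\|_{\cC'}$ small.

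Next I would establish that any global minimizer of $(\ref{eq:min_rHF})$ exists and converges to $\gamma_0$ as $W \to 0$. Existence for $\|W\|_{\cC'}$ small follows by the standard direct method in the rHF setting (the binding/HVZ-type argument of~\cite{Sol91} applies since $W \in \cC'$ is a relatively compact perturbation and~(\ref{eq:NPC}) still holds). For the convergence, take any sequence $W_n \to 0$ in $\cC'$ and corresponding global minimizers $\gamma_{W_n}^{\rm glob}$; the uniform kinetic-energy bound gives, up to extraction, weak-$\ast$ convergence in $\gS_{1,1}$ (and strong convergence of the densities in $L^p_{\rm loc}$, $p<3$, together with convergence in $\cC$ by the finite-Coulomb-energy control) to some $\gamma_\infty \in \cK_N$. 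Passing to the limit in the energy and using the convergence $\cE^{\rm rHF}(W_n) \to E_0$ from the previous step, one gets $E^{\rm rHF}(\gamma_\infty,0) \le E_0$, hence $\gamma_\infty$ is a global minimizer of the unperturbed problem; by uniqueness $\gamma_\infty = \gamma_0$, and since the limit is independent of the extracted subsequence, $\gamma_{W_n}^{\rm glob} \to \gamma_0$ in $\gS_{1,1}$. One also checks no mass is lost at infinity using $Z \ge N$.

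It remains to identify $\gamma_W^{\rm glob}$ with the local minimizer $\gamma_W = \Gamma(\widetilde A(W))$. Since $\gamma_{W}^{\rm glob} \to \gamma_0$ in $\gS_{1,1}$, for $\|W\|_{\cC'}$ small enough $\gamma_W^{\rm glob}$ lies in the neighborhood $\cO'$ of $\gamma_0$ from Lemma~\ref{lem:localmap}; moreover, by the spectral characterization of rHF minimizers (they are of the form $\1_{(-\infty,\epsilon_{\rm F})}(H_{\gamma_W^{\rm glob}}) + \delta$ with $0\le\delta\le1$ supported on $\mathrm{Ker}(H-\epsilon_{\rm F})$), the perturbed Fermi level stays close to $\epsilon_{\rm F}^0 < 0$ and, by continuity of the spectrum of $H_{\gamma_W^{\rm glob}}$, exactly $N_{\rm f}$ eigenvalues lie below it and exactly $N_{\rm p}$ at it — so $\gamma_W^{\rm glob} \in \cK_{N_{\rm f},N_{\rm p}}$ and assumption~(\ref{eq:hyp_frac}) persists. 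Hence $\gamma_W^{\rm glob} \in \cK_{N_{\rm f},N_{\rm p}} \cap \cO'$, which by Lemma~\ref{lem:localmap} equals $\Gamma(\cO)$, so $\gamma_W^{\rm glob} = \Gamma(A)$ for some $A \in \cO$; being a minimizer it is a critical point, i.e. $F(A,W)=0$, and by the uniqueness part of the implicit function theorem $A = \widetilde A(W)$, giving $\gamma_W^{\rm glob} = \gamma_W$. Finally, uniqueness of the global minimizer follows: any two global minimizers are both $\gamma_W$ by the above. The main obstacle is the convergence $\gamma_{W_n}^{\rm glob} \to \gamma_0$ — more precisely ruling out loss of compactness (vanishing/splitting of electronic mass to infinity), which requires re-running the concentration-compactness argument of~\cite{Sol91} with the extra potential $W$, uniformly in $n$; once compactness is in hand, the identification step is essentially bookkeeping with the local chart and the spectral structure of the Euler-Lagrange equation.
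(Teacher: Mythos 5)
Your overall strategy --- produce a global minimizer, show it converges to $\gamma_0$ as $W\to 0$, and then identify it with the local minimizer through the chart of Lemma~\ref{lem:localmap} and the uniqueness part of the implicit function theorem --- is coherent, and the identification step is a legitimate alternative to what the paper does. But there is a genuine gap at the point you yourself flag as ``the main obstacle'': the existence of a global minimizer of (\ref{eq:min_rHF}) for $W \neq 0$ and the compactness of its minimizing sequences are simply asserted (``the binding/HVZ-type argument of~\cite{Sol91} applies'', ``one also checks no mass is lost at infinity''), and re-running the concentration-compactness analysis of~\cite{Sol91} with the extra potential $W$, uniformly in $W$, is a substantial piece of work that your proposal does not carry out. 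The paper sidesteps it entirely: it relaxes the constraint to $\tr(\gamma)\le N$ (problem (\ref{eq:relaxed})), for which minimizing sequences are automatically weak-$\ast$ compact with feasible limits; it then derives from the Euler inequality (\ref{eq:EulerKleN}) of the \emph{unperturbed} relaxed problem the quantitative bound $\|\rho_W-\rho_0\|_{\cC}\le 2\|W\|_{\cC'}$, which via (\ref{eq:norm_vW}) and Lemma~\ref{lem:stability} shows that $H_W$ still has at least $N$ negative eigenvalues, so that the relaxed minimizer saturates the constraint. Note that this quantitative density estimate is also exactly what your argument is missing when you invoke ``continuity of the spectrum of $H_{\gamma_W^{\rm glob}}$'': the weak convergence of the densities in $\cC$ that boundedness gives you does not control $\|v_W\|_{\cC'}$, and you need the strong $\cC$-estimate to apply the spectral stability lemma and conclude that the global minimizer lies in $\cK_{N_{\rm f},N_{\rm p}}$ with occupation numbers strictly between $0$ and $1$.

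The second half of your argument also runs in the opposite direction from the paper's, and it is worth recording the difference. You take the global minimizer, show it is close to $\gamma_0$, place it in the chart, and invoke the uniqueness part of the implicit function theorem. The paper instead shows that the \emph{local} minimizer $\gamma_W=\Gamma(\widetilde A(W))$ satisfies the first- and second-order optimality conditions in a chart centred at $\gamma_W$ itself, recognizes these as the Euler conditions $\gamma_W=\1_{(-\infty,\epsilon)}(\widetilde H)+\delta_W$ of the full convex problem (\ref{eq:min_rHF}), and concludes global minimality directly from convexity; uniqueness then follows from the convexity of the set of minimizers (all sharing the density $\rho_W$) and a convex-combination argument that reduces to the local uniqueness near $\gamma_0$. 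Both identification routes are viable, but the paper's exploits the convex structure so that it never needs to know a priori that a global minimizer sits near $\gamma_0$ --- which is precisely the compactness information your proposal leaves unproved.
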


\medskip

We conclude this section by providing the explicit form of $\Theta$, which is useful to prove Lemma~\ref{lem:preIFT}, but also to compute the Rayleigh-Schr\"odinger expansion of $\gamma_W$:
\begin{eqnarray*} \, 
[\Theta(A)]_{\rm uf} & = & - A_{\rm uf} (H_0^{--}-\epsilon_{\rm F}^0)+ (H_0^{++}-\epsilon_{\rm F}^0) A_{\rm uf} +\frac{1}{2}[{\cal J}(A)]_{\rm uf}, \\ \, 
[\Theta(A)]_{\rm up} & = & (H_0^{++}-\epsilon_{\rm F}^0)A_{\rm up}\Lambda+\frac{1}{2} [{\cal J}(A)]_{\rm up},  \\ \,
[\Theta(A)]_{\rm pf} & = & -(1-\Lambda)A_{\rm pf}\left(H_0^{--}-\epsilon_{\rm F}^0\right)+\frac{1}{2}[{\cal J}(A)]_{\rm pf}, \\
\,  [ \Theta (A) ]_{\rm pp} & = & \frac 12 [{\cal J}(A)]_{\rm pp},
\end{eqnarray*}
${\cal J}$ denoting the linear operator from ${\cal A}$ to ${\cal A}'$ defined by 
$$
\forall (A,A') \in {\cal A} \times {\cal A}, \quad \langle{\cal J}(A),A'\rangle = D(\rho_{\gamma_1(A)}, \rho_{\gamma_1(A')}), 
$$
where
\begin{equation}\label{eq:gamma1A}
\gamma_1(A)=\langle \Gamma'(0), A \rangle = \left[ L_{\rm uo}(A)+L_{\rm pf}(A),\gamma_0\right] + L_{\rm pp}(A).
\end{equation}
A key observation for the sequel is that  \begin{equation}\label{eq:trH0g1}
\forall A \in \cA, \quad \tr\left(H_0\gamma_1(A)\right)=0.
\end{equation}

\subsection{Rayleigh-Schr\"odinger expansions}
\label{sec:RSW}

It immediately follows from the previous two sections that, for any $W \in \cC'$, the functions $\beta \mapsto A_W(\beta):=\widetilde A(\beta W)$ and $\beta \mapsto \gamma_{\beta W}:=\Gamma(\widetilde A(\beta W))$ are well-defined and real analytic in the vicinity of $0$. The purpose of this section is to provide a method to compute the coefficients $A^{(k)}_W$, $\gamma^{(k)}_W$ and $\cE^{(k)}_W$ of the expansions
$$
A_W(\beta) = \sum_{k=1}^{+\infty} \beta^k A_W^{(k)}, \quad \gamma_{\beta W} = \gamma_0 + \sum_{k=1}^{+\infty} \beta^k \gamma^{(k)}_W \quad \mbox{and} \quad  \cE^{\rm rHF}(\beta W) = \cE^{\rm rHF}(0) + \sum_{k=1}^{+\infty} \beta^k \cE^{(k)}_W.
$$
We can already notice that the coefficients $\gamma^{(k)}_W$ and $\cE^{(k)}_W$ are easily deduced from the coefficients $A^{(k)}_W$. Using the following version of the Baker-Campbell-Hausdorff formula
$$
e^X Y e^{-X} = Y + [X,Y]+ \frac{1}{2!} [X,[X,Y]]+ \frac{1}{3!} [X,[X,[X,Y]]] + ...,
$$
we indeed obtain 
\begin{equation}\label{eq:gammak}
 \gamma^{(k)}_W=\sum_{1\leq l\leq k} \;\;\; \sum_{\alpha \in (\N^\ast)^l \, | \, |\alpha|_1=k} \gamma_{W,l}^{\alpha} \qquad \mbox{with} \qquad \gamma_{W,l}^{\alpha} =
\gamma_l(A_W^{(\alpha_1)}, \cdots, A_W^{(\alpha_l)}),
\end{equation}
where for all $\alpha=(\alpha_1,\cdots,\alpha_l) \in (\N^\ast)^l$, $|\alpha|_1=\alpha_1+\cdots+\alpha_l$, $|\alpha|_\infty=\max(\alpha_i)$, and 
\begin{eqnarray*}
\gamma_l(A_1, \cdots, A_l)
&=& \sum_{i+j=l} \frac{1}{i!j!}[L_{\rm uo}(A_{1}),\cdots,[L_{\rm uo}(A_{i}),[L_{\rm pf}(A_{{i+1}}),\cdots,[L_{\rm pf}(A_{l}),\gamma_0]\cdots] \\
            & + & \sum_{i+j= l-1}\frac{1}{i!j!} [L_{\rm uo}(A_{1}),...,[L_{\rm uo}(A_{i}),[L_{\rm pf}(A_{{i+1}}),\cdots,[L_{\rm pf}(A_{{l-1}}),L_{\rm pp}(A_{l})]\cdots],
\end{eqnarray*}
for all $(A_1,\cdots,A_l) \in {\cal A}^{l}$. Note that for $l=1$, the above definition agrees with (\ref{eq:gamma1A}), and that, more generally,
\begin{equation}\label{eq:expGA}
\forall A \in {\cal A}, \quad \Gamma(A) = \gamma_0+\sum_{l=1}^{+\infty} \gamma_l(A,\cdots,A).
\end{equation}

\medskip

\noindent
It follows from (\ref{eq:trH0g1}) and (\ref{eq:gammak}) that 
\begin{equation}\label{eq:EW1}
{\cal E}_W^{(1)} = \int_{\R^3} \rho_{\gamma_0} W,
\end{equation}
and that for all $k \ge 2$,
\begin{equation}\label{eq:EWk}
{\cal E}_W^{(k)} =  \tr\left(  -\frac 12 \Delta \gamma_W^{(k)} \right) + \int_{\R^3} \rho_{\gamma_W^{(k)}} V+ \frac 12 \sum_{l=0}^k D\left( \rho_{\gamma_W^{(l)}}, \rho_{\gamma_W^{(k-l)}}\right) + \int_{\R^3} \rho_{\gamma_W^{(k-1)}} W  
\end{equation}
We will see however that the above formula is far from being optimal, in the sense that ${\cal E}_W^{(k)}$ can be computed using the coefficients $A_W^{(j)}$ for $1 \le j \le k/2$ only (see formulation (\ref{eq:EWk1}) of Wigner's $(2n+1)$-rule), whereas the direct evaluation of ${\cal E}_W^{(k)}$ based on (\ref{eq:gammak}) and (\ref{eq:EWk}) requires the knowledge of the $A_W^{(j)}$'s up to $j=k$.

\subsection{Main results for the degenerate case}

The following theorem collects the results obtained so far, and provides a systematic way to construct the $A_W^{(k)}$'s, as well as an extension to Wigner's $(2n+1)$-rule to the degenerate case. 

\medskip

\begin{theorem}\label{th:rHF} Assume that (\ref{eq:NPC}), (\ref{eq:C2}), (\ref{eq:hyp_ND}) and (\ref{eq:hyp_frac}) are satisfied. Then there exists $\eta > 0$, such that 
  \begin{enumerate}
  \item {\rm existence and uniqueness of the ground state:} for all $W \in B_\eta(\cC')$, the rHF model (\ref{eq:min_rHF}) has a unique ground state $\gamma_W$; 
  \item {\rm no energy level splitting at the Fermi level:} the mean-field Hamiltonian 
  $$
  H_W = -\frac 12 \Delta + V + \rho_W \star |\cdot|^{-1}+W
  $$
(where $\rho_W$ is the density of $\gamma_W$) has at least $N_{\rm o}=N_{\rm f}+N_{\rm p}$ negative eigenvalues (counting multiplicities), the degeneracy of the $(N_{\rm f}+1)^{\rm st}$ eigenvalue, which is also the Fermi level $\epsilon_{\rm F}^W$ of the system, being equal to $N_{\rm p}$, and it holds
$$
\gamma_W = \1_{(-\infty,\epsilon_{\rm F}^W)}(H_W)+\delta_W,
$$
where $0 \le \delta_W \le 1$ is an operator such that ${\rm Ran}(\delta_W) \subset {\rm Ker}(H_W-\epsilon_{\rm F}^W)$ with maximal rank $N_{\rm p}$;
  \item {\rm analyticity of the ground state:} the functions $W \mapsto \gamma_W$ and $W \mapsto \cE^{\rm rHF}(W)$ are real analytic from $B_\eta(\cC')$ to $\gS_{1,1}$ and $\R$ respectively. 
  For all $W \in \cC'$ and all $-\eta \|W\|_{\cC'}^{-1} < \beta < \eta \|W\|_{\cC'}^{-1} $,
  $$
  \gamma_{\beta W} = \gamma_0 + \sum_{k=1}^{+\infty} \beta^k \gamma^{(k)}_W, \quad \cE^{\rm rHF}(\beta W) = \cE^{\rm rHF}(0) + \sum_{k=1}^{+\infty} \beta^k \cE^{(k)}_W,
  $$
  the series being normally convergent in $\gS_{1,1}$ and $\R$ respectively;
  \item {\rm Rayleigh-Schr\"odinger expansions:} the coefficients $\gamma_W^{(k)}$  are given  by (\ref{eq:gammak}), where the $A_W^{(k)}$'s are obtained recursively by solving the well-posed linear problem in ${\cal A}$
\begin{equation}\label{eq:linsysA}
\Theta (A_W^{(k)})= -\frac{1}{2} B_W^{(k)},
\end{equation}
where the $B_W^{(k)}$'s are defined by
\begin{equation}\label{eq:formulaB1}
\forall  A \in {\cal A}, \quad \langle B_W^{(1)},A \rangle = \int_{\R^3}\rho_{\gamma_1(A)}W,
\end{equation}
and for all $k \ge 2$ and all $A \in {\cal A}$,
\begin{eqnarray}\label{eq:formulaBk}
&&  \!\!\!\!\!\!\!\!\!\!\!\!
 \langle B_W^{(k)},A \rangle=\sum_{l=3}^{k+1} \;\;\;  \sum_{\underset{|\alpha|_1=k, \, |\alpha|_\infty\le k-1}{\alpha \in (\N^\ast)^{l-1}}} \sum_{i=1}^l \tr\left(H_0\gamma_l(\tau_{(i,l)}(A_W^{(\alpha_1)},\cdots,A_W^{(\alpha_{l-1})},A))\right)  \nonumber \\
 && \!\!\!\!\!\!\!\!\!\!\!\!+ \sum_{\underset{l\geq 1,\;l'\geq 1}{3 \le l+l' \le k+1}} \;\;\; \sum_{\underset{|\alpha|_1+|\alpha'|_1=k, \, \max(|\alpha|_\infty,|\alpha'|_\infty)\le k-1}{\alpha\in(\N^\ast)^l,\, \alpha'\in(\N^\ast)^{l'-1}}} \sum_{i=1}^{l'} D\left(\rho_{\gamma_{W,l}^{\alpha}},\rho_{\gamma_{l'}(\tau_{(i,l')}(A_W^{(\alpha'_1)},\cdots,A_W^{(\alpha'_{l'-1})},A))}\right) \nonumber \\ 
&& \!\!\!\!\!\!\!\!\!\!\!\! +\sum_{l=2}^{k}  \;\;\; \sum_{\underset{|\alpha|_1=k-1, \, |\alpha|_\infty\le k-2}{\alpha \in (\N^\ast)^{l-1}}} \,\sum_{i=1}^l \int_{\R^3}\rho_{\gamma_l(\tau_{(i,l)}(A_W^{(\alpha_1)},\cdots,A_W^{(\alpha_{l-1})},A))}W; 
\end{eqnarray}
 where $\tau_{(i,j)}$ is the transposition swapping the $i^{\rm th}$ and $j^{\rm th}$ terms (by convention $\tau_{(i,i)}$ is the identity);\item first formulation of Wigner's ($2n+1$)-rule:  for all $n\in \N$, and all $\epsilon \in \left\{0,1\right\}$,
\begin{eqnarray}
{\cal E}_W^{(2n+\epsilon)} &=&   \sum_{2\leq l\leq 2n+\epsilon} \;\;\;
    \sum_{\alpha \in (\N^\ast)^l \, | \, |\alpha|_1=2n+\epsilon,\, |\alpha|_\infty\leq n} \tr(H_0 \gamma_{W,l}^\alpha) \nonumber \\
&+& \frac{1}{2} \sum_{\underset{l,l'\geq 1}{2 \le l+l'\le 2n+\epsilon}} \;\;\; \sum_{\underset{\max(|\alpha|_\infty,\,|\alpha'|_\infty)\leq n}{\alpha \in (\N^\ast)^l, \, \alpha' \in (\N^\ast)^{l'} \, | \, |\alpha|_1+|\alpha'|_1 = 2n+\epsilon}} D\left(\rho_{\gamma_{W,l}^{\alpha}},\rho_{\gamma_{W,l'}^{\alpha'}}\right) \nonumber \\
    & +& \sum_{1\leq l\leq 2n+\epsilon-1}\;\;\;  \sum_{\alpha \in (\N^\ast)^l \, | \, |\alpha|_1 = 2n+\epsilon-1,\, |\alpha|_\infty\leq n} \int_{\R^3}\rho_{\gamma_{W,l}^{\alpha}} W; \label{eq:EWk1}
\end{eqnarray}

\item second formulation of Wigner's ($2n+1$)-rule: it holds
\begin{equation}\label{eq:Wigner_degenerate}
0 \le E^{\rm rHF}\left( \Gamma\left( \sum_{k=1}^n \beta^k A_W^{(k)} \right),W \right) - {\cal E}^{\rm rHF}(\beta W) = {\cal O}(|\beta|^{2n+2}).
\end{equation}
\end{enumerate}
\end{theorem}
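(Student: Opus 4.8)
The plan is to assemble the theorem from the three preceding lemmas and then supply the spectral and combinatorial arguments that remain. Items~1 and~3 follow almost directly: since $F(0,0)=0$ by (\ref{eq:FAW}) and $\frac12 F'_A(0,0)|_{\cA}=\Theta$ is a bicontinuous coercive isomorphism by Lemma~\ref{lem:preIFT}, the real analytic implicit function theorem yields, for $\|W\|_{\cC'}$ small, a unique zero $\widetilde A(W)$ of $A\mapsto F(A,W)$ in $\cO$, depending real-analytically on $W$; coercivity of $\Theta$ (the second derivative of $A\mapsto E(A,0)$) makes it a local minimizer, and Lemma~\ref{lem:min_KNN=min_K} upgrades $\gamma_W=\Gamma(\widetilde A(W))$ to the unique global minimizer of (\ref{eq:min_rHF}), proving~1. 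Real analyticity of $W\mapsto\gamma_W$ into $\gS_{1,1}$ and of $W\mapsto\cE^{\rm rHF}(W)=E(\widetilde A(W),W)$ into $\R$ then follows by composing the analytic maps $\widetilde A$, $\Gamma$ and $E$, the normal convergence of the $\beta$-series on a ball being the standard consequence of analyticity; this gives~3.

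For item~2 I would first invoke the variational characterization of rHF minimizers, exactly as in Theorem~\ref{th:GS_rHF}: minimizing the linear functional $\gamma\mapsto\tr(H_W\gamma)$ over $\cK_N$ at the self-consistent density forces the Aufbau form $\gamma_W=\1_{(-\infty,\epsilon_{\rm F}^W)}(H_W)+\delta_W$ with $0\le\delta_W\le1$ and $\mbox{Ran}(\delta_W)\subset\mbox{Ker}(H_W-\epsilon_{\rm F}^W)$. The decisive extra ingredient is that $\gamma_W\in\cK_{N_{\rm f},N_{\rm p}}$: by Lemma~\ref{lem:localmap} the image of $\Gamma$ lies in $\cK_{N_{\rm f},N_{\rm p}}$, so $\gamma_W$ has exactly $N_{\rm f}$ occupation numbers equal to $1$ and exactly $N_{\rm p}$ fractional ones, all strictly between $0$ and $1$ for $W$ small by continuity of the conditions (\ref{eq:hyp_frac}). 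The first count fixes the number of eigenvalues of $H_W$ below $\epsilon_{\rm F}^W$ to be $N_{\rm f}$; the second forces $\dim\mbox{Ker}(H_W-\epsilon_{\rm F}^W)\ge N_{\rm p}$. On the other hand, since $\epsilon_{\rm F}^0$ is an isolated eigenvalue of $H_0$ of multiplicity $N_{\rm p}$ separated by the gaps $g_-,g_+$, resolvent perturbation theory shows that for $\|W\|_{\cC'}$ small $H_W$ has exactly $N_{\rm p}$ eigenvalues (with multiplicity) in a fixed neighborhood of $\epsilon_{\rm F}^0$, and $\epsilon_{\rm F}^W$ stays in this neighborhood because the fractionally occupied states perturb from those of $\gamma_0$ at $\epsilon_{\rm F}^0$; hence $\dim\mbox{Ker}(H_W-\epsilon_{\rm F}^W)\le N_{\rm p}$. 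The two bounds give equality: the degenerate level does not split and stays $N_{\rm p}$-fold, all the relevant eigenvalues being negative since $\epsilon_{\rm F}^W$ is close to $\epsilon_{\rm F}^0<0$, which is the asserted spectral picture.

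For items~4 and~5 the plan is a bookkeeping of the stationarity equation and of the energy along the minimizing path. Inserting $A_W(\beta)=\sum_{k\ge1}\beta^k A_W^{(k)}$ into $F(A_W(\beta),\beta W)=0$ and expanding $E(A,W)=E^{\rm rHF}(\Gamma(A),W)$ via the BCH expansion (\ref{eq:expGA}) of $\Gamma$, I would collect powers of $\beta$. The order-$\beta$ identity reads $2\Theta(A_W^{(1)})=-B_W^{(1)}$ with $\langle B_W^{(1)},A\rangle=\int_{\R^3}\rho_{\gamma_1(A)}W$, matching (\ref{eq:formulaB1}); at order $k$ the only contribution linear in $A_W^{(k)}$ is $F'_A(0,0)[A_W^{(k)}]=2\Theta(A_W^{(k)})$, since every other monomial of $F$ pairs $A_W^{(k)}$ with further perturbative factors of positive order, so isolating it leaves (\ref{eq:linsysA}) with $B_W^{(k)}$ collecting all lower-order contributions (those with $|\alpha|_\infty\le k-1$), which one reorganizes into (\ref{eq:formulaBk}) using the multilinearity and symmetry of the $\gamma_l$, the transpositions $\tau_{(i,l)}$ implementing the product rule, and the cancellation (\ref{eq:trH0g1}). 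Invertibility of $\Theta$ makes the recursion well-posed, proving~4. For (\ref{eq:EWk1}) I would expand $\cE^{\rm rHF}(\beta W)=E(A_W(\beta),\beta W)$ and exploit the stationarity $\partial_A E(A_W(\beta),\beta W)=0$ to eliminate the dependence on the high-order coefficients; this is the algebraic heart of the $(2n+1)$ rule and is what reduces the admissible multi-indices to those with $|\alpha|_\infty\le n$, yielding exactly (\ref{eq:EWk1}).

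Item~6 then follows cleanly from stationarity together with global minimality. Setting $\widetilde A=\sum_{k=1}^n\beta^k A_W^{(k)}$, analyticity of $A_W(\beta)$ gives $A_W(\beta)-\widetilde A=\cO(\beta^{n+1})$, and since $A_W(\beta)$ is a critical point of $A\mapsto E(A,\beta W)$ the first-order term of the Taylor expansion about $A_W(\beta)$ vanishes, so a second-order estimate gives $E(\widetilde A,\beta W)-\cE^{\rm rHF}(\beta W)=\cO(\beta^{2n+2})$, the right-hand equality in (\ref{eq:Wigner_degenerate}). Nonnegativity holds because, for $\beta$ small, $\widetilde A\in\cO$, so $\Gamma(\widetilde A)\in\cK_{N_{\rm f},N_{\rm p}}\subset\cK_N$ is admissible and $E^{\rm rHF}(\Gamma(\widetilde A),\beta W)\ge\cE^{\rm rHF}(\beta W)$. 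I expect the main obstacle to lie in item~5 and the explicit formula (\ref{eq:formulaBk}): the heavy combinatorics of matching multilinear BCH terms order by order, and the systematic use of stationarity to show that only the coefficients $A_W^{(j)}$ with $j\le n$ survive in the $(2n+\epsilon)$-th energy coefficient, are the delicate accounting. By contrast, the spectral no-splitting statement of item~2, although it is the conceptually novel point, is comparatively short once the constraint $\gamma_W\in\cK_{N_{\rm f},N_{\rm p}}$ and the resolvent eigenvalue count are in hand.
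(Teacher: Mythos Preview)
Your proposal is correct and follows essentially the same route as the paper: items~1 and~3 via the implicit function theorem, Lemma~\ref{lem:preIFT} and Lemma~\ref{lem:min_KNN=min_K}; item~4 by expanding $F(A_W(\beta),\beta W)=0$ in powers of $\beta$; item~5 by expanding $E(A_W(\beta),\beta W)$ and using the stationarity equations $2\Theta(A_W^{(j)})+B_W^{(j)}=0$ to kill the ``high-order'' remainder (what the paper writes as $J_{2n+\epsilon}=\sum_k\langle 2\Theta(A_W^{(2n+\epsilon-k)})+B_W^{(2n+\epsilon-k)},A_W^{(k)}\rangle=0$).

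There are two minor differences worth noting. For item~2, the paper does not invoke Theorem~\ref{th:GS_rHF} but instead extracts the Aufbau form (\ref{eq:EEdeg}) directly from the first- and second-order optimality conditions in the local chart $\Gamma^W$ built around $\gamma_W$ (this is done inside the proof of Lemma~\ref{lem:min_KNN=min_K}); the exact multiplicity $N_{\rm p}$ then follows from $\gamma_W\in\cK_{N_{\rm f},N_{\rm p}}$ together with the eigenvalue count of Lemma~\ref{lem:stability}, which is the resolvent-perturbation input you describe. For item~6, the paper refers to ``reasoning as in the proof of Theorem~\ref{th:Wigner_non_degenerate}'', which relies on Lemma~\ref{lem:upper_bound} and the identity $\tr(HQ)=\tr(|H-\epsilon_{\rm F}|Q^2)$ for a difference of orthogonal projectors; since $\gamma_{\beta W}$ is not a projector in the degenerate case, that argument needs adaptation. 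Your Taylor expansion of $A\mapsto E(A,\beta W)$ about the critical point $A_W(\beta)$ in the $\cA$-chart, with the first-order term vanishing by stationarity and a uniform Hessian bound coming from analyticity near $(0,0)$, is a cleaner substitute and gives (\ref{eq:Wigner_degenerate}) directly.
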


Note that both formulations of Wigner's ($2n+1$)-rule state that an approximation of the energy ${\cal E}^{\rm rHF}(\beta W)$ up to order $(2n+1)$ in $\beta$, can be obtained from the $A_W^{(k)}$ for $1 \le k \le n$. They are yet different since the first formulation consists in computing all the coefficients $\cE_W^{(k)}$ up to order $(2n+1)$, while the second formulation is based on the computation of the density matrix $\Gamma\left( \sum_{k=1}^n \beta^k A_W^{(k)} \right)$.

\medskip

\begin{remark} Although we were not able to rigorously prove that assumptions  (\ref{eq:NPC}), (\ref{eq:C2}), (\ref{eq:hyp_ND}) and (\ref{eq:hyp_frac}) were actually satisfied for a specific molecular system, we strongly believe that this is the case for some atoms. Recall that the singlet-spin state rHF model is obtained from the spinless rHF model dealt with here by replacing $N$ by $N/2$ (the number of electron pairs) and $\rho_\gamma$ by $2\rho_\gamma$ (each state is occupied by one spin-up and one spin-down electron), so that all our results can be applied {\it mutatis mutandis} to the singlet-spin state rHF model. We have performed numerical simulations of a carbon atom within the singlet-spin state rHF model \cite{Mou13} and observed that for this system, the lowest two eigenvalues of $H_0$, corresponding to the 1s and 2s shells, are negative and non-degenerate, while the third lowest eigenvalue, corresponding to the 2p shell, is threefold degenerate. As the carbon atom contains six electrons, that is three electron pairs, the Fermi level coincides with the third lowest eigenvalue. Using the first statement of Proposition~\ref{prop:uniqueness}, we obtain that assumptions (\ref{eq:C2}) is satisfied, hence that the ground state density matrix $\gamma_0$ is unique, yielding that, by symmetry, all the occupation numbers at the Fermi level are equal to $1/3$. Numerical simulations therefore suggest that assumptions (\ref{eq:C2}), (\ref{eq:hyp_ND}) and (\ref{eq:hyp_frac}) are satisfied for the singlet-spin state rHF model of a carbon atom, while (\ref{eq:NPC}) is obviously satisfied since this system is electrically neutral.
\end{remark}

\medskip

\begin{remark} In order to illustrate what may happen when assumption~(\ref{eq:hyp_frac}) is not satisfied, we consider the toy model 
\begin{equation}\label{eq:toy_model}
\cE^{\rm TM}(w) = \inf\left\{E^{\rm TM}(\gamma,w), \; \gamma \in \cK_2 \right\}, 
\end{equation}
where 
$$
E^{\rm TM}(\gamma,w) = \tr(H_0^{\rm TM}  \gamma)+\frac 12 \left(\tr\left((\gamma-\gamma_0^{\rm TM})^2\right)\right)^2 + \tr(\gamma w), 
$$
$$
H_0^{\rm TM} = - 2 |e_1\rangle\langle e_1| -   |e_2\rangle\langle e_2|-   |e_3\rangle\langle e_3|, \quad \gamma_0^{\rm TM}=|e_1\rangle\langle e_1|+|e_2\rangle\langle e_2|,
$$
$e_1$, $e_2$, $e_3$ being pairwise orthonormal vectors of $L^2(\R^3)$. For $w=0$, the unique ground state of (\ref{eq:toy_model}) is $\gamma_0^{\rm TM}$ and the mean-field Hamiltonian of the unperturbed system is $H_0^{\rm TM}$. We are therefore in the degenerate case with $\epsilon_{\rm F}^0=-1$ and $\delta_0^{\rm TM}=|e_2\rangle\langle e_2|$, and we have $N_{\rm f}=1$, $1 = \mbox{\rm Rank}(\delta_0^{\rm TM}) < N_{\rm p}=2$, $\mbox{\rm Ker}(1-\delta_0^{\rm TM})=\R e_2 \neq \left\{0\right\}$, so that condition~(\ref{eq:hyp_frac}) is not fulfilled. A simple calculation shows that for $w=|e_3\rangle\langle e_3|$, it holds
$$
\cE^{\rm TM}(\beta w)=\left| \begin{array}{ll} 
-3 - \frac 38 |\beta|^{4/3} &  \mbox{ for } \beta < 0, \\
-3 &  \mbox{ for } \beta \ge 0.
\end{array}\right.
$$
Clearly, real-analytic perturbation theory cannot be applied.
\end{remark}

\medskip

\begin{remark}
The  block representation of $\gamma^{(1)}_W$, the first-order term of the perturbation expansion of the ground state density matrix, is given by
\begin{equation}\label{first_perturbation}
\gamma^{(1)}_W=\left[\begin{matrix}
            0                         &   (A_{\rm pf}^{(1)})^*(1-\Lambda)  &  (A_{\rm uf}^{(1)})^* \\ \\   
           (1-\Lambda) A_{\rm pf}^{(1)}   &     A_{\rm pp}^{(1)}                   &\Lambda (A_{\rm up}^{(1)})^* \\ \\
            A_{\rm uf}^{(1)}              &  A_{\rm up}^{(1)}\Lambda           & 0          
   \end{matrix}\right],
\end{equation} 
where the above operators solve the following system
\begin{equation}\label{Euler}
 \Theta( A_{\rm uf}^{(1)},A_{\rm up}^{(1)},A_{\rm pf}^{(1)},A_{\rm pp}^{(1)})=-(W_{\rm uf},W_{\rm up}\Lambda,(1-\Lambda) W_{\rm pf},\frac{1}{2} W_{\rm pp}),
\end{equation}
where $W_{xy}$ is the $xy$-block of the operator ``multiplication by $W$''. We also have
\begin{equation*} 
 {\cal E}^{(2)}_W=\tr\left(H_0\gamma_{W,2}^{(1,1)}\right)+\frac{1}{2}D\left(\rho_{\gamma_{W,1}^{(1)}},\rho_{\gamma_{W,1}^{(1)}}\right)+\int_{\R^3}\rho_{\gamma_{W,1}^{(1)}}W.
\end{equation*}
The second-order term $\gamma^{(2)}_W$ is also useful to compute nonlinear responses. For brevity, we do not provide here the explicit formula to compute this term and refer the reader to~\cite{Mou13}.
\end{remark}

\medskip

\begin{remark} In the degenerate case, there is no analogue of (\ref{eq:CPrHF_rho}), that is no explicit closed recursion relation on the coefficients of the Rayleigh-Schr\"odinger expansion of the density.
\end{remark}

\section{Extensions to other settings}
\label{sec:extensions}

Although all the results in the preceding sections are formulated for finite molecular systems in the whole space, in the all-electron rHF framework, some of them can be easily extended to other settings:
\begin{itemize}
\item all the results in Sections 3 and 4 can be extended to valence electron calculations with nonlocal pseudopotentials, as well as to regular nonlocal perturbations of the rHF model, that is to any perturbation modeled by an operator $W$ such that $W(1-\Delta)$ is a bounded operator on $L^2(\R^3)$, the term $\int_{\R^3} \rho_\gamma W$ being then replaced with $\tr(\gamma W)$;
\item all the results in Section~3 can be extended to the rHF model for locally perturbed insulating or semiconducting crystals (see in particular~\cite{CanLew10}, where the analogues of the operators $\cL$ and $Q^{(k)}$ in Lemma~\ref{lem:linear_response} are introduced and analyzed); the extension to conducting crystals is a challenging task, see~\cite{FraLewLieSei12} for results on the particular case of the homogeneous electron gas;
\item extending our results to the Kohn-Sham LDA model for finite molecular systems in the whole space is difficult as the ground state density decays exponentially to zero at infinity while the LDA exchange-correlation energy density is not twice differentiable at $0$ (it behaves as the function $\R_+ \ni \rho \mapsto -\rho^{4/3} \in \R_-$). On the other hand, all the results in Sections~3 and 4 can be extended to the Kohn-Sham LDA model on a supercell with periodic boundary conditions as well as to the periodic Kohn-Sham LDA model for perfect crystals, as in this case, the ground state density is periodic and bounded away from zero (see e.g.~\cite{CanChaMad12,CanDelLew08}). Let us emphasize however that in the LDA setting, it is not known whether the ground state density of the unperturbed problem is unique. We must therefore restrict ourselves to local perturbation theory in the vicinity of a local minimizer and make a coercivity assumption on the Hessian of the energy functional at the unperturbed local minimizer $\gamma_0$. In the supercell setting, the operator $\cL$ was used in~\cite{ELu13} to study the stability of crystals;
\item  the Hartree-Fock model consists in minimizing the energy functional
$$
E^{\rm HF}(\gamma,W) :=    \tr\left( -\frac 12 \Delta \gamma\right) + \int_{\R^3} \rho_\gamma (V+W) + \frac 12 D(\rho_\gamma,\rho_\gamma)-\frac 12 \int_{\R^3}\int_{\R^3} \frac{|\gamma(x,y)|^2}{|x-y|} \, dx \; dy
$$
over the set $\cP_N$ of Slater determinants with finite kinetic energy. It turns out that all the local minimizers of $E^{\rm HF}(\gamma,W)$ on $\cK_N$ are on $\cP_N$ (Lieb's variational principle~\cite{Lie77}). Consequently, an equivalent formulation of the Hartree-Fock model is
\begin{equation}\label{eq:HF}
\cE(W):=\inf\left\{ E^{\rm HF}(\gamma,W), \; \gamma \in \cK_N \right\}.
\end{equation}
Uniqueness for problem (\ref{eq:HF}) is an essentially open question (see however~\cite{Gri12} for partial results). In order to apply perturbation theory, we therefore need a coercivity assumption on the Hessian at the minimizer $\gamma_0$, just as in the LDA setting. It is known that there are no unfilled shells in the Hartree-Fock theory \cite{BacLieLosSol94}, which implies that we are always in the non-degenerate case. The first three statements and the fifth statement of Theorem~\ref{th:CPrHF} can be transposed to the Hartree-Fock setting under the above mentioned coercivity assumption. On the other hand, there is no analogue of (\ref{eq:CPrHF_rho}) for the Hartree-Fock model. A mathematical analysis of the perturbation theory for the molecular orbital formulation of the Hartree-Fock model was published in~\cite{CanLeB98}. It is easily checked that our proof of Wigner's $(2n+1)$-rule also applies to the Hartree-Fock setting;
\item the extension to some of our results to Stark potentials $W(x) = - E \cdot x$, where $E \in \R^3$ is a uniform electric field, will be dealt with in a future work~\cite{CanMou14}. 
\end{itemize}

\section{Proofs}
\label{sec:proofs}

\subsection{Proof of Lemma~\ref{lem:uniqueness}}

Let $\gamma_0$ and $\gamma_0'$ be two ground states of (\ref{eq:min_rHF}) for $W=0$. By Theorem~\ref{th:GS_rHF}, $\gamma_0-\gamma'_0= \sigma$, with $\sigma \in \cS(L^2(\R^3))$, $\mbox{Ran}(\sigma) \subset \mbox{Ker}(H_0-\epsilon_{\rm F}^0)$, $\tr(\sigma)=0$. Therefore,
$$
\sigma = \sum_{i,j=1}^{N_{\rm p}} M_{ij} |\phi_{N_{\rm f}+i}^0\rangle\langle \phi_{N_{\rm f}+j}^0|
$$
for some symmetric matrix $M \in \R^{N_{\rm p}\times N_{\rm p}}_{\rm S}$ such that $\tr(M)=0$. As, still by Theorem~\ref{th:GS_rHF}, $\gamma_0$ and $\gamma'_0$ share the same density, the density of $\sigma$ is identically equal to zero, that is
$$
\forall x \in \R^3, \quad  \sum_{i,j=1}^{N_{\rm p}} M_{ij}\phi_{N_{\rm f}+i}^0(x)\phi_{N_{\rm f}+j}^0(x)  = 0.
$$
If Assumption (\ref{eq:hyp_ND}) is satisfied, then $M=0$; therefore $\sigma=0$, and uniqueness is proved.

\subsection{Proof of Proposition~\ref{prop:uniqueness}}

Let us first notice that as for all $1 \le i \le N_{\rm p}$,  $\phi_{N_{\rm f}+i}^0 \in D(H_0)=H^2(\R^3) \hookrightarrow C^0(\R^3)$, condition (\ref{eq:hyp_ND}) is mathematically well-defined.  

\medskip

\noindent
{\bf Case 1:} Let $M \in \R^{N_{\rm p} \times N_{\rm p}}_{\rm S}$ be such that 
$$
\forall x \in \R^3, \quad \sum_{i,j=1}^{N_{\rm p}} M_{ij}\phi_{N_{\rm f}+i}^0(x)\phi_{N_{\rm f}+j}^0(x) = 0.
$$
The matrix $M$ being symmetric, there exists an orthogonal matrix $U \in O(N_{\rm p})$ such that $UMU^T = \mbox{diag}(n_1,\cdots,n_{N_{\rm p}})$ with $n_1 \le \cdots \le n_{N_{\rm p}}$. Let $\widetilde\phi_{N_{\rm f}+i}^0(x) = \sum_{j=1}^{N_{\rm p}} U_{ij} \phi_{N_{\rm f}+j}^0(x)$. The functions $\widetilde\phi_{N_{\rm f}+i}^0$ form an orthonormal basis of $\mbox{Ker}(H_0-\epsilon_{\rm F}^0)$ and it holds
$$
\forall x \in \R^3, \quad \sum_{i=1}^{N_{\rm p}} n_i |\widetilde \phi_{N_{\rm f}+i}^0(x)|^2 = 0,
$$
from which we deduce that $\sum_{i=1}^{N_{\rm p}}n_i=0$. Consider first the case when $N_{\rm p}=2$. If $M \neq 0$, then $n_2=-n_1=n > 0$, so that 
$$
\forall x \in \R^3, \quad |\widetilde \phi_{N_{\rm f}+1}^0(x)|^2 = |\widetilde \phi_{N_{\rm f}+2}^0(x)|^2.
$$
In particular, the two eigenfunctions $\widetilde \phi_{N_{\rm f}+1}^0$ and $\widetilde \phi_{N_{\rm f}+2}^0$ have the same nodal surfaces (that is $(\widetilde \phi_{N_{\rm f}+1}^0)^{-1}(0)=(\widetilde \phi_{N_{\rm f}+2}^0)^{-1}(0)$). Consider now the case when $N_{\rm p}=3$. If $M \neq 0$, then either $n_2=0$ and $\widetilde \phi_{N_{\rm f}+1}^0$ and $\widetilde \phi_{N_{\rm f}+3}^0$ have the same nodes, or $n_2 \neq 0$. Replacing $M$ with $-M$, we can, without loss of generality assume that $n_1 < 0 < n_2 \le n_3$, which leads to 
$$
\forall x \in \R^3, \quad |\widetilde \phi_{N_{\rm f}+1}^0(x)|^2 = \frac{|n_2|}{|n_1|} |\widetilde \phi_{N_{\rm f}+2}^0(x)|^2 + \frac{|n_3|}{|n_1|} |\widetilde \phi_{N_{\rm f}+3}^0(x)|^2.
$$
We infer from the above equality that the nodal surfaces of $\widetilde \phi_{N_{\rm f}+1}^0(x)$ are included in those of $\widetilde \phi_{N_{\rm f}+2}^0(x)$. Let $\varOmega$ be a connected component of the open set $\R^{3} \setminus (\widetilde \phi_{N_{\rm f}+1}^0)^{-1}(0)$, and let $H_0^\varOmega$ be the self-adjoint operator on $L^2(\varOmega)$ with domain
$$
D(H_0^\varOmega)=\left\{ u \in H^1_0(\varOmega) \; | \; \Delta u \in L^2(\varOmega)\right\}
$$
defined by
$$
\forall u \in D(H_0^\varOmega), \quad H_0^\varOmega u = -\frac 12 \Delta u + Vu + (\rho_0\star |\cdot|^{-1}) u.
$$
As both $\psi_1=\widetilde \phi_{N_{\rm f}+1}^0|_\varOmega$ and $\psi_2=\widetilde \phi_{N_{\rm f}+2}^0|_\varOmega$ are in $D(H_0^\varOmega)$ and satisfy $H_0^\varOmega \psi_1=\epsilon_{\rm F}^0\psi_1$, $H_0^\varOmega \psi_2=\epsilon_{\rm F}^0\psi_2$, $|\psi_1| > 0$ in $\varOmega$, we deduce from \cite[Theorem XIII.44]{ReeSim78} that $\epsilon_{\rm F}^0$ is the non-degenerate ground state eigenvalue of $H_0^\varOmega$, so that there exists a real constant $C \in \R$ such that $\psi_2=C\psi_1$. It follows from the unique continuation principle (see e.g.~\cite[Theorem~XIII.57]{ReeSim78}) that $\widetilde \phi_{N_{\rm f}+2}^0=C \widetilde \phi_{N_{\rm f}+1}^0$ on $\R^3$, which contradicts the fact that $\widetilde \phi_{N_{\rm f}+1}^0$ and $\widetilde \phi_{N_{\rm f}+2}^0$ are orthogonal and non identically equal to zero. Thus, $M=0$ and the proof of case~1 is complete. 

\medskip

\noindent
{\bf Case 2.}  The degeneracy being assumed essential, $\epsilon_{\rm F}^0$ is $(2l+1)$-times degenerate for some integer $l \ge 1$, and there exists an orthonormal basis of associated eigenfunctions of the form
\begin{equation*}
\forall 1 \le i \le N_{\rm p}=2l+1, \quad \phi_{N_{\rm f}+i}^0(x)=R_l(r) \, {\cal Y}_l^{-l+i-1}(\theta,\varphi),
\end{equation*}
where $(r,\theta,\varphi)$ are the spherical coordinates of the point $x \in \R^3$, and where the functions ${\cal Y}_l^m$ are the spherical harmonics. In particular,
\begin{equation*}
 \sum_{i,j=1}^{2l+1}M_{ij}\phi_{N_{\rm f}+i}^0(x)\phi_{N_{\rm f}+j}^0(x)=R_l(r)^2\sum_{i,j=1}^{2l+1}M_{ij} \, {\cal Y}_l^{-l+i-1}(\theta,\varphi) \, {\cal Y}_l^{-l+j-1}(\theta,\varphi).
\end{equation*}
We therefore have to prove that for any symmetric matrix $M \in \R_{\rm S}^{(2l+1)\times(2l+1)}$,
$$
\left( \sum_{i,j=1}^{2l+1}M_{ij}{\cal Y}_l^{-l+i-1}{\cal Y}_l^{-l+j-1} = 0 \right) \; \Rightarrow \; M=0.
$$
Let $M \in \R_{\rm S}^{(2l+1)\times(2l+1)}$ a symmetric matrix such that 
$$
\sum_{i,j=1}^{2l+1}M_{ij}{\cal Y}_l^{-l+i-1}{\cal Y}_l^{-l+j-1} = 0
$$
on the unit sphere ${\mathbb S}^2$. Using the relation
$$
 {\cal Y}_{l}^{m_1}{\cal Y}_{l}^{m_2}= \sum_{L=0}^{2l}\sqrt{\dfrac{(2l+1)^2(2L+1)}{4\pi}} 
\left(\begin{matrix} l &l &L \\ \\ m_1 &m_2 & -(m_1+m_2)\end{matrix}\right)     \left(\begin{matrix} l & l &L \\ \\ 0 & 0 & 0\end{matrix}\right) {\cal Y}_L^{m_1+m_2},
$$
where the $\left( \begin{matrix} l_1 &l_2 & l_3 \\ \\ m_1 &m_2 & m_3 \end{matrix}\right)$ denote the Wigner 3-j symbols (see~\cite{BriSat93} for instance), and where, by convention, ${\cal Y}_L^m=0$ whenever $|m| > L$, we obtain 
\begin{eqnarray*}
0&=&\dps \frac{\sqrt{4\pi}}{2l+1}\sum_{i,j=1}^{2l+1}M_{ij}{\cal Y}_l^{-l+i-1}{\cal Y}_l^{-l+j-1} \nonumber \\ \nonumber  \\
&=&\dps \sum_{i,j=1}^{2l+1}M_{ij}\sum_{L=0}^{2l}  \sqrt{2L+1}
\left(\begin{matrix} l &l &L \\ \\ -l+i-1 & -l+j-1  & 2l+2-i-j \end{matrix}\right)\left(\begin{matrix} l & l &L \\ \\ 0 & 0 & 0\end{matrix}\right) {\cal Y}_L^{i+j-2l-2} \nonumber \\ \nonumber  \\
&=&\dps \sum_{m=-2l}^{2l}\sum_{L=0}^{2l} \sqrt{2L+1} \left(\begin{matrix} l & l &L \\ \\ 0 & 0 & 0\end{matrix}\right) \left[\sum_{\begin{array}{c} \mbox{\small{$1 \le i,j \le 2l+1$}} \\  \mbox{\small{$i+j-2l-2=m$}} \end{array}} 
\left(\begin{matrix} l &l &L \\ \\ -l+i-1  & -l+j-1  & -m\end{matrix}\right)M_{ij} \right]{\cal Y}_L^{m}. 
\end{eqnarray*} 
Using the fact that the Wigner 3-j symbol $\left( \begin{matrix} l & l & L \\ \\ m_1 &m_2 & -(m_1+m_2) \end{matrix}\right)$ is equal to zero unless
$$
|m_1| \le l, \quad |m_2| \le l, \quad |m_1+m_2| \le L, \quad 0 \le L \le 2l, \quad {\rm and} \quad L \in 2\N \mbox{ if } m_1=m_2=0,
$$
we obtain that for all $L \in \left\{0,2,\cdots,2l \right\}$ and all $-L \le m \le L$,
\begin{equation}\label{eq:linearsystem}
\sum_{\begin{array}{c} \mbox{\small{$1 \le i,j \le 2l+1$}} \\  \mbox{\small{$i+j-2l-2=m$}} \end{array}} 
\left(\begin{matrix} l &l &L \\ \\ -l+i-1  & -l+j-1  & -m\end{matrix}\right)M_{ij}  = 0.
\end{equation}
For $m=-2l$ and $L=2l$, the above expression reduces to 
$$
\left(\begin{matrix} l &l &2l \\ \\ -l  & -l  & 2l\end{matrix}\right)M_{11}  = 0, \qquad \mbox{where} \qquad \left(\begin{matrix} l &l &2l \\ \\ -l  & -l  & 2l\end{matrix}\right)= \frac{1}{\sqrt{4l+1}}.
$$
Hence $M_{11}=0$. More generally, for each integer value of $m$ in the range $[-2l,2l]$, equation (\ref{eq:linearsystem}) gives rise to a linear system of $n_{m,l}$ equations (obtained for the various even values of $L$ in the range $[|m|,2l]$) with $n_{m,l}$ unknowns (the $M_{i,j}$'s satisfying $i \le j$ - recall that the matrix $M$ is symmetric - and $i+j=2l+2+m$).
Using the symmetry property
$$
 \left(\begin{matrix} l &l & L \\ \\ -l+i-1 & -l+j-1 & -m\end{matrix}\right) =  \left(\begin{matrix} l &l & L \\ \\ -l+j-1 & -l+i-1 & -m\end{matrix}\right)
$$ 
and the orthogonality relation stating that for all $-2l \le m \le 2l$, and all $|m| \le L,L' \le 2l$, 
$$
\displaystyle \sum_{\begin{array}{c} \mbox{\small{$1 \le i,j \le 2l+1$}} \\  \mbox{\small{$i+j-2l-2=m$}} \end{array}} \left(\begin{matrix} l &l & L \\ \\ -l+i-1 & -l+j-1 & -m\end{matrix}\right)
                                 \left(\begin{matrix} l & l & L' \\ \\ -l+i-1 & -l+j-1 & -m \end{matrix}\right)=\frac{\delta_{LL'}}{(2L+1)},
$$
it is easy to see that this linear system is free, and that the corresponding entries of $M$ are therefore equal to $0$. Hence, the matrix $M$ is identically equal to zero, which completes the proof.

\subsection{Proof of Lemma~\ref{lem:linear_response}}

As $\gC$ is a compact subset of the resolvent set of $H_0$ and as the domain of $H_0$ is $H^2(\R^3)$, there exists $C_0 \in \R_+$ such that
$$
\max_{z \in \gC}(\|(z-H_0)^{-1}\|,\|(1-\Delta)(z-H_0)^{-1}\|, \|(z-H_0)(1-\Delta)^{-1}\|) \le C_0.
$$
It follows from the Kato-Seiler-Simon inequality~\cite{Sim79} that for all $v \in \cC'$, 
$$
\| v(z-H_0)^{-1}\| \le C_0 \|v (1-\Delta)^{-1} \| \le C_0 \| v(1-\Delta)^{-1} \|_{\gS_6} \le C \|v\|_{L^6} \le  \alpha \|v\|_{\cC'},
$$
for constants $\alpha,C \in \R_+$ independent of $v$. The $k$-linear map $Q^{(k)}$ is therefore well-defined and continuous from $(\cC')^k$ to the space of bounded operators on $L^2(\R^3)$. Denoting by $\gamma_0^\perp=1-\gamma_0$, we have 
$$
Q^{(k)}(v_1,\cdots,v_k)= \sum_{(P_j)_{0 \le j \le k}\in \left\{\gamma_0,\gamma_0^\perp\right\}^{k+1}} \frac{1}{2i\pi}\oint_\gC  (z-H_0)^{-1} P_0 \prod_{j=1}^k \left( v_j(z-H_0)^{-1} P_{j} \right) \, dz.
$$
In the above sum, the term with all the $P_j$'s equal to $\gamma_0^\perp$ is equal to zero as a consequence of Cauchy's residue formula. In all the remaining terms, one of the $P_j$'s is equal to the rank-$N$ operator $\gamma_0$. The operators $(z-H_0)^{-1}$ and $v_j(z-H_0)^{-1}$ being bounded, $Q^{(k)}(v_1,\cdots,v_k)$ is finite-rank, hence trace-class, and it holds
$$
\|Q^{(k)}(v_1,\cdots,v_k)\|_{\gS_1} \le \frac{|\gC|}{2\pi} NC_0 \alpha^{k} \|v_1\|_{\cC'} \cdots  \|v_k\|_{\cC'}.
$$
Likewise, the operator 
\begin{eqnarray*}
  \!\!\!\!\!\!\!\!\!\!\!\!&& |\nabla|Q^{(k)}(v_1,\cdots,v_k)|\nabla| \\
  \!\!\!\!\!\!\!\!\!\!\!\!\!\!\!\!&&=  \!\!\!\!\!\!\!\! \sum_{(P_j)\in \left\{\gamma_0,\gamma_0^\perp\right\}^{k+1}} \frac{1}{2i\pi}\oint_\gC  |\nabla|(z-H_0)^{-1/2} P_0 \prod_{j=1}^k \left((z-H_0)^{-1/2} v_j(z-H_0)^{-1/2} P_{j} \right) (z-H_0)^{-1/2}|\nabla| \, dz
\end{eqnarray*}
is finite rank and 
$$
\|\, |\nabla|Q^{(k)}(v_1,\cdots,v_k)|\nabla|\, \|_{\gS_1} \le C \alpha^{k} \|v_1\|_{\cC'} \cdots  \|v_k\|_{\cC'},
$$
for some constant $C$ independent of $v_1,\cdots,v_k$. Therefore $Q^{(k)}$ is a continuous linear map from $(\cC')^k$ to $\gS_{1,1}$ and the bound (\ref{eq:bounds_Qk}) holds true. It then follows from Cauchy's residue formula and the cyclicity of the trace that, for $k\geq 1$,
\begin{eqnarray*}
\!\!\!\!&&\!\!\!\!\!\!\!\!\tr(Q^{(k)}(v_1,\cdots,v_k))= \tr \left( \frac{1}{2i\pi}\oint_\gC (z-H_0)^{-1}  \prod_{j=1}^k \left( v_j(z-H_0)^{-1} \right)    \, dz \right) \\
&&= \sum_{(P_j)\in \left\{\gamma_0,\gamma_0^\perp\right\}^{k+1}}  \tr \left( \frac{1}{2i\pi}\oint_\gC (z-H_0)^{-1} P_0 \prod_{j=1}^k \left( v_j(z-H_0)^{-1} P_{j} \right)     \, dz \right) \\
&&= \sum_{j=1}^k \sum_{(P_l)\in \left\{\gamma_0,\gamma_0^\perp\right\}^{k}}  \tr \left( \frac{1}{2i\pi}\oint_\gC  \prod_{l=1}^{k-1} \left( v_{l+j \; {\rm mod}(k)} (z-H_0)^{-1} P_{l} \right)   v_j   (z-H_0)^{-2} \gamma_0  \, dz \right) =0.
\end{eqnarray*}

Let $\rho \in \cC$ and $Q:=Q^{(1)}(\rho \star |\cdot|^{-1})$. Proceeding as above, we obtain that for all $\phi \in C^\infty_{\rm c}(\R^3)$, 
\begin{eqnarray*}
\left|\int_{\R^3} \rho_{Q}\phi \right| &=& \left| \tr\left( Q \phi \right) \right| = \left| \tr\left( \frac 1{2i\pi} \oint_\gC  (z-H_0)^{-1}(\rho \star |\cdot|^{-1})(z-H_0)^{-1}\phi  \, dz \right) \right| \\
&\le & C \|\rho\|_{\cC} \|\phi\|_{\cC'},
\end{eqnarray*}
for a constant $C\in \R_+$ independent of $\rho$ and $\phi$. Therefore, $\rho_Q$ is in $\cC$ and $\|\rho_Q\|_{\cC} \le  C \|\rho\|_{\cC}$. This proves that $\cL$ is a bounded operator on $\cC$. In addition, for all $\rho_1,\rho_2$ in $\cC$,
$$
(\cL\rho_1,\rho_2)_\cC=  - \tr\left(  \frac 1{2i\pi} \oint_\gC (z-H_0)^{-1}(\rho_1 \star |\cdot|^{-1})(z-H_0)^{-1}(\rho_2 \star |\cdot|^{-1}) \, dz \right) = (\rho_1,\cL\rho_2)_\cC,
$$
where we have used again the cyclicity of the trace. Thus, $\cL$ is self-adjoint. Lastly, for all $\rho \in \cC$,
$$
(\cL\rho,\rho)_\cC= \sum_{i=1}^N \langle \gamma_0^\perp ((\rho\star|\cdot|^{-1})\phi_i^0) | (H_0^\perp-\epsilon_i)^{-1} |  \gamma_0^\perp ( (\rho\star|\cdot|^{-1})\phi_i^0)   \rangle \ge 0,
$$
where $H_0^{\perp}$ is the self-adjoint operator on $\mbox{Ran}(\gamma_0^\perp)=\mbox{Ker}(\gamma_0)$ defined by $\forall v \in \mbox{Ran}(\gamma_0^\perp)$, $H_0^\perp v =H_0 v$.

\subsection{Stability of the spectrum of the mean-field Hamiltonian}

We assume here that we are 
\begin{itemize}
\item either in the non-degenerate case ($\epsilon_N < 0$ and $\epsilon_N < \epsilon_{N+1}$), in which case we set $\epsilon_{\rm F}^0=\frac{\epsilon_N+\epsilon_{N+1}}2$;
\item or in the degenerate case ($\epsilon_N=\epsilon_{N+1}=\epsilon_{\rm F}^0<0$).
\end{itemize}
We recall that $N_{\rm f}=\mbox{Rank}(\1_{(-\infty,\epsilon_{\rm F}^0)}(H_0))$, $N_{\rm p}=\mbox{Rank}(\1_{\left\{\epsilon_{\rm F}^0\right\}}(H_0))$ and $N_{\rm o}=N_{\rm f}+N_{\rm p}$. We also have $g_-=\epsilon_{\rm F}^0-\epsilon_{N_{\rm f}}$ and $g_+=\epsilon_{N_{\rm f}+N_{\rm p}+1}-\epsilon_{\rm F}^0$. By definition $g_- > 0$ and $g_+ > 0$ since $\epsilon_{\rm F}^0 < 0$. 

\medskip

\begin{lemma}\label{lem:stability} Let 
$$
\alpha_1=\epsilon_1-1, \;  \alpha_2=\epsilon_{\rm F}^0-\frac{3g_-}4,   \; \alpha_3=\epsilon_{\rm F}^0-\frac{g_-}4,   \;  \alpha_4=\epsilon_{\rm F}^0+\frac{g_+}4,   \; \alpha_5=\epsilon_{\rm F}^0+\frac{3g_+}4.  
$$
There exists $\eta > 0$ such that for all $v \in B_\eta(\cC')$, 
\begin{eqnarray*}
&& \!\!\!\!\!\!\!\!\!\!\!\! \mbox{\rm Rank}(\1_{(-\infty,\alpha_1]}(H_0+v))=0, \; \mbox{\rm Rank}(\1_{(\alpha_1,\alpha_2)}(H_0+v))=N_{\rm f}, \; \mbox{\rm Rank}(\1_{[\alpha_2,\alpha_3]}(H_0+v))=0, \\
&& \!\!\!\!\!\!\!\!\!\!\!\! \mbox{\rm Rank}(\1_{(\alpha_3,\alpha_4]}(H_0+v))=N_{\rm p}, \; \mbox{\rm Rank}(\1_{(\alpha_4,\alpha_5]}(H_0+v))=0. 
\end{eqnarray*}
\end{lemma}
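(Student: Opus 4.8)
The plan is to regard $v$ as a relatively bounded perturbation of $H_0$ with arbitrarily small relative bound, and then run the usual Riesz-projector stability argument on five disjoint spectral regions. First I would extract from the proof of Lemma~\ref{lem:linear_response} two ingredients: the Kato--Seiler--Simon estimate $\|v(1-\Delta)^{-1}\| \le C\|v\|_{L^6} \le C\|v\|_{\cC'}$, and the boundedness of $\|(1-\Delta)(z-H_0)^{-1}\|$ for $z$ in compact subsets of the resolvent set $\rho(H_0)$. Fixing $z_0 := \alpha_1 \in \rho(H_0)$, these give
\[
\forall u \in H^2(\R^3), \qquad \|vu\|_{L^2} \le \|v(H_0-z_0)^{-1}\|\;\|(H_0-z_0)u\|_{L^2} \le C\|v\|_{\cC'}\left(\|H_0 u\|_{L^2} + |\alpha_1|\,\|u\|_{L^2}\right),
\]
so that $v$ is $H_0$-bounded with $H_0$-bound $C\|v\|_{\cC'}$. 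For $\|v\|_{\cC'}$ small this bound is $<1$, hence by Kato--Rellich $H_0+v$ is self-adjoint on $H^2(\R^3)$ and bounded below, and since $v(H_0-z)^{-1}$ is compact for $z\in\rho(H_0)$, $\sigma_{\rm ess}(H_0+v)=[0,+\infty)$ by Weyl's theorem. In particular, for $\|v\|_{\cC'}$ small each of the fixed points $\alpha_1,\dots,\alpha_5$, which all lie at positive distance from $\sigma(H_0)$, lies in $\rho(H_0+v)$ by a Neumann-series argument, so the ranks computed below do not depend on whether the interval endpoints are included.

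Next I would deal with the half-line $(-\infty,\alpha_1]$. Applying the relative bound $\|vw\| \le a\|H_0 w\| + b\|w\|$, with $a,b = \mathcal{O}(\|v\|_{\cC'})$, to $w = (H_0-\mu)^{-1}\psi$ yields, for $\mu \in \rho(H_0)$,
\[
\|v(H_0-\mu)^{-1}\| \le a\,\|H_0(H_0-\mu)^{-1}\| + b\,\|(H_0-\mu)^{-1}\|,
\]
and $\mu \in \rho(H_0+v)$ as soon as the right-hand side is $<1$. For $\mu \le \alpha_1 = \epsilon_1-1$ we have $\|(H_0-\mu)^{-1}\| = (\epsilon_1-\mu)^{-1} \le 1$, while by the spectral theorem $\|H_0(H_0-\mu)^{-1}\| = \sup_{\lambda \in \sigma(H_0)} \left|\lambda/(\lambda-\mu)\right| \le \max(1,|\epsilon_1|)$ uniformly over $\mu \le \alpha_1$ (for $\lambda \ge 0$ the ratio is $<1$; for $\lambda \in [\epsilon_1,0)$ the map $\lambda \mapsto -\lambda/(\lambda-\mu)$ is nonincreasing, so the supremum is attained at $\lambda = \epsilon_1$, with $-\epsilon_1/(\epsilon_1-\mu) \le -\epsilon_1/(\epsilon_1-\alpha_1) = |\epsilon_1|$). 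Hence the right-hand side above is $\le \max(1,|\epsilon_1|)\,a + b$, uniformly in $\mu \le \alpha_1$, which is $<1$ once $\|v\|_{\cC'}$ is small; therefore $(-\infty,\alpha_1] \subset \rho(H_0+v)$, i.e. $\mbox{\rm Rank}(\1_{(-\infty,\alpha_1]}(H_0+v)) = 0$.

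For the four bounded intervals $I_1 = (\alpha_1,\alpha_2)$, $I_2 = [\alpha_2,\alpha_3]$, $I_3 = (\alpha_3,\alpha_4]$, $I_4 = (\alpha_4,\alpha_5]$ I would use Riesz projectors. From the definitions of $g_\pm$, namely $\epsilon_{N_{\rm f}} = \epsilon_{\rm F}^0 - g_-$ and $\epsilon_{N_{\rm f}+N_{\rm p}+1} = \epsilon_{\rm F}^0 + g_+$, one checks that $\sigma(H_0) \cap I_j$ equals, respectively, $\{\epsilon_1,\dots,\epsilon_{N_{\rm f}}\}$ (total multiplicity $N_{\rm f}$), $\emptyset$, $\{\epsilon_{\rm F}^0\}$ (multiplicity $N_{\rm p}$) and $\emptyset$; moreover each $I_j$ is bounded and contained in $(-\infty,0)$, since $\alpha_5 = \epsilon_{N_{\rm f}+N_{\rm p}+1} - g_+/4 < 0$. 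For each $j$, fix a positively oriented circle $\mathscr{C}_j \subset \C$, symmetric about $\R$, enclosing exactly $\sigma(H_0) \cap I_j$ and staying at a distance $\delta_j > 0$ from $\sigma(H_0)$; then $\sup_{z \in \mathscr{C}_j}\|(z-H_0)^{-1}\| = \delta_j^{-1} < \infty$ and, arguing exactly as in the proof of Lemma~\ref{lem:linear_response}, $\sup_{z \in \mathscr{C}_j}\|v(z-H_0)^{-1}\| \le C_j\|v\|_{\cC'}$. Hence, for $\|v\|_{\cC'}$ small, $\mathscr{C}_j \subset \rho(H_0+v)$, the map $z \mapsto (z-H_0-v)^{-1}$ is norm-continuous on $\mathscr{C}_j$ and given there by a convergent Neumann series, and
\[
\1_{I_j}(H_0+v) = \frac{1}{2i\pi}\oint_{\mathscr{C}_j}(z-H_0-v)^{-1}\,dz \;\longrightarrow\; \frac{1}{2i\pi}\oint_{\mathscr{C}_j}(z-H_0)^{-1}\,dz = \1_{I_j}(H_0)
\]
in operator norm as $\|v\|_{\cC'} \to 0$. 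Both sides are orthogonal projectors, the right-hand one finite-rank; since two orthogonal projectors at operator-norm distance $<1$ have equal rank, $\mbox{\rm Rank}(\1_{I_j}(H_0+v))$ equals $N_{\rm f}$, $0$, $N_{\rm p}$ and $0$ respectively (for $I_2$ and $I_4$ one may also argue, as on the half-line, that these compact subsets of $\rho(H_0)$ remain in $\rho(H_0+v)$). Choosing $\eta$ to be the smallest of the finitely many smallness thresholds produced above completes the proof.

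The only steps carrying real content are the conversion of the $L^6$-type bound of Lemma~\ref{lem:linear_response} into $H_0$-relative boundedness with small relative bound, and the uniform-in-$\mu$ control of $\|v(H_0-\mu)^{-1}\|$ over the entire half-line $(-\infty,\alpha_1]$ — which works precisely because $|\mu|/\mbox{\rm dist}(\mu,\sigma(H_0))$ stays bounded there. Everything else is the textbook Riesz-projector continuity argument, and I expect no genuine obstacle beyond bookkeeping of the thresholds and checking the spectral content of the five regions against the definitions of $g_\pm$.
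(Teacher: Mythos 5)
Your proof is correct and follows essentially the same route as the paper's: the Kato--Seiler--Simon bound $\|v(1-\Delta)^{-1}\| \le C\|v\|_{\cC'}$ makes $v$ an $H_0$-bounded perturbation with small relative bound, the five dividing points $\alpha_1,\dots,\alpha_5$ are shown to stay in the resolvent set of $H_0+v$ uniformly on a small ball, and the ranks of the spectral projectors are then stable by the standard Riesz-projector continuity argument. The only difference is that you spell out explicitly the steps (the uniform control of $\|v(H_0-\mu)^{-1}\|$ on the half-line $(-\infty,\alpha_1]$, and the operator-norm comparison of the perturbed and unperturbed Riesz projectors) that the paper delegates to ``Kato's perturbation theory and a simple continuity argument'' over the connected ball $B_\eta(\cC')$.
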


\medskip

\begin{proof} Let $z \in \left\{\alpha_1,\alpha_2,\alpha_3,\alpha_4,\alpha_5\right\}$. As $z \notin \sigma(H_0)$, we have
$$
z-(H_0+v) =   \left(1+v (1-\Delta)^{-1} (1-\Delta)(z-H_0)^{-1} \right) \, (z-H_0).
$$
Besides, as $D(H_0)=H^2(\R^3)$, there exists a constant $C \in \R_+$ independent of the choice of $z\in \left\{\alpha_1,\alpha_2,\alpha_3,\alpha_4,\alpha_5\right\}$, such that
$$
\| (1-\Delta) (z-H_0)^{-1} \| \le C.
$$
In addition, there exists a constant $C' \in \R_+$ such that for all $v \in \cC'$, 
$$
\|v(1-\Delta)^{-1}\| \le \| v(1-\Delta)^{-1} \|_{\gS_6} \le  C' \|v\|_{\cC'}.
$$
Let $\eta=(CC')^{-1}$. We obtain that for all $v \in B_\eta(\cC')$, 
$$
\|v (1-\Delta)^{-1} (1-\Delta)(z-H_0)^{-1}\| < 1,
$$
so that $z-(H_0+v)$ is invertible. Therefore, for all $v \in B_\eta(\cC')$, none of the real numbers $\alpha_1,\alpha_2,\alpha_3,\alpha_4,\alpha_5$ are in $\sigma(H_0+v)$. It also follows from the above arguments that for all $v \in \cC'$, the multiplication by $v$ is a $H_0$-bounded operator on $L^2(\R^3)$. Using Kato's perturbation theory, we deduce from a simple continuity argument that the ranks of the spectral projectors 
$$
\1_{(-\infty,\alpha_1]}(H_0+v), \; \1_{(\alpha_1,\alpha_2)}(H_0+v), \; \1_{[\alpha_2,\alpha_3]}(H_0+v), \; \1_{(\alpha_3,\alpha_4]}(H_0+v), \mbox{ and } \1_{(\alpha_4,\alpha_5]}(H_0+v)
$$  
are constant for $v \in B_\eta(\cC')$, and therefore equal to their values for $v=0$, namely $0$, $N_{\rm f}$, $0$, $N_{\rm p}$ and $0$ respectively.
\end{proof}

\subsection{Proof of Theorem~\ref{th:CPrHF}}

\noindent
{\bf Step 1:} proof of statement 1. 

\medskip

\noindent
Let us introduce the relaxed constrained problem
\begin{equation} \label{eq:relaxed}
{\cE}^{\rm rHF}_{\le N}(W) =  \inf \left\{E^{\rm rHF}(\gamma,W), \; \gamma \in \cK_{\le N} \right\},
\end{equation}
where
$$
\cK_{\le N}=\left\{ \gamma \in \cS(L^2(\R^3)) \; | \; 0 \le \gamma \le 1, \; \tr(\gamma) \le N, \; \tr(-\Delta\gamma) < \infty \right\}.
$$
As $\epsilon_{\rm F}^0 < 0$, $\gamma_0$ is the unique minimizer of (\ref{eq:relaxed}) for $W=0$, and as $\cK_{\le N}$ is convex, the corresponding optimality condition reads
\begin{equation}\label{eq:EulerKleN}
\forall \gamma \in \cK_{\le N}, \quad \tr(H_0(\gamma-\gamma_0)) \ge 0.
\end{equation}

Let $W \in \cC'$, and $(\gamma'_k)_{k \in \N^\ast}$ a minimizing sequence for (\ref{eq:relaxed}) for which
\begin{equation}\label{eq:minseq2}
\forall k \ge 1, \quad E^{\rm rHF}(\gamma'_k,W) \le \cE^{\rm rHF}_{\le N}(W)+\frac 1 k.
\end{equation}
Set $\rho'_k=\rho_{\gamma'_k}$. We obtain on the one hand, using (\ref{eq:EulerKleN}),
\begin{eqnarray*}
\cE^{\rm rHF}_{\le N}(W) &\ge & E^{\rm rHF}(\gamma'_k,W) - \frac 1 k \\
&=& E^{\rm rHF}(\gamma'_k,0) + \int_{\R^3} \rho'_k W  - \frac 1 k \\
&=&  \cE^{\rm rHF}_{\le N}(0) + \tr(H_0(\gamma'_k-\gamma_0)) + \frac 12 D(\rho'_k-\rho_0, \rho'_k-\rho_0) + \int_{\R^3} \rho'_k W  - \frac 1 k \\
& \ge & \cE^{\rm rHF}_{\le N}(0) +\frac 12 D(\rho'_k-\rho_0, \rho'_k-\rho_0) + \int_{\R^3} \rho'_k W - \frac 1 k ,
\end{eqnarray*}
and on the other hand
$$
\cE^{\rm rHF}_{\le N}(W) \le E^{\rm rHF}(\gamma_0,W) = \cE^{\rm rHF}_{\le N}(0) + \int_{\R^3} \rho_0 W  .
$$
Therefore,
$$
 \frac 12 D(\rho'_k-\rho_0, \rho'_k-\rho_0) \le \int_{\R^3} (\rho_0-\rho'_k) W +\frac 1 k,
$$
from which we get 
$$
\frac 12 \| \rho'_k-\rho_0 \|_{\cC}^2 \le \|W\|_{\cC'} \| \rho'_k-\rho_0 \|_{\cC} + \frac 1k,
$$
and finally
\begin{equation}\label{eq:boundrhok}
\|\rho'_k-\rho_0 \|_{\cC} \le 2 \|W\|_{\cC'} + \left( 2k^{-1} \right)^{1/2}.
\end{equation}
Then, using Cauchy-Schwarz, Hardy and Hoffmann-Ostenhof~\cite{Hof77} inequalities, we obtain
\begin{eqnarray*}
\cE^{\rm rHF}_{\le N}(0) &=& \cE^{\rm rHF}(0)= E^{\rm rHF}(\gamma_0,0) = E^{\rm rHF}(\gamma_0,W) -\int_{\R^3} \rho_0 W \\
&\ge& \cE^{\rm rHF}_{\le N}(W) -\int_{\R^3} \rho_0 W
\ge  E^{\rm rHF}(\gamma'_k,W) -\int_{\R^3} \rho_0 W -\frac 1 k \\ 
  &=& \frac 12 \tr(-\Delta\gamma'_k) + \int_{\R^3} V \rho'_k + \frac 12 D(\rho'_k,\rho'_k) + \int_{\R^3} \rho'_kW-\int_{\R^3} \rho_0 W - \frac 1 k\\
& \ge & \frac 12 \tr(-\Delta\gamma'_k) - 2 Z N^\frac{1}{2} (\tr(-\Delta \gamma'_k))^{1/2} + \frac 12 \| \rho'_k \|_{\cC}^2 -   \| \rho'_k \|_{\cC} \|W\|_{\cC'}-  \| \rho_0 \|_{\cC} \|W\|_{\cC'} - \frac 1 k\\
& \ge &  \frac 12 ((\tr(-\Delta\gamma'_k))^{1/2}-2ZN^\frac 12)^2 + \frac 12 (\| \rho'_k \|_{\cC}-\|W\|_{\cC'})^2 - 2Z^2N-\frac 12 \| \rho_0 \|_{\cC}^2- \|W\|_{\cC'}^2- \frac 1 k \\
& \ge &  \frac 12 ((\tr(-\Delta\gamma'_k))^{1/2}-2ZN^\frac 12)^2 - 2Z^2N-\frac 12 \| \rho_0 \|_{\cC}^2- \|W\|_{\cC'}^2- \frac 1 k,
\end{eqnarray*}
from which we infer that 
$$
\tr(-\Delta \gamma'_k) \le C_0 (1+\|W\|_{\cC'}^2),
$$
for some constant $C_0 \in \R_+$ independent of $W$ and $k$. This estimate, together with (\ref{eq:boundrhok}) and the fact that $\|\gamma'_k\|_{\gS_1} = \tr(\gamma'_k) \le N$, shows that the sequences $(\gamma'_{k})_{k \in \N^\ast}$ and $(\rho'_{k})_{k \in \N^\ast}$ are bounded in $\gS_{1,1}$ and $\cC$ respectively. We can therefore extract from $(\gamma'_k)_{k \in \N^\ast}$ a subsequence $(\gamma'_{k_j})_{j \in \N^\ast}$ such that $(\gamma'_{k_j})_{j \in \N}$ converges to $\gamma_W$ for the weak-$*$ topology of $\gS_{1,1}$, and $(\rho'_{k_j})_{j \in \N}$ converges to $\rho_W:=\rho_{\gamma_W}$ weakly in $\cC$ and strongly in $L^p_{\rm loc}(\R^3)$ for all $1 \le p < 3$. This implies that 
$$
\gamma_W \in \cK_{\le N} \quad \mbox{and} \quad E^{\rm rHF}(\gamma_W,W) \le \liminf_{j \to \infty} E^{\rm rHF}(\gamma'_{k_j},W) = \cE^{\rm rHF}_{\le N}(W).
$$ 
Thus $\gamma_W$ is a minimizer of (\ref{eq:relaxed}). In addition, as the rHF model is strictly convex in the density, all the minimizers of (\ref{eq:relaxed}) have the same density $\rho_W$, and, passing in the limit in (\ref{eq:boundrhok}), we obtain that $\rho_W$ satisfies 
$$
\|\rho_W-\rho_0 \|_{\cC} \le 2 \|W\|_{\cC'}.
$$
Denoting by 
\begin{equation}\label{eq:Veff}
v_W=W +(\rho_W-\rho_0)\star |\cdot|^{-1},
\end{equation}
we have
\begin{equation}\label{eq:HW2}
H_W = -\frac 12 \Delta + V + W + \rho_W \star |\cdot|^{-1} 
= H_0 + v_W,
\end{equation}
with 
\begin{equation}\label{eq:norm_vW}
\|v_W\|_{\cC'} \le \|W\|_{\cC'} + \|(\rho_W -\rho_0)\star |\cdot|^{-1}\|_{\cC'} \le 3\|W\|_{\cC'}.
\end{equation} 
By Lemma~\ref{lem:stability}, for all $W \in B_{\eta/3}(\cC')$, we have 
$$
\mbox{Rank}(\1_{(-\infty,\epsilon_{\rm F}^0-g_-/2]}(H_W))= N \quad \mbox{and} \quad \mbox{Rank}(\1_{(\epsilon_{\rm F}^0-g_-/2,\epsilon_{\rm F}^0+g_-/2]}(H_W))= 0. 
$$
In particular, $H_W$ has a least $N$ negative eigenvalues, from which we infer that $\tr(\gamma_W)=N$. Therefore, $\gamma_W$ is a minimizer of (\ref{eq:min_rHF}). In addition, $\gamma_W=\1_{(-\infty,\epsilon_{\rm F}^0]}(H_W)$ and it holds
\begin{equation}\label{eq:CauchyGammaW}
\gamma_W = \frac{1}{2i\pi}\oint_\gC (z-H_W)^{-1} \, dz.
\end{equation}

\medskip

\noindent
{\bf Step~2:} proof of statement~2.

\medskip

\noindent
It follows from  (\ref{eq:Veff}),  (\ref{eq:HW2}) and (\ref{eq:CauchyGammaW}) that 
$$
\forall W \in B_{\eta/3}(\cC'), \quad {\cal X}(v_W)=W,
$$
where ${\cal X}$ is the mapping from $B_{\eta/3}({\cal C}')$ to ${\cal C}'$ defined by 
$$
{\cal X}(v) = v - \rho_{\frac{1}{2i\pi} \oint_\gC ((z-H_0-v)^{-1}-(z-H_0)^{-1}) \, dz} \star |\cdot|^{-1}.
$$
The mapping ${\cal X}$ is real analytic. Besides, denoting by $v_{\rm c}$ the Coulomb operator associating to each density $\rho \in \cC$ the electrostatic potential $v_{\rm c}(\rho) = \rho \star |\cdot|^{-1} \in \cC'$, we have
$$
{\cal X}'(0) = v_{\rm c} (1+{\cal L}) v_{\rm c}^{-1}.
$$
It follows from the second statement of Lemma~\ref{lem:linear_response} and from the fact that $v_{\rm c}:\cC \rightarrow \cC'$ is a bijective isometry that ${\cal X}'(0)$ is bijective. Applying the real analytic implicit function theorem, we obtain that the mapping $W \mapsto v_W$ is real analytic from some ball $B_{\eta'}(\cC')$ (for some $\eta' > 0$) to $\cC'$. By composition of real analytic functions, the functions 
$$
\gamma_W = \frac{1}{2i\pi} \oint_\gC (z-H_0-v_W)^{-1} \, dz, \; \rho_W=\rho_0+v_{\rm c}^{-1}(v_W-W) \mbox{ and } \cE^{\rm rHF}(W) = E^{\rm rHF}(\gamma_W,W)
$$
are real analytic from $B_{\eta'}(\cC')$ to $\gS_{11}$, $\cC$ and $\R$ respectively.

\medskip

\noindent
{\bf Step~3:} proof of statements~3 and 4.

\medskip

\noindent
Let $W \in B_{\eta'}(\cC')$. It follows from the above result that the functions $\beta \mapsto \gamma_{\beta W}$,  $\beta \mapsto \rho_{\beta W}$, and $\beta \mapsto \cE^{\rm rHF}(\beta W)$ are real analytic in the vicinity of $0$, so that, for $|\beta|$ small enough,
$$
\gamma_{\beta W} = \gamma_0 + \sum_{k=1}^{+\infty} \beta^k \gamma_W^{(k)}, \quad\rho_{\beta W} = \rho_0 + \sum_{k=1}^{+\infty} \beta^k \rho_W^{(k)}, \quad \cE^{\rm rHF}(\beta W) = \cE^{\rm rHF}(0) + \sum_{k=1}^{+\infty} \beta^k \cE_W^{(k)}, 
$$
the series being normally convergent in $\gS_{11}$, $\cC$ and $\R$ respectively. The Dyson expansion of (\ref{eq:gammaWint}) gives
\begin{eqnarray*}
\gamma_{\beta W} &=& \gamma_0 + \sum_{k=1}^{+\infty} Q^{(k)}\left(v_{\beta W}, \cdots , v_{\beta W} \right).
\end{eqnarray*}
As
$$
v_{\beta W}
= \beta W + \sum_{k=1}^{+\infty} \beta^k (\rho_W^{(k)} \star |\cdot|^{-1})
= \sum_{k=1}^{+\infty} \beta^k W^{(k)},
$$
where we recall that $W^{(1)}=W+\rho_W^{(1)} \star |\cdot|^{-1}$ and $W^{(k)}=\rho_W^{(k)} \star |\cdot|^{-1}$, we obtain
$$
\gamma_{\beta W} = \gamma_0 + \sum_{k=1}^{+\infty} Q^{(k)}\left(\sum_{j=1}^{+\infty} \beta^j W^{(j)} , \cdots , \sum_{j=1}^{+\infty} \beta^j W^{(j)}  \right),
$$
from which we deduce (\ref{eq:gammaWk}). Taking the densities of both sides of (\ref{eq:gammaWk}), we get
$$
\rho_W^{(k)} = - \cL (\rho_W^{(k)}) + \widetilde \rho_W^{(k)}.
$$
This proves (\ref{eq:CPrHF_rho}).

\subsection{Proof of Lemma~\ref{lem:RS_MO} and of~(\ref{eq:orthogonality})}
\label{sec:orthogonality}

The proof of Lemma~\ref{lem:RS_MO} is similar to the proof of Lemma~1 in~\cite{CanLeB98}. We only sketch it here for brevity. We denote by ${\cal V}:=(H^1(\R^3))^N$ , by $\Phi^0=(\phi_1^0,\cdots,\phi_N^0)^T\in {\cal V}$ and by $\mathscr H$ the bounded linear operator from ${\cal V}$ to ${\cal V}'\equiv(H^{-1}(\R^3))^N$ defined by
$$
\forall \Psi \in {\cal V}, \quad \left( {\mathscr H}\Psi\right)_i = (H_0-\epsilon_i)\psi_i + \sum_{j=1}^N K^0_{ij} \psi_j.
$$
We then decompose $\cal V$ as
$$
{\cal V}= {\mathbb S}\Phi^0+{\mathbb A}\Phi^0+\Phi^0_\perp = {\mathbb D}\Phi^0+{\mathbb S}^0\Phi^0+{\mathbb A}\Phi^0+\Phi^0_\perp,
$$
where ${\mathbb D}$, ${\mathbb A}$, ${\mathbb S}$, and ${\mathbb S}^0$ denote the vector spaces of $N \times N$ real-valued matrices which are respectively diagonal, antisymmetric, symmetric, and symmetric with zero entries on the diagonal, and where
$$
\Phi^0_\perp=\left\{\Phi = (\phi_i)_{1 \le i \le N} \in {\cal V} \; | \; \forall 1 \le i,j \le N, \; (\phi_i,\phi_j^0)_{L^2} =0\right\}.
$$
Likewise, it holds
$$
{\cal V}'= {\mathbb S}\Phi^0+{\mathbb A}\Phi^0+\Phi^0_{\perp\!\!\!\perp} \quad \mbox{with} \quad 
\Phi^0_{\perp\!\!\!\perp} =\left\{g = (g_i)_{1 \le i \le N} \in {\cal V}' \; | \; \forall 1 \le i,j \le N, \; \langle g_i,\phi_j^0\rangle =0\right\}
$$
and it is easily checked that
\begin{equation}\label{eq:ortagch}
\left\{g \in {\cal V}' \; | \; \forall \chi \in  \Phi^0_\perp, \; \langle g,\chi \rangle = 0 \right\} = {\mathbb S}\Phi^0+{\mathbb A}\Phi^0.
\end{equation}
Denoting by $F=(f_1,\cdots, f_N)^T \in {\cal V}'$ and by $\alpha \in {\mathbb D}$ the $N \times N$  diagonal matrix with entries $\alpha_1,\cdots, \alpha_N$, we have to show that there exists a unique pair $(\Psi,\eta) \in {\cal V}\times {\mathbb D}$ such that
\begin{equation}\label{eq:RS_MO_LS_2}
\left\{\begin{array}{l}
{\mathscr H}\Psi = F +\eta \Phi^0, \\
\Psi - \alpha \Phi^0 \in {\mathbb S}^0\Phi^0+{\mathbb A}\Phi^0+\Phi^0_\perp .
\end{array} \right.
\end{equation} 
For this purpose, we first introduce the matrix $S \in {\mathbb S}$ defined by
$$
\forall 1\le i \le N, \; S_{ii}=\alpha_{i}  \quad \mbox{and} \quad  \forall 1\le i \neq j \le N, \; S_{ij}=\frac{\langle f_j,\phi_i^0\rangle - \langle f_i,\phi_j^0\rangle}{\epsilon_j-\epsilon_i},
$$
and observe that $\widetilde F:=F-{\mathscr H}(S\Phi^0)   \in {\mathbb S}\Phi^0+\Phi^0_{\perp\!\!\!\perp}$.
Next, using  the fact that $\epsilon_1 < \cdots < \epsilon_N < \epsilon_{\rm F}^0$ and the positivity of the operator $K^0$, namely
$$
\forall \Psi=(\psi_i)_{1 \le i \le N} \in  {\cal V}, \quad \sum_{i,j=1}^N \langle K^0_{ij}\psi_j, \psi_i \rangle = 2\, D\left( \sum_{i=1}^N \phi_i^0\psi_i,\sum_{i=1}^N \phi_i^0\psi_i\right) \ge 0,
$$
we can see that the operator $\mathscr H$ is coercive on $\Phi^0_\perp$. Therefore, by Lax-Milgram lemma and (\ref{eq:ortagch}), there exists a unique $\widetilde \Psi \in  \Phi^0_\perp$ such that ${\mathscr H}\widetilde\Psi - \widetilde F\in {\mathbb S}\Phi^0+{\mathbb A}\Phi^0$. As $\widetilde F  \in {\mathbb S}\Phi^0+\Phi^0_{\perp\!\!\!\perp}$ and 
$$
\forall 1 \le i,k \le N, \quad \forall \Psi =(\psi_j)_{1 \le j \le N} \in {\cal V}, \quad \sum_{j=1}^N \langle K^0_{ij} \psi_j, \phi^0_k\rangle  = \sum_{j=1}^N \langle K^0_{kj} \psi_j, \phi^0_i\rangle, 
$$
we have in fact ${\mathscr H}\widetilde\Psi - \widetilde F\in {\mathbb S}\Phi^0$.
Setting $\Psi'=\widetilde\Psi+S\Phi^0$, we get ${\mathscr H}\Psi' - F \in {\mathbb S}\Phi^0$.  We now observe that 
${\mathscr H}$ is an isomorphism from ${\mathbb A}\Phi^0$ to ${\mathbb S}^0\Phi^0$. Decomposing ${\mathscr H}\Psi' - F$ as ${\mathscr H}\Psi' - F=-S'\Phi^0+ \eta \Phi^0$ with $S' \in  {\mathbb S}^0$ and $\eta \in {\mathbb D}$, and denoting by $A$ the unique element of ${\mathbb A}$ such that ${\mathscr H}(A\Phi^0)=S'\Phi^0$, and by $\Psi=\Psi'+A\Phi^0$, we finally obtain that the pair $(\Psi,\eta)$ is the unique solution to (\ref{eq:RS_MO_LS_2}) in ${\cal V}\times {\mathbb D}$.

The fact that $\Psi \in (H^2(\R^3))^N$ whenever $f \in (L^2(\R^3))^N$ follows from simple elliptic regularity arguments.

To prove~(\ref{eq:orthogonality}), we introduce, for $k \in \N^\ast$, 
$$
\chi_{i,k}(\beta) = \sum_{l=0}^k \beta^l \phi_{\beta W,i}^{(l)}, \quad 
\eta_{i,k}(\beta) = \sum_{l=0}^k \beta^l \epsilon_{\beta W,i}^{(l)}, 
$$
$$
H_k(\beta) = -\frac 12 \Delta + V + \left( \sum_{i=1}^N \chi_{i,k}(\beta)^2 \right) \star |\cdot|^{-1} + \beta W, \quad
f_{i,k}(\beta) = H_k(\beta) \chi_{i,k}(\beta) - \eta_{i,k}(\beta) \chi_{i,k}(\beta).
$$
By construction, $|\eta_{i,k}(\beta) - \epsilon_{\beta W,i} |  + \|\chi_{i,k}(\beta)-\phi_{\beta W,i}\|_{H^2}  + \|f_{i,k}(\beta)\|_{H^{-1}} \in {\cal O}(\beta^{k+1})$ when $\beta$ goes to zero, for all $1\le i \le N$. As the operator $H_k(\beta)$ is self-adjoint, it holds
$$
\langle  f_{i,k},\chi_{j,k}\rangle + \eta_{i,k} \langle  \chi_{i,k},\chi_{j,k}\rangle    = \langle H_k \chi_{i,k},\chi_{j,k}\rangle  = \langle H_k\chi_{j,k},\chi_{i,k}\rangle 
= \langle f_{j,k},\chi_{i,k}\rangle +  \eta_{j,k} \langle  \chi_{j,k},\chi_{i,k}\rangle
$$
(the variable $\beta$ has been omitted in the above equalities). As by assumption $\epsilon_1 < \epsilon_2 < \cdots < \epsilon_{N+1}$, we obtain
$$
\langle  \chi_{i,k}(\beta),\chi_{j,k}(\beta)\rangle = \frac{  \langle  f_{i,k}(\beta),\chi_{j,k}(\beta)\rangle  - \langle  f_{j,k}(\beta),\chi_{i,k}(\beta)\rangle   }{\eta_{j,k}(\beta)-\eta_{i,k}(\beta)}  \in {\cal O}(\beta^{k+1}),
$$
from which we deduce~(\ref{eq:orthogonality}).

\subsection{Proof of Lemma \ref{lem:projector_Pi}}

Let $T \in \Omega$ and $\gamma \in \cP_N$ such that $\|T-\gamma\|_{\gS_2} < 1/2$. As $\|T-\gamma\|\le \|T-\gamma\|_{\gS_2} < 1/2$, $\sigma(\gamma) = \left\{0,1\right\}$ and $\mbox{Rank}(\gamma)=N$, $\mbox{Rank}(\Pi(T))=\mbox{Rank}(\1_{[1/2,+\infty)}(T))=N$. Therefore $\Pi(T) \in \cP_N$. If, in addition, $T \in \gS_2$, then
\begin{eqnarray*}
\|T-\Pi(T)\|_{\gS_2}^2 &=& \|T-\gamma+\gamma-\Pi(T)\|_{\gS_2}^2 \\
&=& \|T-\gamma\|_{\gS_2}^2+\|\gamma-\Pi(T)\|_{\gS_2}^2+ 2 \tr\left((T-\gamma)(\gamma-\Pi(T))\right) \\
&=& \|T-\gamma\|_{\gS_2}^2+\|\gamma-\Pi(T)\|_{\gS_2}^2+ 2 \tr\left(T(\gamma-\Pi(T))\right) - \left( 2N-2\tr(\gamma\Pi(T)) \right) \\
&=& \|T-\gamma\|_{\gS_2}^2+ 2 \, \tr\left(T(\gamma-\Pi(T))\right) \\
&=& \|T-\gamma\|_{\gS_2}^2+ 2 \, \tr\left((T-1/2)(\gamma-\Pi(T))\right),
\end{eqnarray*} 
where we have used that both $\gamma$ and $\Pi(T)$ are in ${\cal P}_N$ and that for all $P \in \cP_N$, $\|P\|_{\gS_2}^2=\tr(P^2)=\tr(P)=N$. Let $A=T-1/2$ and $Q=\gamma-\Pi(T)$. The self-adjoint operator $A$ has exactly $N$ positive eigenvalues (counting multiplicities), and all its other eigenvalues are negative. Remarking that $\Pi(T) = \1_{[0,+\infty)}(A)$, and denoting $A^{++} = \Pi(T)A\Pi(T)$, $A^{--}=(1-\Pi(T))A(1-\Pi(T))$, $Q^{--}=\Pi(T)(\gamma-\Pi(T))\Pi(T)$, $Q^{++}=(1-\Pi(T))(\gamma-\Pi(T))(1-\Pi(T))$, and $g:= \mbox{\rm dist}(0,\sigma(A))$, we obtain, using the fact that $A^{++}\ge g$, $A^{--} \le -g$, $Q^{++}\ge 0$, $Q^{--}\le 0$ and $Q^2=Q^{++}-Q^{--}$,  
\begin{eqnarray*}
\tr\left((T-1/2)(\gamma-\Pi(T))\right)&=&\tr(A^{++}Q^{--}+A^{--}Q^{++}) \\
&\le& -g \tr(Q^{++}-Q^{--}) = -g \tr(Q^2) = -g \|\gamma-\Pi(T)\|_{\gS_2}^2.
\end{eqnarray*}
Hence, $\Pi(T)$ is the unique minimizer of (\ref{eq:projector_Pi}).

\subsection{Proof of Theorem \ref{th:Wigner_non_degenerate}}

Throughout the proof, $W$ is a fixed potential of  ${\cal C}'$, chosen once and for all, and $C$ denotes a constant depending on $W$ but not on $\beta$, which may vary from one line to another. For all $\beta \in \R$, we denote by $Q^{(n)}_W(\beta):=\widetilde \gamma^{(n)}_W(\beta)-\gamma_{\beta W}$.
When $|\beta|$ is small enough, $\widetilde \gamma^{(n)}_W(\beta) \in \cP_N$, so that we have
\begin{eqnarray*}
E^{\rm rHF}(\widetilde \gamma^{(n)}_W(\beta),\beta W) & \ge & \cE^{\rm rHF}(\beta W) \\ 
&=& E^{\rm rHF}(\gamma_{\beta W},\beta W) \\
&=& E^{\rm rHF}(\widetilde \gamma^{(n)}_W(\beta)- Q^{(n)}_W(\beta),\beta W) \\
&=& E^{\rm rHF}(\widetilde \gamma^{(n)}_W(\beta),\beta W)-\tr\left( H_{\beta W} Q^{(n)}_W(\beta)\right)-\frac 12 D\left(\rho_{Q^{(n)}_W(\beta)},\rho_{Q^{(n)}_W(\beta)}\right) \\
&=& E^{\rm rHF}(\widetilde \gamma^{(n)}_W(\beta),\beta W)-\tr\left( |H_{\beta W}-\epsilon_{\rm F}^0| (Q^{(n)}_W(\beta))^2\right) - \frac 12 \|\rho_{Q^{(n)}_W(\beta)}\|_\cC^2,
\end{eqnarray*}
where we have used Lemma~\ref{lem:upper_bound} below. We thus obtain that for $|\beta|$ small enough,
$$
0 \le E^{\rm rHF}(\widetilde \gamma^{(n)}_W(\beta),\beta W) -\cE^{\rm rHF}(\beta W) = \tr\left( |H_{\beta W}-\epsilon_{\rm F}^0| (Q^{(n)}_W(\beta))^2\right) + \frac 12 \|\rho_{Q^{(n)}_W(\beta)}\|_\cC^2.
$$
Using (\ref{eq:HW2}), (\ref{eq:norm_vW}) and the bound $\|v(1-\Delta)^{-1}\| \le C \|v\|_{\cC'}$ for all $v \in \cC'$, we obtain that for all $|\beta|$ small enough,
$$
|H_{\beta W}-\epsilon_{\rm F}^0| \le C(1-\Delta).
$$
Hence, for $|\beta|$ small enough,
\begin{eqnarray*}
0 \le E^{\rm rHF}(\widetilde \gamma^{(n)}_W(\beta),\beta W) -\cE^{\rm rHF}(\beta W) &\le &C \tr\left(  (1-\Delta) (Q^{(n)}_W(\beta))^2\right) + \frac 12 \|\rho_{Q^{(n)}_W(\beta)}\|_\cC^2 \\ &  \le &   C \|Q^{(n)}_W(\beta)\|_{\gS_{1,1}}^2,
\end{eqnarray*}
where we have used the continuity of the linear mapping $\gS_{1,1} \ni \gamma \mapsto \rho_\gamma \in {\cal C}$. The latter property is proved as followed: we infer from the Kato-Seiler-Simon inequality and the Sobolev inequality $\|V\|_{L^6(\R^3)} \le C_6 \|\nabla V\|_{L^2(\R^3)}= C_6\|V\|_{\cC'}$ that there exists a constant $C \in \R_+$ such that for all $\gamma \in \gS_{1,1}\cap {\cal S}(L^2(\R^3))$,
\begin{eqnarray}
\|\rho_\gamma\|_{\cC} &=& \sup_{V \in \cC'\setminus\left\{0\right\}} \frac{\tr(\gamma V)}{\|V\|_{\cC'}}
= \sup_{V \in \cC'\setminus\left\{0\right\}} \frac{\tr((1-\Delta)^{1/2}\gamma(1-\Delta)^{1/2} (1-\Delta)^{-1/2} V(1-\Delta)^{-1/2})}{\|V\|_{\cC'}} \nonumber \\ &\le& C \|\gamma\|_{\gS_{1,1}}. \label{eq:boundRho}
\end{eqnarray}
Denoting by 
$$
\gamma_{W,n}(\beta):= \gamma_0+\sum_{k=1}^n \beta^k \gamma_W^{(k)},
$$
we get 
$$
0 \le E^{\rm rHF}(\widetilde \gamma^{(n)}_W(\beta),\beta W) -\cE^{\rm rHF}(\beta W) \le C \left(  \| \widetilde\gamma_W^{(n)}(\beta) - \gamma_{W,n}(\beta)\|_{\gS_{1,1}}^2 + \| \gamma_{W,n}(\beta) - \gamma_{\beta W}\|_{\gS_{1,1}}^2 \right).
$$
We infer from the third statement of Theorem~\ref{th:CPrHF} that 
$$
\| \gamma_{W,n}(\beta) - \gamma_{\beta W}\|_{\gS_{1,1}} \le C \beta^{n+1}.
$$
We now observe that as $W$ is fixed, all the functions $\widetilde \phi_{W,i}(\beta)$ in (\ref{eq:gammaWn})-(\ref{eq:tgammaWn}) lay in a finite dimensional subspace of $H^1(\R^3)$ independent of $\beta$. Using the equivalence of norms in finite dimension, the fact that $\widetilde\gamma_W^{(n)}(\beta)=\Pi\left( \gamma_{W,n}(\beta)\right)$ and Lemma~\ref{lem:projector_Pi}, we obtain that 
$$
\| \widetilde\gamma_W^{(n)}(\beta) - \gamma_{W,n}(\beta)\|_{\gS_{1,1}} \le  C \| \widetilde\gamma_W^{(n)}(\beta) - \gamma_{W,n}(\beta)\|_{\gS_{2}} \le C \|   \gamma_{\beta W} - \gamma_{W,n}(\beta)\|_{\gS_{2}} \le C \beta^{n+1},
$$
which completes the proof of (\ref{eq:Wigner}).

\medskip

\begin{lemma} \label{lem:upper_bound} Let $H$ be a bounded below self-adjoint operator on a Hilbert space ${\cal H}$, $\epsilon_{\rm F} \in \R$, and $\gamma:=\1_{(-\infty,\epsilon_{\rm F}]}(H)$. Assume that $\tr(\gamma) < \infty$. Then, for all orthogonal projector $\gamma' \in {\cal S}({\cal H})$ such that $\tr(\gamma')=\tr(\gamma)$, it holds
$$
0 \le \tr(HQ) = \tr(|H-\epsilon_{\rm F}|Q^2),
$$
where $Q=\gamma'-\gamma$.
\end{lemma}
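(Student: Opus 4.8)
\emph{Proof plan.} The plan is to reduce to the case $\epsilon_{\rm F}=0$ and then split everything into $2\times2$ blocks along the orthogonal decomposition $\mathcal{H}=\mathrm{Ran}(\gamma)\oplus\mathrm{Ran}(1-\gamma)$. First I would note that, since $\tr(\gamma')=\tr(\gamma)<\infty$, the projectors $\gamma$ and $\gamma'$ are both of finite rank, so $Q=\gamma'-\gamma$ is a finite-rank self-adjoint operator with $\tr(Q)=0$; hence $\tr(HQ)=\tr((H-\epsilon_{\rm F})Q)$, and replacing $H$ by $H-\epsilon_{\rm F}$ we may assume $\epsilon_{\rm F}=0$, so that $\gamma=\1_{(-\infty,0]}(H)$. (All the traces appearing below are well-defined since, as in every application of this lemma, $\mathrm{Ran}(\gamma)$ and $\mathrm{Ran}(\gamma')$ lie in the form domain of $H$, and $Q$ has finite rank.)

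Next I would use that $\gamma'=\gamma+Q$ is an orthogonal projector: expanding $(\gamma+Q)^2=\gamma+Q$ and using $\gamma^2=\gamma$ yields $\gamma Q+Q\gamma+Q^2=Q$. Writing $Q=\left(\begin{smallmatrix}Q^{--}&Q^{-+}\\ Q^{+-}&Q^{++}\end{smallmatrix}\right)$ in block form and projecting this identity onto the two diagonal blocks, while using that $\gamma$ acts as the identity on $\mathrm{Ran}(\gamma)$ and as $0$ on $\mathrm{Ran}(1-\gamma)$, one obtains
\[
(Q^2)^{--}=-Q^{--}, \qquad (Q^2)^{++}=Q^{++}.
\]
In particular $Q^{--}\le 0$ and $Q^{++}\ge 0$, since they are, up to a sign, diagonal blocks of the nonnegative operator $Q^2$.

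To conclude, I would observe that $H$ and $|H|$ are block-diagonal for the same decomposition, with $H^{--}\le 0\le H^{++}$ (because $\gamma$ is the spectral projector of $H$ on $(-\infty,0]$), hence $|H|^{--}=-H^{--}$ and $|H|^{++}=H^{++}$. Therefore
\[
\tr(HQ)=\tr\!\left(H^{--}Q^{--}\right)+\tr\!\left(H^{++}Q^{++}\right),
\]
whereas
\[
\tr\!\left(|H|Q^2\right)=\tr\!\left(-H^{--}(Q^2)^{--}\right)+\tr\!\left(H^{++}(Q^2)^{++}\right)=\tr\!\left(H^{--}Q^{--}\right)+\tr\!\left(H^{++}Q^{++}\right),
\]
which establishes the identity $\tr(HQ)=\tr(|H|Q^2)$; and $\tr(|H|Q^2)=\tr\!\left(|H|^{1/2}Q^2|H|^{1/2}\right)\ge0$ because $Q^2\ge 0$. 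The main point requiring care is the justification of the trace and block manipulations when $H$ is unbounded; I expect to handle it simply by expanding $Q$ in an orthonormal eigenbasis — finitely many eigenvectors, all lying in the form domain of $H$ — which reduces every trace to an absolutely convergent finite sum and makes the block bookkeeping rigorous.
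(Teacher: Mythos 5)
Your proof is correct and is essentially the paper's own argument in different notation: the paper's identities $Q^2=(1-\gamma)Q(1-\gamma)-\gamma Q\gamma$ and $|H-\epsilon_{\rm F}|=(1-\gamma)(H-\epsilon_{\rm F})(1-\gamma)-\gamma(H-\epsilon_{\rm F})\gamma$ are exactly your $2\times 2$ block computation with respect to ${\rm Ran}(\gamma)\oplus{\rm Ran}(\gamma)^\perp$, combined with $\tr(Q)=0$ to shift $H$ by $\epsilon_{\rm F}$. The only (minor) divergence is at the end: rather than assuming ${\rm Ran}(\gamma')$ lies in the form domain of $H$, the paper notes that the terms involving $\gamma$ are finite (as ${\rm Ran}(\gamma)\subset D(H)$ and $H$ is bounded below) while the remaining nonnegative terms are allowed to equal $+\infty$, so the identity holds in $[0,+\infty]$ without any extra hypothesis.
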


\medskip

\begin{proof} We first observe that
$$
Q=\gamma'-\gamma = (\gamma')^2-\gamma^2=Q^2 + \gamma\gamma'+\gamma'\gamma-2\gamma,
$$
$$
H-\epsilon_{\rm F} =  (1-\gamma) (H-\epsilon_{\rm F}) (1-\gamma) + \gamma (H-\epsilon_{\rm F}) \gamma,
$$$$
|H-\epsilon_{\rm F}| =  (1-\gamma) (H-\epsilon_{\rm F}) (1-\gamma) - \gamma (H-\epsilon_{\rm F}) \gamma,
$$
$$
Q^2=(1-\gamma) Q (1-\gamma) - \gamma Q \gamma.
$$
As $\tr(Q)=0$, it follows that
\begin{eqnarray*}
\tr\left(HQ\right)&=& \tr((H-\epsilon_{\rm F})Q)=\tr\left((H-\epsilon_{\rm F})Q^2\right)+ 
\tr\left((H-\epsilon_{\rm F})(\gamma\gamma'+\gamma'\gamma-2\gamma)\right) \\
&=&\tr\left((H-\epsilon_{\rm F})Q^2\right)+ 
2\, \tr\left(\gamma (H-\epsilon_{\rm F}) \gamma Q\right) \\
&=& \tr\left((H-\epsilon_{\rm F})Q^2\right)+2\, \tr\left( \gamma (H-\epsilon_{\rm F}) \gamma Q \gamma \right) \\
&=& \tr\left((H-\epsilon_{\rm F})Q^2\right)- 2\, \tr\left(\gamma(H-\epsilon_{\rm F})\gamma Q^2\right) \\
&=& \tr\left(|H-\epsilon_{\rm F}|Q^2\right).
\end{eqnarray*}
Note that all the terms in the above series of equalities containing $\gamma$ are finite, since $\tr(\gamma)<\infty$ and $H$ is bounded below, while the other terms may be equal to $+\infty$.
\end{proof}

\subsection{Proof of Lemma~\ref{lem:localmap}}

\noindent
Using the fact that $L^2(\R^3)=\cH_{\rm o}\oplus\cH_{\rm u}$, any linear operator $T$ on $L^2(\R^3)$ can be represented by a $2\times 2$ block operator
$$
T = \left( \begin{array}{cc} T_{\rm oo} &  T_{\rm ou} \\
T_{\rm uo} &  T_{\rm uu} \end{array}\right),
$$ 
where $T_{\rm xy}$ is a linear operator from $\cH_{\rm y}$ to $\cH_{\rm x}$ (with $x,y \in \left\{{\rm o},{\rm u}\right\}$). In particular, the operators $P_0:=\1_{(-\infty,\epsilon_{\rm F}^0]}(H_0)$ (the orthogonal projector on $\cH_{\rm o}$), $P_0^\perp:=\1_{(\epsilon_{\rm F}^0,+\infty)}(H_0)$ and $H_0$ are block diagonal in this representation, and we have
$$
P_0 = \left( \begin{array}{cc} 1 &  0 \\ 0 &  0 \end{array}\right), \qquad 
P_0^\perp = \left( \begin{array}{cc} 0 &  0 \\ 0 &  1 \end{array}\right), \qquad
H_0 = \left( \begin{array}{cc} H_{\rm oo} &  0 \\ 0 &   H_{\rm uu} \end{array}\right),
$$
with $H_{\rm oo}-\epsilon_{\rm F}^0 \le 0$ and $H_{\rm uu}-\epsilon_{\rm F}^0 = H_0^{++}-\epsilon_{\rm F}^0  \ge g_+ > 0$.

\medskip

We consider the submanifold
$$
\cP_{N_{\rm o}}:= \left\{ P \in \cS(L^2(\R^3)) \; | \; P^2=P, \; \tr(P)=N_{\rm o}, \; \tr(-\Delta P) < \infty \right\} 
$$
of $\cS(L^2(\R^3))$ consisting of the rank-$N_{\rm o}$ orthogonal projectors on $L^2(\R^3)$ with range in $H^1(\R^3)$, and the Hilbert space
$$
\cZ=\left\{Z = \left( \begin{array}{cc} 0 & -  Z_{\rm uo}^\ast \\   Z_{\rm uo} & 0 \end{array}\right) \; | \;   (H_{\rm uu}-\epsilon_{\rm F}^0)^{1/2}Z_{\rm uo} \in {\cal B}(\cH_{\rm o},\cH_{\rm u})\right\},
$$
endowed with the inner product
$$
(Z,Z')_\cZ=\tr(Z_{\rm uo}^\ast(H_{\rm uu}-\epsilon_{\rm F}^0)Z'_{\rm uo}).
$$
We are going to use the following lemma, the proof of which is postponed until the end of the section.

\medskip

\begin{lemma}\label{lem:loc_map} There exists an open connected neighborhood $\widetilde\cO$ of $P_0$ in $\cP_{N_{\rm o}}$, and $\eta > 0$ such that the real analytic mapping 
\begin{eqnarray*}
B_\eta(\cZ) & \rightarrow & \widetilde\cO \\
Z & \mapsto & e^{Z} P_0 e^{-Z}
\end{eqnarray*}
is bijective.
\end{lemma}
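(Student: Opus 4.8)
The plan is to view $\Phi\colon Z\mapsto e^{Z}P_0e^{-Z}$ as the exponential parametrization of the manifold $\cP_{N_{\rm o}}$ of rank-$N_{\rm o}$ orthogonal projectors in the vicinity of $P_0$, and to obtain the claimed bijectivity from the real-analytic inverse function theorem once three things are checked: $\Phi$ is well defined and real analytic into $\gS_{1,1}$ with values in $\cP_{N_{\rm o}}$; its differential at $0$ is a Banach-space isomorphism onto the tangent space of $\cP_{N_{\rm o}}$ at $P_0$; and $\cP_{N_{\rm o}}$ carries a real-analytic local chart around $P_0$ modelled on $\cZ$.

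First I would establish well-definedness and analyticity. Since every $Z\in\cZ$ is skew-adjoint, $e^{Z}$ is unitary, so $\Phi(Z)=e^{Z}P_0e^{-Z}$ is an orthogonal projector, of rank $N_{\rm o}$ because $e^{Z}$ is invertible. For the remaining requirement $\tr(-\Delta\,\Phi(Z))<\infty$ one uses that $\cH_{\rm o}$ is finite-dimensional with an orthonormal basis of $H^2$-functions, that the weighted inner product defining $\cZ$ forces $|\nabla|Z$ (equivalently $(H_{\rm uu}-\epsilon_{\rm F}^0)^{1/2}Z_{\rm uo}$ together with its adjoint) to be bounded on $L^2(\R^3)$, and that $\Phi(Z)$ has finite rank; expanding $e^{\pm Z}$ in norm-convergent series then gives absolute convergence of $|\nabla|\Phi(Z)|\nabla|$ in $\gS_1$, uniformly for $\|Z\|_\cZ$ small, which simultaneously yields $\Phi(Z)\in\cP_{N_{\rm o}}$ and the real analyticity of $\Phi\colon B_\eta(\cZ)\to\gS_{1,1}$.

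Next I would compute $d\Phi(0)\cdot Z=[Z,P_0]$. In the block decomposition $L^2(\R^3)=\cH_{\rm o}\oplus\cH_{\rm u}$ in which $P_0$ has blocks $(1,0,0,0)$ and $Z$ has blocks $(0,-Z_{\rm uo}^\ast,Z_{\rm uo},0)$, the commutator $[Z,P_0]$ has blocks $(0,Z_{\rm uo}^\ast,Z_{\rm uo},0)$, which vanishes iff $Z_{\rm uo}=0$, i.e. iff $Z=0$; thus $d\Phi(0)$ is injective, and it is in fact a bounded bijection from $\cZ$ onto the tangent space $T_{P_0}\cP_{N_{\rm o}}=\{\delta\in\cS(L^2(\R^3))\mid \delta=P_0\delta P_0^\perp+P_0^\perp\delta P_0,\ \tr(-\Delta\delta)<\infty\}$, with bounded inverse, because the weighted norm of $\cZ$ is equivalent to the norm induced by $\gS_{1,1}$ on these off-diagonal, $H^1$-ranged tangent vectors. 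To apply the inverse function theorem I then fix the graph chart of $\cP_{N_{\rm o}}$: for $P$ close to $P_0$ the block $P_{\rm oo}$ is invertible, $\mathrm{Ran}(P)$ is the graph of $X_P:=P_{\rm uo}(P_{\rm oo})^{-1}\in\cB(\cH_{\rm o},\cH_{\rm u})$, and $P$ depends real-analytically on $X_P$ (conversely any small $X$ gives such a $P$); concretely, for $P=\Phi(Z)$ one finds $X_P=Z_{\rm uo}\,g(Z_{\rm uo}^\ast Z_{\rm uo})$ with $g(t)=\tan(\sqrt t)/\sqrt t$ analytic near $0$ and $g(0)=1$, so that $Z\mapsto X_{\Phi(Z)}$ has differential the identity at $0$. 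Reading $\Phi$ through this chart is therefore a real-analytic map between open subsets of Banach spaces with invertible differential at the origin; the analytic inverse function theorem gives a radius $\eta>0$ for which $\Phi|_{B_\eta(\cZ)}$ is injective and $\widetilde\cO:=\Phi(B_\eta(\cZ))$ is a relatively open subset of $\cP_{N_{\rm o}}$, necessarily connected as the continuous image of a ball. This is exactly the asserted bijection.

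The main obstacle is the functional-analytic bookkeeping in the first two steps rather than any abstract difficulty: one must check carefully that $\Phi$ really lands in $\cP_{N_{\rm o}}$, that it is analytic for the $\gS_{1,1}$-topology (so that the $H^1$-range structure is preserved and quantitatively controlled by the weighted norm of $\cZ$), and that with these topologies $d\Phi(0)$ is an \emph{isomorphism} onto $T_{P_0}\cP_{N_{\rm o}}$, not merely injective — this equivalence of norms being what makes the inverse function theorem applicable. The algebraic parts (the commutator computation, injectivity, and the $2\times 2$ off-diagonal exponential formula) are routine.
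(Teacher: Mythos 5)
Your proof is correct, but it takes a genuinely different route from the paper's. The paper does not build a chart by hand: it invokes the Banach--Lie group machinery of Chiumiento and Melgaard, namely that the group ${\cal U}$ of $L^2$-unitaries preserving $H^1(\R^3)$ acts real-analytically on $\cP_{N_{\rm o}}$ via $U\mapsto UP_0U^{-1}$, that this action is a submersion, and that the isotropy Lie algebra at $P_0$ consists exactly of the block-diagonal skew-adjoint $Z$ (those with $Z_{\rm uo}=0$); exponentiating the complementary off-diagonal subspace $\widetilde\cZ$ (weighted by $(1-\Delta)^{1/2}$) then parametrizes a neighborhood of $P_0$, and the lemma follows from the norm equivalence $c(1-\Delta)\le H_{\rm uu}-\epsilon_{\rm F}^0\le C(1-\Delta)$ on $\cH_{\rm u}$, which gives $\widetilde\cZ=\cZ$. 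You instead work with the explicit graph chart $P\mapsto X_P=P_{\rm uo}(P_{\rm oo})^{-1}$ and the exact transition formula $X_{\Phi(Z)}=Z_{\rm uo}\,\tan(\sqrt S)/\sqrt S$ with $S=Z_{\rm uo}^\ast Z_{\rm uo}$, which is indeed correct (one has $(e^Z)_{\rm oo}=\cos\sqrt S$ and $(e^Z)_{\rm uo}=Z_{\rm uo}\sin\sqrt S/\sqrt S$, hence $P_{\rm oo}=\cos^2\sqrt S$ and $P_{\rm uo}=Z_{\rm uo}\sin\sqrt S\,\cos\sqrt S/\sqrt S$), together with the inverse function theorem. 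What your route buys is self-containedness and an explicit local inverse $X\mapsto X\,\arctan(\sqrt{X^\ast X})/\sqrt{X^\ast X}$; what it costs is exactly the bookkeeping you flag: one must check that the graph chart is a homeomorphism between a relatively open neighborhood of $P_0$ in $\cP_{N_{\rm o}}$ and a ball in the weighted space of off-diagonal blocks, which is where the finite-dimensionality of $\cH_{\rm o}$, the $H^2$-regularity of its basis, and the same equivalence between the $(H_{\rm uu}-\epsilon_{\rm F}^0)^{1/2}$- and $(1-\Delta)^{1/2}$-weights enter. Both arguments are sound; yours is more elementary and explicit, while the paper's is shorter because the analytic-manifold structure of $\cP_{N_{\rm o}}$ is outsourced to the cited reference.
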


\medskip

By continuity, there exists a neighborhood $\cO$ of $0$ in $\cA$ such that 
$$
\forall A \in \cO, \quad \1_{(0,1]}\left(\Gamma(A)\right) \subset  \widetilde\cO.
$$
Let $A$ and $A'$ in $\cO$ be such that $\Gamma(A)=\Gamma(A')$. Then
$$
e^{L_{\rm uo}(A')} P_0 e^{-L_{\rm uo}(A')}=\1_{(0,1]}\left(\Gamma(A')\right) =\1_{(0,1]}\left(\Gamma(A)\right) =e^{L_{\rm uo}(A)} P_0 e^{-L_{\rm uo}(A)},
$$
and we infer from Lemma~\ref{lem:loc_map} that $L_{\rm uo}(A')=L_{\rm uo}(A)$. Therefore,
\begin{equation}\label{eq:LpfLpp}
e^{L_{\rm pf}(A')} (\gamma_0+L_{\rm pp}(A'))  e^{-L_{\rm pf}(A')} = e^{L_{\rm pf}(A)} (\gamma_0+L_{\rm pp}(A))  e^{-L_{\rm pf}(A)}.
\end{equation}
In particular (using again functional calculus),
$$
e^{L_{\rm pf}(A')} \gamma_0 e^{-L_{\rm pf}(A')} = e^{L_{\rm pf}(A)} \gamma_0 e^{-L_{\rm pf}(A)}.
$$
Using the finite dimensional analogue of Lemma~\ref{lem:loc_map} (a standard result on finite dimensional Grassmann manifolds), we obtain that, up to reducing the size of the neighborhood $\cO$ if necessary, $L_{\rm pf}(A')=L_{\rm pf}(A)$. Getting back to (\ref{eq:LpfLpp}), we see that $L_{\rm pp}(A')=L_{\rm pp}(A)$. Therefore, $A=A'$, which proves the injectivity of the mapping (\ref{eq:localchart}).

\medskip

We now consider a neighborhood $\cO'$ of $\gamma_0$ in $\gS_{1,1}$ in such that $\Gamma(\cO)\subset \cO'$ and $\1_{(0,1]}\left(\cK_{N_{\rm f},N_{\rm p}} \cap \cO' \right) \subset  \widetilde\cO$. Let $\gamma \in \cK_{N_{\rm f},N_{\rm p}} \cap \cO'$. By Lemma~\ref{lem:loc_map}, there exists a unique $Z \in B_\eta(\cZ)$ such that $\1_{(0,1]}(\gamma) = e^{Z} P_0 e^{-Z}$, and by the classical finite-dimensional version of the latter lemma, there exists a unique $A_{\rm pf} \in \cA_{\rm pf}$ in the vicinity of $0$ such that   $\1_{\left\{1\right\}}(\gamma) = e^{Z} e^{L_{\rm pf}(0,0,A_{\rm pf},0)} \1_{\left\{1\right\}}(\gamma_0) e^{-L_{\rm pf}(0,0,A_{\rm pf},0)} e^{-Z}$. It is then easily seen that the operator 
$$
e^{-Z}e^{-L_{\rm pf}(0,0,A_{\rm pf},0)} \gamma e^{L_{\rm pf}(0,0,A_{\rm pf},0)} e^{Z}
$$
is of the form $\gamma_0+L_{\rm pp}(0,0,0,A_{\rm pp})$ for some $A_{\rm pp} \in \cA_{\rm pp}$, which is close to $0$ if $\cO'$ is small enough. Decomposing $Z_{\rm uo}$ as $(A_{\rm uf},A_{\rm up})$ and setting $A=(A_{\rm uf},A_{\rm up},A_{\rm pf},A_{\rm pp})$, we obtain that $A$ is the unique element of $\cA$ in the vicinity of $0$ such that $\gamma = \Gamma(A)$.

\medskip

\begin{proof}[Proof of Lemma~\ref{lem:loc_map}] Let
$$
{\cal U} := \left\{ U \in \mbox{GL}(H^1(\R^3)) \; | \; \|U\phi\|_{L^2}=\|\phi\|_{L^2}, \; \forall \phi \in H^1(\R^3)\right\}
$$
where $\mbox{GL}(H^1(\R^3))$ is the group of the inversible bounded operators on $H^1(\R^3)$.
In view of~\cite[Theorem 4.8]{ChiMel12}, the mapping 
\begin{eqnarray*}
{\cal U} & \rightarrow & \cP_{N_{\rm o}}  \\
U & \mapsto & U P_0 U^{-1}
\end{eqnarray*}
is a real analytic submersion. Besides \cite[Lemma 2.5]{ChiMel12}, $\cal U$ is a Banach-Lie group with Lie algebra
$$
{\mathscr U}=\left\{ Z \in \cB(L^2(\R^3)) \; | \; Z^\ast=-Z, \; Z(H^1(\R^3)) \subset H^1(\R^3)  \right\}
$$
(with the slight abuse of notation consisting of denoting by $Z$ the restriction to $H^1(\R^3)$ of an operator $Z \in \cB(L^2(\R^3))$ such that $Z(H^1(\R^3)) \subset H^1(\R^3)$), and  \cite[Remark 4.7]{ChiMel12}, the isotropy group of the action of $\cal U$ on $\cP_{N_{\rm o}} $ is the Banach-Lie group with Lie algebra 
$$
{\mathscr U}_0=\left\{ Z \in \cB(L^2(\R^3)) \; | \; Z^\ast=-Z, \; Z(H^1(\R^3)) \subset H^1(\R^3), \; Z_{\rm uo}=0  \right\}.
$$
Hence, denoting by
$$
\widetilde \cZ= \left\{Z = \left( \begin{array}{cc} 0 & -  Z_{\rm uo}^\ast \\   Z_{\rm uo} & 0 \end{array}\right) \; | \;    (1-\Delta)^{1/2} Z_{\rm uo} \in {\cal B}(\cH_{\rm o},\cH_{\rm u})\right\},
$$
there exists an open connected neighborhood $\widetilde\cO$ of $P_0$ in $\cP_{N_{\rm o}}$, and $\widetilde \eta > 0$ such that the real analytic mapping 
\begin{eqnarray*}
B_{\widetilde\eta}(\widetilde \cZ) & \rightarrow & \widetilde\cO \\
Z & \mapsto & e^{Z} P_0 e^{-Z}
\end{eqnarray*}
is bijective. As there exists $0 < c < C < \infty$ such that $c (1-\Delta) \le (H_{\rm uu}-\epsilon_{\rm F}^0) \le C (1-\Delta)$ on ${\cal H}_{\rm u}$, we have $\widetilde \cZ=\cZ$, which concludes the proof of the lemma.
\end{proof}

\subsection{Proof of Lemma~\ref{lem:preIFT}}

\noindent
In view of (\ref{eq:expGA}), the density matrix $\Gamma(A)$ can be expanded as
\begin{equation} \label{eq:expandGamma}
 \Gamma(A)=\gamma_0+\gamma_1(A)+\gamma_2(A,A)+ O(\|A\|_{\mathcal V}^3),  
\end{equation}
with
\begin{eqnarray*}
\gamma_1(A)&=& \langle \Gamma'(0),A \rangle = [L_{\rm uo}(A)+L_{\rm pf}(A),\gamma_0]+L_{\rm pp}(A)  \\
\gamma_2(A,A)&=&  \frac 12 [\Gamma''(0)](A,A) \\
&=&  \frac 12 \left[ L_{\rm uo}(A),\left[ L_{\rm uo}(A),\gamma_0\right]\right] +   \left[ L_{\rm uo}(A),\left[ L_{\rm pf}(A),\gamma_0\right]\right] +    \frac 12 \left[ L_{\rm pf}(A),\left[ L_{\rm pf}(A),\gamma_0\right]\right] \\ && + \left[ L_{\rm uo}(A),L_{\rm pp}(A)\right]  + \left[ L_{\rm pf}(A),L_{\rm pp}(A)\right]   \\
&=&\frac 12 \left\{ L_{\rm uo}(A)^2+L_{\rm pf}(A)^2,\gamma_0 \right\} + [L_{\rm uo}(A)+L_{\rm pf}(A),L_{\rm pp}(A)] \\
&& + L_{\rm uo}(A)L_{\rm pf}(A)\gamma_0+\gamma_0L_{\rm pf}(A)L_{\rm uo}(A)- (L_{\rm uo}(A)+L_{\rm pf}(A))\gamma_0(L_{\rm uo}(A)+L_{\rm pf}(A)),
\end{eqnarray*}
where $\left\{ X,Y \right\}=XY+YX$ denotes the anticommutator of $X$ and $Y$.
As in Section~\ref{sec:bounded}, we denote by  $F(A,0)=\nabla_A E(A,0)$ and $\Theta=\frac 12 F'_A(0,0)|_{{\cal A} \times \left\{0\right\}}$. It follows from (\ref{eq:expandGamma}) and the analyticity properties of the mapping $A \mapsto E(A,0)$ that for all $(A,A') \in {\cal A} \times {\cal A}$,
$$
E(A,0)= E_0+\tr(H_0\gamma_1(A))+\tr(H_0\gamma_2(A,A))+\frac{1}{2}D(\rho_{\gamma_1(A)},\rho_{\gamma_1(A)})+ O(\|A\|_{\mathcal A}^3),
$$
and
$$
\langle\Theta(A),A\rangle=\tr(H_0\gamma_2(A,A))+\frac 12 D(\rho_{\gamma_1(A)},\rho_{\gamma_1(A)}).
$$
Besides, a simple calculation leads to 
\begin{eqnarray*}
\tr(H_0\gamma_2(A,A))&=&\tr\left(A_{\rm uf}^\ast\left(H_0^{++}-\epsilon_{\rm F}^0\right) A_{\rm uf}\right)-\tr\left(A_{\rm uf}\left(H_0^{--}-\epsilon_{\rm F}^0\right) A_{\rm uf}^\ast\right) \\
&& + \tr\left(\left(H_0^{++}-\epsilon_{\rm F}^0\right)A_{\rm up}\Lambda A_{\rm up}^\ast\right) - \tr\left(\left(H_0^{--}-\epsilon_{\rm F}^0\right)A_{\rm pf}^\ast(1-\Lambda) A_{\rm pf}\right). 
\end{eqnarray*}
Hence,
\begin{equation} \label{eq:ThetaAA'}
\langle\Theta(A),A' \rangle=a(A,A')+\frac 12 D(\rho_{\gamma_1(A)},\rho_{\gamma_1(A')}),
\end{equation}
where
\begin{eqnarray*}
a(A,A')&=&\tr\left(A_{\rm uf}^\ast\left(H_0^{++}-\epsilon_{\rm F}^0\right) A_{\rm uf}'\right)-\tr\left(A_{\rm uf}'\left(H_0^{--}-\epsilon_{\rm F}^0\right) A_{\rm uf}^\ast\right) \\
&& + \tr\left(\left(H_0^{++}-\epsilon_{\rm F}^0\right)A_{\rm up}'\Lambda A_{\rm up}^\ast\right) - \tr\left(\left(H_0^{--}-\epsilon_{\rm F}^0\right)A_{\rm pf}^\ast(1-\Lambda) A_{\rm pf}'\right).
\end{eqnarray*}
For all $A$ and $A'$ in ${\cal A}$, we have
$$
|a(A,A')|\le \left(1+\frac{\epsilon_{\rm F}^0-\epsilon_1}{g_+}\right)\|A_{\rm uf}\|_{{\cal A}_{\rm uf}}\|A_{\rm uf}'\|_{{\cal A}_{\rm uf}}
+ \|A_{\rm up}\|_{{\cal A}_{up}}\|A_{\rm up}'\|_{{\cal A}_{\rm up}} + (\epsilon_{\rm F}^0-\epsilon_1) \|A_{\rm pf}\|_{{\cal A}_{\rm pf}}\|A_{\rm pf}'\|_{{\cal A}_{\rm pf}}.
$$
We thus deduce from (\ref{eq:boundRho}) that there exists a constant $C' \in \R_+$ such that for all $A \in {\cal A}$,
$$
\|\rho_{\gamma_1(A)}\|_{\cC} \le C \|\gamma_1(A)\|_{\gS_{1,1}} \le C' \|A\|_{\cal A}.
$$
The bilinear form in (\ref{eq:ThetaAA'}) is therefore continuous on the Hilbert space ${\cal A}$. It is also positive since for all $A \in {\cal A}$,
\begin{equation}\label{eq:thetaAA}
\langle\Theta(A),A\rangle \ge  \|A_{\rm uf}\|_{{\cal A}_{\rm uf}}^2+\lambda_- \|A_{\rm up}\|_{{\cal A}_{\rm up}}^2+(1-\lambda_+) g_-\|A_{\rm pf}\|_{{\cal A}_{\rm pf}}^2 + \frac 12 \|\rho_{\gamma_1(A)}\|_{\cC}^2,
\end{equation}
where $0 < \lambda_- \le \lambda_+ < 1$ are the lowest and highest eigenvalues of $\Lambda$. To prove that it is in fact coercive, we proceed by contradiction and assume that there exists a normalized sequence $(A_k)_{k \in \N}$ in ${\cal A}$ such that $\lim_{k \to \infty}\langle\Theta(A_k),A_k\rangle=0$. We infer from (\ref{eq:thetaAA}) that $\|(A_k)_{\rm uf}\|_{{\cal A}_{\rm uf}}$, $\|(A_k)_{\rm up}\|_{{\cal A}_{\rm up}}$, $\|(A_k)_{\rm pf}\|_{{\cal A}_{\rm pf}}$ and $\|\rho_{\gamma_1(A_k)}\|_{\cC}$ converge to zero when $k$ goes to infinity. Denoting by $(M_k)_{ij} := (\phi_{N_{\rm f}+i}^0, (A_k)_{\rm pp}\phi_{N_{\rm f}+j}^0)_{L^2}$, this implies that $\|M_k\|_2= \|(A_k)_{\rm pp}\|_{{\cal A}_{\rm pp}}\rightarrow 1$ and
$$
\left\| \sum_{i,j=1}^{N_{\rm p}} (M_k)_{ij} \phi_{N_{\rm f}+i}^0\phi_{N_{\rm f}+j}^0 \right\|_\cC \rightarrow 0.
$$
Extracting from $(M_k)_{k \in \N}$ a subsequence $(M_{k_n})_{n \in \N}$ converging to some $M \in \R_{\rm S}^{N_{\rm p}\times N_{\rm p}}$, and letting $n$ go to infinity, we obtain 
$$
\|M\|_2=1 \qquad \mbox{and} \qquad \sum_{i,j=1}^{N_{\rm p}} M_{ij} \phi_{N_{\rm f}+i}^0\phi_{N_{\rm f}+j}^0 =0.
$$
This contradicts (\ref{eq:hyp_ND}). The bilinear form (\ref{eq:ThetaAA'}) is therefore coercive on ${\cal A}$. As it is also continuous, we obtain that the linear map $\Theta$ is a bicontinuous coercive isomorphism from ${\cal A}$ to ${\cal A}'$.

\subsection{Proof of Lemma~\ref{lem:min_KNN=min_K}}

We can prove the existence of a minimizer $\widetilde\gamma_W$ to (\ref{eq:min_rHF}) reasoning as in the proof of the first statement of Theorem~\ref{th:CPrHF} (non-degenerate case) up to (\ref{eq:norm_vW}). Only the final argument is slightly different. In the degenerate case, we deduce that $H_W$ has at least $N$ negative eigenvalues from the fact that $\mbox{Rank}(\1_{(-\infty,\alpha_5]}(H_W))=N_{\rm o} \ge N$. 

\medskip

We now have to prove that $\widetilde \gamma_W=\gamma_W$, where $\gamma_W$ is defined by (\ref{eq:def_gamma_W}). We know that $\gamma_W$ is the unique local minimizer of (\ref{eq:min_rHF_KNN}) in the neighborhood of $\gamma_0$. Decomposing the space $L^2(\R^3)$ as 
\begin{equation}\label{eq:decompL2W}
L^2(\R^3) = \cH_{\rm f}^W \oplus \cH_{\rm p}^W \oplus \cH_{\rm u}^W, 
\end{equation}
where $\cH_{\rm f}^W=\mbox{Ran}(\1_{\left\{1\right\}}(\gamma_W))$, $\cH_{\rm p}^W=\mbox{Ran}(\1_{(0,1)}(\gamma_W))$, and $\cH_{\rm u}^W=\mbox{Ran}(\1_{\left\{0\right\}}(\gamma_W))$, we can parametrize $\cK_{N_{\rm f},N_{\rm p}}$ in the neighborhood of  $\gamma_W$ using the local map
$$
\Gamma^W(A) := \exp\left(L_{\rm uo}^W(A)\right) \; \exp\left(L_{\rm pf}^W(A)\right) \; \left( \gamma_W+L_{\rm pp}^W(A)\right) \; \exp\left(-L_{\rm pf}^W(A)\right) \; \exp\left(-L_{\rm uo}^W(A)\right),
$$
where
$$
L_{\rm uo}^W(A) := \left[\begin{matrix}
    0 & 0 & -A_{\rm uf}^\ast  \\ \\   
    0  &  0   &   -A_{\rm up}^\ast  \\ \\
    A_{\rm uf}   &  A_{\rm up}   &  0  
   \end{matrix}\right] , \quad L_{\rm pf}^W(A) := \left[\begin{matrix}
    0 & -A_{\rm pf}^\ast & 0  \\ \\   
    A_{\rm pf}  &  0   &  0 \\ \\
    0   &  0   &  0  
   \end{matrix}\right] , \quad L_{\rm pp}^W(A) := \left[\begin{matrix}
    0 & 0 & 0  \\ \\   
    0  &  A_{\rm pp}   &  0 \\ \\
    0   &  0   &  0  
   \end{matrix}\right],
$$
the block decomposition of the operators $L_{\rm xy}^W(A)$ being done with respect to the decomposition (\ref{eq:decompL2W}) of the space $L^2(\R^3)$. As $A=0$ is the unique minimizer of the functional $A \mapsto \cE^{\rm rHF}(\Gamma^W(A),W)$ in the neighborhood of $0$, we obtain that the block decomposition of the operator $\widetilde H= \dps -\frac 12 \Delta + V + \rho_{\gamma_W} \star |\cdot|^{-1}+ W$ reads
$$
\widetilde H  := \left[\begin{matrix}
  \widetilde H_{\rm ff}    & 0 &  0 \\ \\   
    0  &  \widetilde H_{\rm pp}     &   0 \\ \\
    0  &  0   &  \widetilde H_{\rm uu}    
   \end{matrix}\right] 
$$
(first-order optimality conditions), and that there exists $\epsilon \in \R$ such that
$$
\widetilde H_{\rm ff} -\epsilon \le 0, \quad \widetilde H_{\rm pp} -\epsilon = 0, \quad \widetilde H_{\rm uu} -\epsilon \ge 0 
$$
(second-order optimality conditions). These conditions also read
\begin{equation}\label{eq:EEdeg}
\gamma_W = \1_{(-\infty,\epsilon)}(\widetilde H)+ \delta_W,
\end{equation}
with $0 \le \delta_W \le 1$, $\mbox{Ran}(\delta_W) \subset \mbox{Ker}(\widetilde H-\epsilon)$, $\tr(\gamma_W)=N$, which are precisely  the Euler conditions for problem (\ref{eq:min_rHF}). Thus, $\gamma_W$ is a minimizer to (\ref{eq:min_rHF}).

\medskip

It follows that all the minimizers $\widetilde \gamma_W$ of (\ref{eq:min_rHF}) have density $\rho_W:=\rho_{\gamma_W}$ and are of the form 
$$
\widetilde \gamma_W = \1_{(-\infty,\epsilon)}(\widetilde H)+ \widetilde \delta_W,
$$
with $0 \le \widetilde \delta_W \le 1$, $\mbox{Ran}(\widetilde \delta_W) \subset \mbox{Ker}(\widetilde H-\epsilon)$, $\tr(\widetilde \gamma_W)=N$. As the optimization problem (\ref{eq:min_rHF}) is convex, the set of its minimizers is convex. Therefore, for any $t \in [0,1]$
$$
(1-t) \gamma_W + t  \widetilde \gamma_W = \1_{(-\infty,\epsilon)}(\widetilde H)+ (1-t) \delta_W+ t \widetilde \delta_W,
$$
is a global minimizer of (\ref{eq:min_rHF}), hence of  (\ref{eq:min_rHF_KNN}) for $t$ small enough. As we know that $\gamma_W$ is the unique minimizer to (\ref{eq:min_rHF_KNN})  in the vicinity of $\gamma_0$, we obtain that $\widetilde \delta_W=\delta_W$, which proves that $\gamma_W$ is the unique minimizer of  (\ref{eq:min_rHF}).

\subsection{Proof of~Theorem~\ref{th:rHF}}

The first statement of~Theorem~\ref{th:rHF} has been proved in the previous section. The second statement is a consequence of (\ref{eq:EEdeg}) and of the fact that $\gamma_W \in \cK_{N_{\rm f},N_{\rm p}}$. The third statement follows from the real analyticity of the mappings $B_\eta(\cC') \ni W \mapsto \widetilde A(W) \in {\cal A}$, ${\cal A} \ni A \mapsto \Gamma(A) \in \gS_{1,1}$, and $\gS_{1,1} \times \cC' \ni (\gamma,W) \mapsto E^{\rm rHF}(\gamma,W) \in \R$ and the chain rule.
 
\medskip

\noindent
It follows from (\ref{eq:expGA}) that for all $A\in\cal O$ and all $W\in \cal C'$,
\begin{eqnarray*}
 E(A,W)&=&E_0+\int_{\R^3}\rho_{\gamma_0}W+ \langle \Theta(A),A \rangle +\int_{\R^3}\rho_{\gamma_1(A)}W+\sum_{l\geq 3} \tr(H_0\gamma_l(A,\cdots,A)) \\
       &+& \frac 12 \sum_{\underset{l,l'\geq 1}{l+l'\geq 3}}D(\rho_{\gamma_{l}(A,\cdots,A)},\rho_{\gamma_{l'}(A,\cdots,A)})+\sum_{l\geq 2}\int_{\R^3}\rho_{\gamma_{l}(A,\cdots,A)}W.
\end{eqnarray*}
As a consequence, we obtain that that for any $A'\in\cal O$, 
\begin{eqnarray}\label{eq:expenergy}
(\nabla_A E(A,W),A' )_{\cal A} &=& 2\langle \Theta(A),A' \rangle +\int_{\R^3}\rho_{\gamma_1(A')}W+\sum_{l\geq 3} \tr(H_0\Gamma_l(A,A'))\nonumber \\ 
       && + \sum_{\underset{l\geq 1,\; l'\geq 1}{l+l'\geq 3}}D(\rho_{\gamma_{l}(A,\cdots, A)},\rho_{\Gamma_{l'}(A,A')})+\sum_{l\geq 2}\int_{\R^3}\rho_{\Gamma_{l}(A,A')}W,
\end{eqnarray}
with where $\Gamma_1(A,A')=\gamma_1(A')$ is in fact independent of $A$, and where for all $l \ge 2$, $\Gamma_{l}(A,A')=\sum_{i=1}^{l}\gamma_l(\tau_{(i,l)}(A,\cdots,A,A'))$ (recall that $\tau_{(i,l)}$ denotes the transposition swapping the $i^{\rm th}$ and $l^{\rm th}$ elements, and that, by convention $\tau_{l,l}$ is the identity).  By definition of $A_W(\beta)$, we have 
\begin{equation}\label{eq:zerograd}
\forall A' \in {\cal A}, \quad (\nabla_A E(A_W(\beta),\beta W),A' )_{\cal A}=0.
\end{equation}
Using  (\ref{eq:expenergy}) and observing that
\begin{eqnarray}\label{eq:expgamma}
 \Gamma_l(A_W(\beta),A')=\sum_{k\geq l-1}\beta^k\sum_{\underset{|\alpha|_1=k, |\alpha|_\infty<k}{\alpha \in (\N^\ast)^{l-1}} }\sum_{i=1}^{l}\gamma_l(\tau_{(i,l)}(A_W^{(\alpha_1)},\cdots,A_W^{(\alpha_{l-1})},A')),
\end{eqnarray}
we can rewrite (\ref{eq:zerograd}) by collecting the terms of order $\beta^k$ as
$$
\forall k \in \N^\ast, \quad \forall A' \in {\cal A}, \quad \langle 2\Theta(A_W^{(k)})+B_W^{(k)},A' \rangle = 0,
$$ 
where $B_W^{(k)}$ is given by (\ref{eq:formulaB1}) for $k=1$ and by (\ref{eq:formulaBk}) for the general case $k \ge 2$. Thus (\ref{eq:linsysA}) is proved. 

\medskip 

\noindent
Using (\ref{eq:gamma1A}) and (\ref{eq:gammak}), we can rewrite (\ref{eq:EWk}) for $k=2n+\epsilon$ ($n \in \N$, $\epsilon\in\left\{0,1\right\}$) as
\begin{eqnarray*}
{\cal E}_W^{(2n+\epsilon)} &=&   \tr(H_0\gamma_1(A_W^{(2n+\epsilon)})) + \sum_{2\leq l\leq 2n+\epsilon} \;\;\;
    \sum_{\alpha \in (\N^\ast)^l \, | \, |\alpha|_1=2n+\epsilon} \tr(H_0 \gamma_{W,l}^\alpha) \nonumber \\
&+& \frac{1}{2} \sum_{\underset{l,l'\geq 1}{2 \le l+l'\le 2n+\epsilon}} \;\;\; \sum_{{\alpha \in (\N^\ast)^l, \, \alpha' \in (\N^\ast)^{l'} \, | \, |\alpha|_1+|\alpha'|_1 = 2n+\epsilon}} D(\rho_{\gamma_{W,l}^{\alpha}},\rho_{\gamma_{W,l'}^{\alpha'}}) \nonumber \\
    & +& \sum_{2\leq l\leq 2n+\epsilon-1}\;\;\;  \sum_{\alpha \in (\N^\ast)^l \, | \, |\alpha|_1 = 2n+\epsilon-1} \int_{\R^3}\rho_{\gamma_{W,l}^{\alpha}} W
\\ &=& 
\sum_{2\leq l\leq 2n+\epsilon} \;\;\;
    \sum_{\alpha \in (\N^\ast)^l \, | \, |\alpha|_1=2n+\epsilon,\, |\alpha|_\infty\leq n} \tr(H_0 \gamma_{W,l}^\alpha) \nonumber \\
&+& \frac{1}{2} \sum_{\underset{l,l'\geq 1}{2 \le l+l'\le 2n+\epsilon}} \;\;\; \sum_{\underset{\max(|\alpha|_\infty,\,|\alpha'|_\infty)\leq n}{\alpha \in (\N^\ast)^l, \, \alpha' \in (\N^\ast)^{l'} \, | \, |\alpha|_1+|\alpha'|_1 = 2n+\epsilon}} D(\rho_{\gamma_{W,l}^{\alpha}},\rho_{\gamma_{W,l'}^{\alpha'}}) \nonumber \\
    & +& \sum_{2\leq l\leq 2n+\epsilon-1}\;\;\;  \sum_{\alpha \in (\N^\ast)^l \, | \, |\alpha|_1 = 2n+\epsilon-1,\, |\alpha|_\infty\leq n} \int_{\R^3}\rho_{\gamma_{W,l}^{\alpha}} W+J_{2n+\epsilon}(A_W^{(1)},...,A_W^{(2n+\epsilon-1)}),
\end{eqnarray*}
where 
\begin{eqnarray*}
 J_{2n+\epsilon}(A_W^{(1)},...,A_W^{(2n+\epsilon-1)})=& &\displaystyle\sum_{2\leq l\leq 2n+\epsilon}\,\,\,\sum_{\underset{|\alpha|_\infty> n}{\alpha \in (\N^\ast)^l \, | \,|\alpha|_1=2n+\epsilon}} \tr(H_0\gamma_l^{(\alpha)})\\ 
&+&\frac{1}{2}\displaystyle\sum_{\underset{l_1,l_2\geq 1}{2\leq l_1+l_2\leq 2n+\epsilon}}\,\,\, \sum_{\underset{\max(|\alpha|_\infty,|\alpha'|_\infty)> n}{\alpha \in (\N^\ast)^l,\,\alpha' \in (\N^\ast)^{l'} \, | \, |\alpha|_1+|\alpha'|_1=2n+\epsilon}}D(\rho_{\gamma_{l_1}^{(\alpha)}},\rho_{\gamma_{l_2}^{(\alpha')}})\\ 
&+&\displaystyle \sum_{1\leq l\leq 2n+\epsilon-1}\,\,\, \sum_{\underset{|\alpha|_\infty> n}{\alpha \in (\N^\ast)^l \, | \,|\alpha|_1=2n+\epsilon-1}}\int_{\R^3}\rho_{\gamma_l^{(\alpha)}}W.
\end{eqnarray*} 
As
\begin{eqnarray*}
  J_{2n+\epsilon}(A_W^{(1)},...,A_W^{(2n+\epsilon-1)})=\sum_{k=n}^{2n+\epsilon-1} \langle 2\Theta(A_W^{(2n+\epsilon-k)})+B_W^{(2n+\epsilon-k)},A_W^{(k)}\rangle=0,
\end{eqnarray*} 
the proof of the fifth statement is complete. Lastly, the sixth statement can be established reasoning as in the proof of Theorem~\ref{th:Wigner_non_degenerate}.

\bigskip

\noindent
{\bf Ackowledgments.} This work was completed while the authors were core participants to the IPAM program {\it Materials for a sustainable energy future}. Financial support from IPAM and the ANR grant Manif is gracefully acknowledged.

\end{document}